\newtheorem{thm}{Theorem}
\newtheorem{proposition}{Proposition}
\newtheorem{corollary}{Corollary}
\newtheorem{lemma}{Lemma}
\newtheorem{definition}{Definition}
\newtheorem{remark}{Remark}
\newtheorem{example}{Example}
\newtheorem{note}{Note}
\newcommand\norm[1]{\left\lVert#1\right\rVert}
\newcommand\E{\mathcal{E}}
\newcommand\EE{\mathbb{E}}
\newcommand\V{\mathbb{V}}
\newcommand\R{\mathbb{R}}
\definecolor{LG}{rgb}{0.13, 0.55, 0.13}
\title{A Geometric Method for Passivation and Cooperative Control of Equilibrium-Independent Passive-Short Systems}
\author{Miel Sharf,~\IEEEmembership{Graduate~Student~Member,~IEEE}, Anoop Jain,~\IEEEmembership{Member,~IEEE}, and Daniel Zelazo,~\IEEEmembership{Senior~Member,~IEEE}
\thanks{M. Sharf is with the Division of Decision and Control Systems, KTH Royal Institute of Technology, Stockholm, Sweden. {\tt\small sharf@kth.se}.   A. Jain is with the Department of Electrical Engineering, Indian Institute of 
    Technology, Jodhpur, India. {\tt\small anoopj@iitj.ac.in}. D. Zelazo is with the Faculty of Aerospace Engineering, Israel Institute of Technology, Haifa, Israel {\tt\small dzelazo@technion.ac.il}. This work 
    was supported in part at the Technion by Lady Davis Fellowship, and the 
    German-Israeli Foundation for Scientific Research and Development.}}
\begin{document}
\maketitle
		
		


\begin{abstract}
Equilibrium-independent passive-short (EIPS) systems are a class of systems that satisfy a passivity-like dissipation inequality with respect to any forced equilibria with non-positive passivity indices. This paper presents a geometric approach for finding a passivizing transformation for such systems, relying on their steady-state input-output relation and the notion of projective quadratic inequalities (PQIs). We show that PQIs arise naturally from passivity-shortage characteristics of an EIPS system, and the set of their solutions can be explicitly expressed. We leverage this connection to build an input-output mapping that transforms the steady-state input-output relation to a monotone relation, and show that the same mapping passivizes the EIPS system. We show that the proposed transformation can be implemented through a combination of feedback, feed-through, post- and pre-multiplication gains. 
Furthermore, we consider an application of the presented passivation scheme for the analysis of networks comprised of EIPS systems. 
Numerous examples are provided to illustrate the theoretical findings.
\end{abstract}




\vspace{-5pt}
\section{Introduction}
Cooperative control has been extensively studied in the last few years, as it displays both interesting theoretical questions, as well as a wide range of engineering applications \cite{OlfatiSaber2007,Oh2015,Hatanaka2015}. One widespread tool in cooperative control is the notion of passivity \cite{DePersis2018,Antsaklis2013,Hatanaka2015}. Passivity theory was first applied to multi-agent systems in \cite{Arcak2007}, where it was used to solve group coordination problems. Since then, different variants of passivity were used for solving various problems in robotics \cite{Chopra2006}, synchronization 
\cite{Stan2007}, and distributed optimization \cite{Tang2016}.

The classical notion of passivity, as appears in \cite{Khalil2002}, is defined with respect to equilibrium at the origin. Some authors also define shifted passivity, which is defined with respect to an input-output (I/O) pair of the system, to apply passivity-based methods to systems having forced equilibria \cite{Arcak2007,Monzshizadeh2019,SimpsonPorco2019}. For brevity, we shall not differentiate the two concepts, and refer to both as passivity. The notion of passivity with respect to a single input-output pair may not be sufficient for stability analysis of multi-agent systems, as the interconnection of (shifted)-passive systems is stable only if the closed-loop network has an equilibrium, which can be hard to verify for networks comprised of multiple nonlinear agents having different dynamics.  

To remedy this issue, several variants of passivity were developed, demanding systems to be passive with respect to any equilibrium input-output pairs or trajectories. Incremental passivity \cite{Pavlov2008} demands that a passivation inequality is held with respect to pairs of trajectories, but is often too restrictive. Another variant, equilibrium-independent passivity (EIP), demands that the system is passive with respect to any equilibrium it has, and models the steady-state output as a continuous (monotone) function of the steady-state input \cite{Hines2011,SimpsonPorco2019}. This variant has many applications, e.g. \cite{Meissen2015,SimpsonPorco2016}, but does not include some fundamental systems such as the single integrator, characterized by having multiple steady-state outputs for the steady-state input $\mathrm u=0$ (due to different initial conditions). Another variant of passivity is maximal equilibrium-independent passivity (MEIP), introduced in \cite{Burger2014}. Here, passivity is assumed with respect to all equilibria, and the steady-state output is modeled as a maximally monotone relation of the steady-state input, generalizing EIP. In \cite{Burger2014}, it was shown that a diffusively-coupled network of SISO output-strictly MEIP agents and SISO MEIP controllers converges, and its limit can be found as the minimizers of two dual convex network optimization problems associated with the network, usually referred to as the optimal flow problem and optimal potential problem \cite{Rockafellar1998}. In this way, the convex network optimization problems give a computationally viable way of computing the limit of the diffusively-coupled network. This connection was used in \cite{Sharf2017,Sharf2018a, Sharf2019e} to solve various synthesis problems, and in \cite{Sharf2019f} for fault detection and isolation problems. 

In practice, however, many systems are not passive \cite{Qu2014,Harvey2016,Trip2018,Xia2014}. Their lack of passivity is often quantified using the input-passivity index and the output-passivity index \cite{Zhu2014}, and is often compensated using passivation methods (also known as passification methods \cite{Fradkov2003}). The goal of this paper is to present a novel passivation method for systems which are not passive, but have a shortage of passivity, characterized by a weaker dissipation inequality.

\subsection{Literature Review}
The most common methods to passivize a system rely on feedback. A well-known approach is output-feedback using a fixed gain \cite{Khalil2002}. This approach passivizes systems with a negative output-passivity index \cite{Zhu2014}, otherwise known as output passive-short systems.  Another method considers output-feedback using a controller with prescribed passivity indices \cite{Zhu2014}, but passivation is again achieved only for passive-short systems \cite[Theorem 7]{Zhu2014}. One can similarly consider input-feedthrough, passivizing systems with a negative input-passivity index \cite{Zhu2014}, known as input-passive-short systems.

Other prominent feedback-based methods used for passivation include state-feedback and output-feedback by general static nonlinearities, see \cite{Byrnes1989,Byrnes1991,Fradkov1995,Jiang1996,Fradkov1998,Fradkov2003} and references therein. 
These approaches were proven to work for weakly minimum phase systems with relative degree at most $1$, but can have several problems. First, like Lyapunov theory, these methods are often non-constructive, and heavily rely on structural properties of the system at hand \cite[Chapter 1]{Sepulchre2012}. Second, the construction of the feedback law requires an exact model of the system, or at least an approximate one. This can be a problem in cases where the model of the system changes,  due to faults, wear-and-tear, unforeseen working conditions, etc. As passivity indices can be estimated using in-run data \cite{Romer2019,Montenbruck2016,Romer2017a}, passivation methods relying on passivity indices can mitigate this effect by adapting the assumed passivity indices. We also mention other methods building on state-feedback, such as backstepping and forwarding \cite[Chapter 6]{Sepulchre2012}, which remove either the minimum-phase or the relative-degree requirement, but replace it with a structural assumption on the model of the system, i.e., the system must be in a triangular form.

A novel method for mitigating the problems of feedback-based methods was presented in \cite{Xia2018}. The method considers a  general I/O transformation, which defines a new input and a new output for the system as a linear combination of its original input and output. This method generalizes output-feedback and input-feedthrough with constant gains. In \cite{Xia2018}, this I/O transformation was used to passivize systems with a finite $\mathcal{L}_2$-gain. Namely, the entries of the matrix defining the I/O transformation were chosen according to the $\mathcal{L}_2$-gain of the system at hand by solving a collection of equations and inequalities. In particular, the method is constructive and can successfully cope with a change in the dynamics by measuring the $\mathcal{L}_2$-gain of the new system and updating the entries of the matrix accordingly. However, the applicability of this method is limited to systems with a finite $\mathcal{L}_2$-gain, which excludes all unstable systems, input- or output-passive short systems, as well as some marginally stable systems such as the single integrator. Thus there is a need for a more sophisticated passivization approach to deal with a wider class of systems. This motivates the goals of this paper.

\subsection{Contributions}

In this paper, we build on \cite{Xia2018} and propose a novel method for constructing passivizing I/O transformations. Our approach is based on analytic geometry, which is applicable to a wider class of systems characterized by a passivity-like dissipation inequality with arbitrary passivity indices. Unlike in \cite{Xia2018}, these systems need not have a finite $\mathcal{L}_2$-gain. We define these systems as input-output $(\rho,\nu)$-passive systems, including, but not restricted to, output passive-short system, input passive-short systems and finite $\mathcal{L}_2$-gain systems. We show how to use the passivity indices of such systems to build a passivizing I/O transformation that can be realized using an amalgamation of easily implementable components such as input-feedthrough, output-feedback, and gains. We consider systems that are input-output $(\rho,\nu)$-passive with respect to all forced equilibria. The collection of all these steady-state input-output pairs is known as the steady-state I/O relation of the system. The steady-state I/O relation for passive systems is known to be monotone \cite{Hines2011, Burger2014}, and we show that this relation is non-monotone for passive-short systems. To tackle such systems, we introduce the notion of projective quadratic inequalities (PQIs), that are inequalities in two scalar variables, as well as methods inspired from analytic geometry to find a linear transformation monotonizing\footnote{We introduce this word and it has the meaning of ``to make monotone." In simple words, monotonizing means converting any (non-monotone) relation to a monotone relation.} the steady-state relation of the system. We then show that the linear transformation gives rise to an I/O transformation, which is shown to passivize the system with respect to all forced equilibria. We further discuss an application of this passivation scheme for multi-agent systems, in which, the notion of MEIP leads to a network optimization framework for analysis. As we already know that the passivized systems have monotone steady-state relations, the missing key notion for assuring MEIP is maximality. In this direction, we introduce the notion of cursive relations to assert maximality of the monotonized relations, proving the agents are MEIP, and allowing us to derive a transformed network optimization framework in the spirit of \cite{Burger2014}. We also reproduce the results of \cite{Jain2018} as a special case, which proves a network optimization framework assuming the agents only have an output-shortage of passivity. We exemplify our results by characterizing a class of linear and time-invariant systems as EIPS systems, and give two case studies by comparing our results with the existing literature. 
We emphasize that our results are also valid for classical passivity, as PQIs abstract all notions of classical passivity discussed in the introduction.

The rest of the paper is organized as follows. Section~\ref{sec_Background and Problem Formulation} presents some background and provides a few definitions. Section~\ref{subsec.ProbForm} motivates and formulates the problem. Section~\ref{Monotonization} discusses the steady-state I/O relation of passive-short systems, and suggests a geometric method of finding a monotonizing transformation. Section~\ref{Passivation} shows that the monotonizing transformation passivizes the system, and shows how to implement the said transformation using basic control elements, such as feedback, feed-through, and gains. Section~\ref{Finite L2 Gain} discusses the notion of input-output $(\rho,\nu)$-passivity and its generality. Section~\ref{Cursive} studies the last obstacle needed for MEIP, namely \emph{maximal} monotonicity, and formulates the network optimization framework. Section \ref{An Example} presents two examples of applying our methods, before we conclude the paper in Section \ref{conclusions}.

\paragraph*{Preliminaries}
We use notions from graph theory \cite{Godsil2001}. A graph is a pair $\mathcal{G} = (\mathbb{V}, \mathbb{E})$, consisting of a finite set of vertices $\mathbb{V}$, and a finite set of edges, $\mathbb{E} \subset \mathbb{V} \times \mathbb{V}$. Each edge $e \in \mathbb{E}$ consists of two vertices $i, j \in \mathbb{V}$, and the notation $e = (i, j)$ indicates that $i$ is the \emph{head} of edge $e$ and $j$ is its \emph{tail}. The incidence matrix $\E \in  \mathbb{R}^{|\V|\times|\EE|}$ of $\mathcal{G}$ is defined such that for any edge $e = (i, j)$, $[\E]_{ie} = +1, [\E]_{je} = -1$, and $[\E]_{\ell e} = 0$ for $\ell \neq i, j$. The $n \times n$ identity matrix is denoted by $\mathrm {Id}_n$, and $\pmb{0}_n$ is the all-zero vector. The Legendre transform of a convex function $\varPhi:\R^d\to\R$ is a function $\varPhi^\star:\R^d\to \R$ defined by $\varPhi^\star(y) = \sup_{u\in \R^d}\{u^\top y - \varPhi(u)\}$ \cite{Rockafellar1997}. Moreover, the subdifferential of a convex function $\varPhi$ is denoted as $\partial \varPhi$. A relation, i.e., a subset $\Omega \subseteq \mathcal{A} \times \mathcal{B}$ of a product set, is identified with the set-valued map sending  $a \in \mathcal{A}$ to $\{b\in\mathcal{B}:\ (a,b)\in \Omega\}$. Given a relation ${\Omega} \subseteq \mathcal{A} \times {\mathcal{B}}$, ${\Omega}^{-1}$ denotes the inverse relation of ${\Omega}$, i.e., ${\Omega}^{-1} := \{(b, a) \in \mathcal{B} \times \mathcal{A} : (a, b) \in {\Omega}\}$. We follow the convention that italic letters denote dynamic variables and letters in normal font denote constant signals. 

\section{Background}\label{sec_Background and Problem Formulation}
This section reviews the concept of MEIP, introduces systems with finite 
equilibrium-independent passivity indices, and describes the network model for diffusively coupled systems. 
\subsection{Maximal Equilibrium-Independent Passivity}
Consider the following SISO dynamical system,
\begin{equation}\label{system_model_single}
\Upsilon:~ \dot{x} = f(x, u);~~~{y} = h(x, u),
\end{equation}
with state $x \in \mathbb{R}^{n}$, control input $u \in \mathbb{R}$ and output $y \in \mathbb{R}$. The functions $f$ and $h$ are assumed to be sufficiently smooth. We assume the systems in the form \eqref{system_model_single} admit forced steady-state input-output equilibrium pairs.  This leads to the following definition, used extensively in the literature \cite{Burger2014,Sharf2018a,Hines2011,SimpsonPorco2019}.

\begin{definition}
The \emph{steady-state input-output relation} of the system \eqref{system_model_single} is the collection of all steady-state input-output pairs $(\mathrm{u,y})$. That is, it is equal to the set $k = \{(\mathrm{u,y}):\ \exists \;\mathrm x, \,\pmb{0}_n = f(\mathrm{x,u}),\ \mathrm y = h(\mathrm{x,u})\}$. The corresponding inverse relation is given by $k^{-1} = 
\{(\mathrm{y,u}): (\mathrm{u,y}) \in k\}$.
\end{definition}

Note that any steady-state relation can be thought of as a set-valued map. Namely, for any constant input $\mathrm u$, we can define $k(\mathrm u)$ as the set $k(\mathrm u) = \{\mathrm y:\ (\mathrm{u,y}) \in k\}$. Note that $k(\mathrm u) = \emptyset$ if no steady-state output corresponding to the input $\mathrm u$ exists. Similarly, for a steady-state output $\mathrm y$, we define $k^{-1}(\mathrm y)$ as $k^{-1}(\mathrm y) = \{\mathrm u:\ (\mathrm{u,y}) \in k\}$, the set of all constant inputs $\mathrm u$ that can generate $\mathrm y$. In this sense, the inverse relation can always be defined, as we do not assume $k$ to be a function.

For EIP systems, it is shown in \cite{Hines2011} that the steady-state I/O relation $k$ is a continuous and monotonically increasing function. In particular, for any steady-state input ${\rm u}$ there is exactly one steady-state output ${\rm y}$.  However, EIP excludes some important system classes, e.g. the single integrator \cite{Burger2014}. To capture the behavior of systems where the steady-state I/O relations are not necessarily a function, but rather a \emph{relation}, the notion of MEIP was suggested relying on \emph{maximal monotonicity} of the steady-state I/O relation \cite{Burger2014}. 

\begin{definition}
A relation $k$ is said to be \emph{maximal monotone} if
\begin{enumerate}
\item[i)] it is monotone, i.e., for any $({\rm u}_1, {\rm y}_1), ({\rm u}_2, {\rm y}_2) \in k$, we have that $({\rm u}_2 - {\rm u}_1) ({\rm y}_2 - {\rm y}_1) \geq 0$, and
\item[ii)] it is not contained in a larger monotone relation.
\end{enumerate}
\end{definition}
The notion of maximal monotonicity is closely related to convex functions as described in the following theorem.

\begin{thm}[\hspace{-.1pt}\cite{Rockafellar1997}]\label{thm_monotone_relation_convex_function}
A relation $k$ is maximally monotone if and only if there exists a convex function $\Phi$ such that the subgradient $\partial \Phi$ is equal to $k$. Moreover, $\Phi$ is unique up to an additive constant. The function $\Phi$ is called the \emph{integral function} of $k$.
\end{thm}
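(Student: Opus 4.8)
The plan is to establish the two implications separately and then treat uniqueness. For the \emph{if} direction, suppose $k=\partial\Phi$ for a convex $\Phi$. Monotonicity is immediate: for $({\rm u}_1,{\rm y}_1),({\rm u}_2,{\rm y}_2)\in k$ the subgradient inequalities $\Phi({\rm u}_2)\ge\Phi({\rm u}_1)+{\rm y}_1({\rm u}_2-{\rm u}_1)$ and $\Phi({\rm u}_1)\ge\Phi({\rm u}_2)+{\rm y}_2({\rm u}_1-{\rm u}_2)$ add to give $({\rm u}_2-{\rm u}_1)({\rm y}_2-{\rm y}_1)\ge 0$. Maximality requires showing $\partial\Phi$ admits no proper monotone extension; I would argue by contradiction, assuming a pair $(\bar{\rm u},\bar{\rm y})\notin\partial\Phi$ could be adjoined while preserving monotonicity, and then contradict the defining supporting-hyperplane property of the subdifferential (equivalently, invoke the classical fact that the subdifferential of a proper, lower-semicontinuous convex function is maximal monotone).

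For the \emph{only if} direction I would exploit the scalar setting. Since $k\subseteq\R\times\R$, monotonicity coincides with cyclic monotonicity, so Rockafellar's path-integral construction applies without extra hypotheses. Concretely, I would fix a base pair $({\rm u}_0,{\rm y}_0)\in k$ and define
\begin{equation}\label{eq:rockafellar_integral}
\Phi({\rm u}) = \sup\Big\{\textstyle\sum_{i=0}^{m}{\rm y}_i({\rm u}_{i+1}-{\rm u}_i)\Big\},
\end{equation}
where the supremum ranges over all finite chains $({\rm u}_0,{\rm y}_0),\dots,({\rm u}_m,{\rm y}_m)\in k$ with ${\rm u}_{m+1}={\rm u}$. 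As a pointwise supremum of affine functions, $\Phi$ is convex, and a telescoping estimate applied to \eqref{eq:rockafellar_integral} shows $k\subseteq\partial\Phi$. In the scalar case this $\Phi$ reduces to the antiderivative $\Phi({\rm u})=\int_{{\rm u}_0}^{{\rm u}}\gamma(s)\,ds$ of any nondecreasing selection $\gamma$ of $k$, which is arguably the more transparent route and which I would use to keep the argument elementary.

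The crux is upgrading the inclusion $k\subseteq\partial\Phi$ to equality, and this is precisely where maximal monotonicity of $k$ is indispensable. By the \emph{if} direction already established, $\partial\Phi$ is itself monotone; since $k\subseteq\partial\Phi$ and $k$ is, by hypothesis, not contained in any strictly larger monotone relation, equality $k=\partial\Phi$ follows. I expect this step to be the main obstacle, as it is where the two halves of the argument interlock and where dropping maximality would leave only a one-sided containment.

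Finally, for uniqueness up to an additive constant, suppose $\partial\Phi=\partial\Psi=k$. On the (connected) effective domain I would take arbitrary ${\rm u},{\rm u}'$ together with a common subgradient ${\rm y}\in\partial\Phi({\rm u})=\partial\Psi({\rm u})$; the mutual subgradient inequalities for $\Phi$ and $\Psi$ sandwich the increment $(\Phi-\Psi)({\rm u}')-(\Phi-\Psi)({\rm u})$ to zero, forcing $\Phi-\Psi$ to be constant. This completes the proof sketch.
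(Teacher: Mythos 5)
The paper does not actually prove this statement: it is imported verbatim from Rockafellar's \emph{Convex Analysis} and used as a black box, so there is no internal proof to compare yours against. Judged on its own, your outline follows the standard route --- subgradient inequalities for monotonicity, the cyclic-monotonicity / antiderivative construction for existence of $\Phi$ with $k\subseteq\partial\Phi$, and maximality of $k$ together with monotonicity of $\partial\Phi$ to upgrade the inclusion to equality --- and that skeleton is correct. Two steps, however, do not go through as written.

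First, the uniqueness argument does not close. From a single common subgradient ${\rm y}\in\partial\Phi({\rm u})=\partial\Psi({\rm u})$ you obtain only the two lower bounds $\Phi({\rm u}')-\Phi({\rm u})\ge {\rm y}({\rm u}'-{\rm u})$ and $\Psi({\rm u}')-\Psi({\rm u})\ge {\rm y}({\rm u}'-{\rm u})$; adding the symmetric inequalities from a subgradient ${\rm y}'$ at ${\rm u}'$ traps both increments in the same interval $\left[{\rm y}({\rm u}'-{\rm u}),\,{\rm y}'({\rm u}'-{\rm u})\right]$, which is not a single point unless $k$ is single-valued, so the increments are not ``sandwiched to zero.'' The correct one-dimensional argument is to note $\partial\Phi({\rm u})=[\Phi'_-({\rm u}),\Phi'_+({\rm u})]$, so $\partial\Phi=\partial\Psi$ forces $\Phi'_+=\Psi'_+$, and then $\Phi({\rm u}')-\Phi({\rm u})=\int_{\rm u}^{{\rm u}'}\Phi'_+(s)\,ds$ yields constancy of $\Phi-\Psi$. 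Second, your antiderivative construction tacitly assumes $k$ has full domain; maximal monotone relations containing vertical rays --- e.g.\ $\{0\}\times\R$, the single integrator's relation and the paper's motivating example --- are subdifferentials only of \emph{extended-real-valued} convex functions, so $\Phi$ must be set to $+\infty$ off the closure of the domain and one must check that $\partial\Phi$ at the boundary recovers the vertical portions. Relatedly, falling back on ``the subdifferential of a proper lsc convex function is maximal monotone'' in the \emph{if} direction is essentially citing the result being proved; in the scalar case the direct contradiction via the interval structure of $\partial\Phi({\rm u})$ and one-sided continuity of $\Phi'_\pm$ is available and should be spelled out.
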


Maximal monotonicity induces the following system-theoretic property:
\begin{definition}[\hspace{-.1pt}\cite{Burger2014}] \label{def_MEIP}
A dynamical SISO system ${\Sigma}: u \mapsto y$ is (output-strictly) \emph{maximal equilibrium independent passive (MEIP)} if
\begin{enumerate}
\item[i)] The system $\Sigma$ is (output-strictly) passive with respect to any steady-state I/O pair $(\mathrm{u,y})$ it possesses.
\item[ii)] The associated steady-state I/O relation is maximally monotone.
\end{enumerate}
\end{definition}

Examples of MEIP systems include single integrators, port-Hamiltonian systems, gradient systems, and others; see \cite{Burger2014} for further discussion. One important aspect of MEIP systems is their integral functions, as mentioned in Theorem~\ref{thm_monotone_relation_convex_function} above. 
Since the steady-state I/O relation $k$ is maximally monotone for an MEIP 
system, there exists a convex function $K$ such that $\partial K = k$. 
Moreover, the Legendre transform of $K$, denoted as $K^\star$, is also a convex 
function, and satisfies $\partial K^\star = k^{-1}$. Thus both $k,k^{-1}$ 
have integral functions that are necessarily convex. However, this is not true 
for passive-short systems, as will be shown in Section \ref{subsec.ProbForm}.

\subsection{Equilibrium-Independent Shortage of Passivity}
The main advantage of applying an equilibrium-independent notion of passivity 
for multi-agent systems is that it allows to prove convergence without 
specifying the steady-state limit (see 
\cite{Burger2014,Hines2011,SimpsonPorco2019} and 
Subsection \ref{subsec.DiffCoup}). However, many systems in practice are not 
passive \cite{Qu2014,Harvey2016,Trip2018,Xia2014}, and even fewer are passive 
with respect to all equilibria. The level of passivity, or shortage thereof, is 
usually measured using passivity indices. We first define the notion of shortage of passivity that we consider, and later adjust it to fit into the equilibrium-independent framework.

\begin{definition} \label{defn_inpOutPassive}
Let $\Sigma$ be a SISO system with a constant input-output steady-state pair $(\mathrm{u,y})$. The system $\Sigma$ is said to be:
\begin{enumerate}
\item[i)] \emph{output $\rho$-passive} with respect to $(\mathrm{u,y})$ if there exist a storage function $S(x)$, and a number $\rho \in \R$, such that the following inequality holds for any trajectory:
\begin{equation}\label{OSMEIP}
\dot{S} \leq -\rho (y - {\rm y})^2 + (y - {\rm y})(u - {\rm u});
\end{equation}
\item[ii)] \emph{input $\nu$-passive} with respect to $(\mathrm{u,y})$ if there exist a storage function $S(x)$, and a number $\nu \in \R$, such that the following inequality holds for any trajectory:
\begin{equation}\label{ISMEIP}
\dot{S} \leq -\nu (u - {\rm u})^2 + (y - {\rm y})(u - {\rm u});
\end{equation}
\item[iii)] \emph{input-output ($\rho,\nu$)-passive} with respect to $(\mathrm{u,y})$ if there exist a storage function $S(x)$, and numbers $\rho,\nu \in \R$, such that $\rho\nu < \frac{1}{4}$ and that the following inequality holds for any trajectory:
\begin{equation}\label{IOSMEIP}
\dot{S} \leq -\rho (y - {\rm y})^2 -\nu (u - {\rm u})^2 + (y - {\rm y})(u - {\rm u}).
\end{equation}
\end{enumerate}
\end{definition}

\begin{remark}
Output $\rho$-passive systems with $\rho<0$ are known in the literature both as output-passive short or output passivity-short systems \cite{Qu2014,Joo2016,Harvey2016,Atman2018,Jain2018,Sharf2019c} or as output-passifiable systems \cite{Bondarko2003,Selivanov2016}. Similarly, input $\nu$-passive systems with $\nu<0$ are usually called input-passive short systems or as input-passifiable systems.
\end{remark}

\begin{definition} \label{defn_passive_short}
A SISO system $\Sigma : u\mapsto y$ is said to be:
\begin{enumerate}
\item[i)] \emph{Equilibrium-Independent Output $\rho$-Passive} (EI-OP($\rho$)) if it is output $\rho$-passive with respect to any equilibrium.
\item[ii)] \emph{Equilibrium-Independent Input $\nu$-Passive} (EI-IP($\nu$)) if it is input $\nu$-passive with respect to any  equilibrium.
\item[iii)] \emph{Equilibrium-Independent Input-Output $(\rho,\nu)$-Passive} (EI-IOP($\rho,\nu$)) if it is input-output ($\rho,\nu$)-passive with respect to any equilibrium.
\end{enumerate}
Moreover, for EI-OP($\cdot$) and EI-IP($\cdot$), the largest numbers $\rho,\nu$ for which the inequalities \eqref{OSMEIP} and \eqref{ISMEIP} hold are called the \emph{equilibrium-independent output-passivity index} and \emph{equilibrium-independent input-passivity index} of the system, respectively. Furthermore, $\Sigma$ is said to be \emph{equilibrium-independent passive short} (EIPS) if there exist $\rho,\nu$ with $\rho\nu < \frac{1}{4}$ such that $\Sigma$ is EI-IOP($\rho,\nu$).
\end{definition}

\begin{remark}
The numbers $\rho,\nu$ in Definition \ref{defn_passive_short} are not unique, as decreasing them makes the inequality easier to satisfy. We thus define the equilibrium-independent passivity indices analogously to the output-feedback passivity index (OFP) and the input-feedthrough passivity index (IFP) in \cite{Xia2014}. Moreover, the definition above unites strictly-passive, passive, and passive-short systems. The case $\rho,\nu>0$ corresponds to strict passivity, $\rho,\nu = 0$ corresponds to passivity, and $\rho,\nu < 0$ corresponds to shortage of passivity. Thus, it will allow us to consider networks of systems where some are passive and some are passive-short, without needing to specify the exact passivity assumption. It also allows us to consider EI-IOP($\rho,\nu$) systems for $\rho > 0$ and $\nu < 0$ (or vice versa) with no additional effort needed.
\end{remark}
\begin{remark}
The demand that $\rho \nu < \frac{1}{4}$ for defining EI-IOP($\rho,\nu$) might seem unnatural. The reason we add it is that otherwise, the right-hand side of \eqref{IOSMEIP} will either be always positive or always negative. The first case implies all static nonlinearities are EI-IOP($\rho,\nu$), and the second case implies that no system can be EI-IOP($\rho,\nu$), both rendering the definition useless.
\end{remark}

\begin{remark}
EI-IOP($\rho,\nu$) systems capture both EI-OP($\rho$) and EI-IP($\nu$) systems by setting either $\rho =0$ or $\nu=0$.
\end{remark}
We now give an example of a class of EI-OP($\rho$) systems:
\begin{proposition}\label{prop_gradient_system}
Consider the SISO gradient system $\dot{x} = -\nabla U (x) + u; y=x$, where the 
Hessian of the potential $U$ satisfies $\mathrm{Hess}(U) \geq 
\rho{\rm Id}$ for some $\rho \in \mathbb{R}$. Then $\Sigma$ is 
EI-OP($\rho$).    
\end{proposition}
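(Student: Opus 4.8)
The plan is to exhibit an explicit storage function and verify the output $\rho$-passivity inequality \eqref{OSMEIP} directly for an arbitrary equilibrium. First I would characterize the steady-state pairs: setting $\dot x = 0$ gives $\nabla U(\mathrm x) = \mathrm u$, and since $y = x$ the associated output is $\mathrm y = \mathrm x$. Thus every equilibrium I/O pair $(\mathrm u, \mathrm y)$ arises from a state $\mathrm x$ with $\mathrm u = \nabla U(\mathrm x)$ and $\mathrm y = \mathrm x$, and I would fix such a pair for the remainder of the argument.

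For this equilibrium I propose the quadratic storage function $S(x) = \frac{1}{2}\norm{x - \mathrm x}^2$, which is nonnegative as required. Differentiating along trajectories and substituting the dynamics gives $\dot S = (x - \mathrm x)^\top(-\nabla U(x) + u)$. The key algebraic manipulation is to insert the equilibrium condition $\mathrm u = \nabla U(\mathrm x)$ by adding and subtracting $\nabla U(\mathrm x)$, which splits $\dot S$ into a cross term $(x - \mathrm x)^\top (u - \mathrm u)$ --- exactly the supply term $(y - \mathrm y)(u - \mathrm u)$ appearing in \eqref{OSMEIP}, since $y - \mathrm y = x - \mathrm x$ --- and a dissipation term $-(x - \mathrm x)^\top(\nabla U(x) - \nabla U(\mathrm x))$.

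It then remains to show $-(x - \mathrm x)^\top(\nabla U(x) - \nabla U(\mathrm x)) \le -\rho\norm{x - \mathrm x}^2$, equivalently the one-sided gradient bound $(x - \mathrm x)^\top(\nabla U(x) - \nabla U(\mathrm x)) \ge \rho \norm{x - \mathrm x}^2$. I would obtain this from the Hessian hypothesis by writing the gradient difference as the line integral $\nabla U(x) - \nabla U(\mathrm x) = \int_0^1 \mathrm{Hess}(U)\big(\mathrm x + t(x - \mathrm x)\big)(x - \mathrm x)\, dt$ via the fundamental theorem of calculus, and contracting against $(x - \mathrm x)$ so that the bound $\mathrm{Hess}(U) \ge \rho\,{\rm Id}$ applies pointwise inside the integral. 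Substituting this dissipation estimate yields \eqref{OSMEIP} for the chosen equilibrium, and since the construction depended only on the equilibrium condition $\mathrm u = \nabla U(\mathrm x)$, it holds for every equilibrium, establishing EI-OP($\rho$).

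I expect the only real subtlety to be this gradient inequality: because $\rho$ may be negative, one cannot invoke convexity of $U$, and the estimate must be argued through the integral representation of the Hessian so that the (possibly indefinite) lower bound $\rho\,{\rm Id}$ transfers correctly to the quadratic form $(x - \mathrm x)^\top(\nabla U(x) - \nabla U(\mathrm x))$ along the entire segment joining $\mathrm x$ and $x$.
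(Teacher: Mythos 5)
Your proposal is correct and follows essentially the same route as the paper: the same storage function $S(x)=\frac{1}{2}\norm{x-\mathrm x}^2$, the same equilibrium characterization $\mathrm u=\nabla U(\mathrm x)$, $\mathrm y=\mathrm x$, and the same split of $\dot S$ into the supply term and a gradient-difference dissipation term. The only cosmetic difference is that the paper bounds the dissipation term by introducing the shifted map $\varphi(x)=\nabla U(x)-\rho x$ and citing its monotonicity, whereas you prove the equivalent strong-monotonicity bound $(x-\mathrm x)^\top(\nabla U(x)-\nabla U(\mathrm x))\ge\rho\norm{x-\mathrm x}^2$ directly via the line integral of the Hessian --- which is precisely the ``straightforward to verify'' step the paper leaves implicit.
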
 

\begin{proof}
Take a steady-state I/O pair $(\mathrm{u,y})$ and note $\mathrm x=\mathrm y$ is the corresponding state at equilibrium.
Consider the storage function $S(x) = \frac{1}{2}\norm{x - {\rm x}}^2$. The derivative of $S$ along the system trajectories is $\dot{S} = (x - {\rm x})^\top(-\nabla U(x) + u)$. Defining $\varphi(x) := \nabla U(x) - \rho x$, we write $\dot{S} = (x - {\rm x})^\top(-\varphi(x) - \rho x + u)$. Adding and 
subtracting $\varphi({\rm x})$ and $\rho {\rm x}$ and using the fact that ${\rm 
u} = \nabla U({\rm x}), {\rm y} = {\rm x}$ and $\varphi({\rm x}) = \nabla 
U({\rm x}) - \rho {\rm x}$ at equilibrium, we obtain $\dot{S} = -(x - {\rm 
x})^\top((\varphi(x) - \varphi({\rm x}))  - \rho (y - {\rm y})^\top(y - {\rm y}) + (y 
- {\rm y})(u - {\rm u}))$. It is straightforward to verify that $\mathrm{Hess}(U) \geq \rho {\rm Id}$ implies that $\nabla \varphi(x) \geq 0$, so $\varphi(\cdot)$ is a monotone operator, that is, $-(x - {\rm 
x})^\top((\varphi(x) - \varphi({\rm x})) \leq 0$. We thus conclude that 
$\dot{S} \leq - \rho (y - {\rm y})^\top(y - {\rm y}) + (y - {\rm y})^\top(u - {\rm 
u}))$, and hence the system is EI-OP($\rho$).       
\end{proof}

\subsection{Diffusively-Coupled Network Model} \label{subsec.DiffCoup}
We consider a collection of SISO agents interacting over a network 
$\mathcal{G}=(\V,\EE)$, in which the agents reside at the nodes $\V$, and the 
edges regulate the relative output between the associated nodes. Namely, the 
agents $\{\Sigma_i\}_{i\in\V}$ and the controllers $\{\Pi_e\}_{e\in\EE}$ have 
the following models:
\begin{align}
\Sigma_i:\ \begin{cases} \dot{x}_i = f_i(x_i,u_i) \\ y_i = h_i(x_i,u_i)\end{cases}, \Pi_e:\ \begin{cases} \dot{\eta}_e = \phi_e(\eta_e,\zeta_e) \\ \mu_e = \psi_e(\eta_e,\zeta_e)\end{cases},
\end{align}
where $x_i\in \R^{\ell_i},\ \eta_e \in \R^{\ell_e}$ are the states, $u_i, \zeta_e \in \R$ are the inputs and $y_i,\mu_e$ are the outputs. We define the stacked vectors $\pmb{u} = [u_1,\cdots,u_{|\V|}]^\top$, and similarly for $\pmb{x},\pmb{y},\pmb{\zeta},\pmb{\eta}$ and $\pmb{\mu}$. The agents and controllers are coupled by $\pmb{\zeta} = \E^\top \pmb{y}$ and $\pmb{u} = -\E\pmb{\mu}$, where $\E$ is the incidence matrix of $\mathcal{G}$. The closed-loop system is called the \emph{diffusively-coupled system} $(\pmb{\Sigma}, \pmb{\Pi}, \mathcal{G})$, and the associated block-diagram can be seen in Figure \ref{final_network_model}. Diffusively-coupled networks are of considerable interest in the control literature  \cite{Arcak2007,Burger2014,Montenbruck2015}, and include important examples such as neural networks \cite{Franci2011}, the Kuramoto model for oscillator synchronization \cite{Dorfler2014}, and traffic control models \cite{Bando1995}.

\begin{figure}[t!]
\centering
\includegraphics[width=3.7cm]{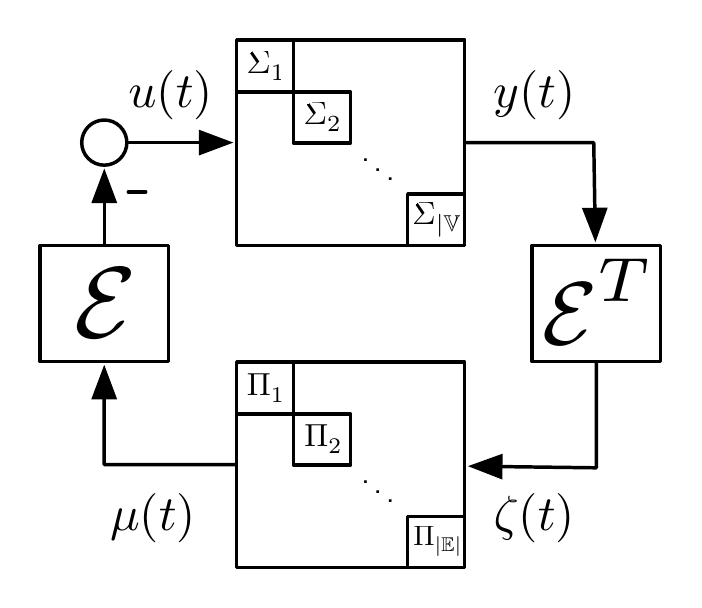}
\caption{A diffusively-coupled network.} 
\label{final_network_model} \vspace{-8pt}
\end{figure} 

The notion of MEIP allows us to connect between diffusively-coupled networks and network optimization theory.
\begin{thm}[\cite{Burger2014}]\label{thm_network_flow_problems}
 Consider the diffusively-coupled system $(\pmb{\Sigma}, \pmb{\Pi}, \mathcal{G})$. Suppose the agents are output-strictly MEIP and the controllers are MEIP, or vice versa. Let $K_i$ be the agents' integral functions, and let $\Gamma_e$ be the controllers' integral functions. We denote $\pmb{K}(\pmb{\rm u}) = \sum_{i\in \V} K_i(\rm u_i)$, $\pmb{\Gamma}(\pmb{\rm \mu}) = \sum_{e\in \EE} \Gamma_i(\rm \mu_i)$, and similarly for the Legendre transforms. Then there exist constant vectors $\pmb{\rm u}, \pmb{\rm y}, \pmb{\upzeta}, \pmb{\upmu}$ such the signals $\pmb{u}(t),\pmb{y}(t),\pmb{\zeta}(t),\pmb{\mu}(t)$ of $(\pmb{\Sigma}, \pmb{\Pi}, \mathcal{G})$ asymptotically converge to $\pmb{\rm u}, \pmb{\rm y}, \pmb{\upzeta}, \pmb{\upmu}$ correspondingly. Moreover, the steady-states $\pmb{\rm u}, \pmb{\rm y}, \pmb{\upzeta}$ and $\pmb{\upmu}$ are (dual) solutions of the following pair of convex optimization problems:
\begin{center}
\begin{tabular}{ c || c }
{\bf OFP} & {\bf OPP}\\
\hline
\parbox{1cm}{\begin{subequations}
\begin{alignat}{2}
\nonumber &\!\min_{\pmb{\rm u}, \pmb{\upmu}} &\qquad& \pmb{K}(\pmb{\rm u}) + \pmb{\Gamma}^\star(\pmb{\upmu})\\
\nonumber &{s.t.} &      & \pmb{\rm u} = - \E\pmb{\upmu} .
\end{alignat}
\end{subequations}} &  \parbox{1cm}{\begin{subequations}
\begin{alignat}{2}
\nonumber & \!\min_{\pmb{\rm y}, \pmb{\upzeta}} &\qquad& \pmb{K}^\star(\pmb{\rm y}) + \pmb{\Gamma}(\pmb{\upzeta})\\
\nonumber &{s.t.} &      & \E^\top\pmb{\rm y} = \pmb{\upzeta}
\end{alignat}
\end{subequations}}
\end{tabular}
\end{center} 
\end{thm}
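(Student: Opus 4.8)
The plan is to split the statement into two claims and prove them in the reverse of the order in which they are stated. First I would establish that the convex programs OFP and OPP are solvable and that their (dual) optimizers are \emph{exactly} the steady-states of the closed loop; this produces the candidate limits $\pmb{\rm u},\pmb{\rm y},\pmb\upzeta,\pmb\upmu$. Only then would I prove that the trajectories actually converge to these values, using a passivity-based Lyapunov argument.

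For the first part I would start from OPP. Eliminating its constraint via $\pmb\upzeta=\E^\top\pmb{\rm y}$, the problem becomes the unconstrained minimization of $\pmb K^\star(\pmb{\rm y})+\pmb\Gamma(\E^\top\pmb{\rm y})$ over $\pmb{\rm y}$. Because the agents are (output-strictly) MEIP and the controllers are MEIP, Theorem~\ref{thm_monotone_relation_convex_function} guarantees that $\pmb K^\star$ and $\pmb\Gamma$ are convex with $\partial\pmb K^\star=\pmb k^{-1}$ and $\partial\pmb\Gamma=\pmb\gamma$, where $\pmb k$ and $\pmb\gamma$ are the stacked steady-state relations of the agents and controllers. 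The objective is therefore convex, and its first-order optimality condition reads $\pmb 0\in\partial\pmb K^\star(\pmb{\rm y})+\E\,\partial\pmb\Gamma(\E^\top\pmb{\rm y})$. Introducing $\pmb{\rm u}\in\partial\pmb K^\star(\pmb{\rm y})$ (equivalently $\pmb{\rm y}\in\pmb k(\pmb{\rm u})$, i.e. $(\pmb{\rm u},\pmb{\rm y})$ is an agent equilibrium) and $\pmb\upmu\in\partial\pmb\Gamma(\pmb\upzeta)$ (i.e. $(\pmb\upzeta,\pmb\upmu)$ is a controller equilibrium), the optimality condition collapses to $\pmb{\rm u}=-\E\pmb\upmu$. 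Together with $\pmb\upzeta=\E^\top\pmb{\rm y}$, these are precisely the diffusive-coupling relations evaluated at a steady-state. Hence every minimizer of OPP yields a network equilibrium and conversely; the dual correspondence with OFP then follows because the two programs form a Fenchel--Rockafellar dual pair in $\pmb{\rm y}$ and $\pmb\upmu$, so the dual optimizers $(\pmb{\rm u},\pmb\upmu)$ solve OFP.

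For the convergence part I would take the storage functions $S_i$ of the agents (for output-strict passivity with index $\rho_i>0$ with respect to $(\mathrm u_i,\mathrm y_i)$) and $W_e$ of the controllers (for passivity with respect to $(\mathrm\zeta_e,\mathrm\mu_e)$), and form $V=\sum_{i\in\V}S_i+\sum_{e\in\EE}W_e$. Differentiating along trajectories and invoking \eqref{OSMEIP} for the agents and the passivity inequality for the controllers yields $\dot V\le-\sum_{i\in\V}\rho_i(y_i-\mathrm y_i)^2+(\pmb y-\pmb{\rm y})^\top(\pmb u-\pmb{\rm u})+(\pmb\mu-\pmb\upmu)^\top(\pmb\zeta-\pmb\upzeta)$. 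Substituting the coupling relations $\pmb u=-\E\pmb\mu$, $\pmb\zeta=\E^\top\pmb y$ and their steady-state counterparts shows the two cross terms are negatives of one another and cancel, leaving $\dot V\le-\sum_{i\in\V}\rho_i(y_i-\mathrm y_i)^2\le 0$. Thus $V$ is nonincreasing, the trajectories are bounded, and LaSalle's invariance principle drives them into the largest invariant set on which $y_i\equiv\mathrm y_i$ for every $i$; on this set the diffusive couplings render the remaining signals constant and equal to the equilibrium values, giving the asserted convergence.

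I expect the two main obstacles to be the technical points the sketch glosses over. The first is the solvability of OPP/OFP: the first-order argument presumes a minimizer exists, which requires coercivity/properness of $\pmb K^\star+\pmb\Gamma\circ\E^\top$ together with a constraint-qualification so that the subdifferential chain rule and the attainment of the Fenchel dual are valid. The second is upgrading LaSalle set-convergence to convergence to the specific point $(\pmb{\rm u},\pmb{\rm y},\pmb\upzeta,\pmb\upmu)$: since $V$ is only negative semidefinite in the outputs, one must argue that on the invariant set the inputs, and hence all remaining signals, are pinned to their steady-state values — this is where maximal monotonicity of the relations (uniqueness of equilibrium outputs consistent with the optimizer) and a detectability-type property of the agents must be used.
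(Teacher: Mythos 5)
The paper states this theorem as an imported result from \cite{Burger2014} and does not supply a proof of its own; your argument reproduces the standard proof from that reference --- identifying the subdifferential optimality conditions of OPP/OFP with the closed-loop equilibrium equations via Fenchel duality, then summing the agents' and controllers' storage functions so that the diffusive coupling ($\pmb{u}=-\E\pmb{\mu}$, $\pmb{\zeta}=\E^\top\pmb{y}$) cancels the cross terms and LaSalle yields convergence. The two caveats you flag (attainment of the minimizers of OPP/OFP, and upgrading LaSalle set-convergence to convergence to the specific steady state) are exactly the points covered by the additional regularity assumptions in the original reference, so your sketch is sound and essentially the intended argument.
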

These static optimization problems are known as the \emph{Optimal Flow Problem} 
(OFP) and the \emph{Optimal Potential Problem (OPP)}, and are dual to each 
other. These are classical problems in the mathematical field of network 
optimization, dealing with static optimization problems defined on graphs, and 
have been extensively studied by various researchers in fields as theoretical 
computer science and operations research \cite{Rockafellar1998}. However, this framework heavily relies on the passivity 
of the agents and controllers, and fails if any of the agents are not MEIP. As we'll see later, if the agents are not passive, the integral functions might be non-convex, or may not even exist.

\section{Motivation and Problem Formulation} \label{subsec.ProbForm}
Our end-goal is to extend the network optimization framework of Theorem \ref{thm_network_flow_problems} to agents which are not MEIP, but are rather EIPS. Unlike MEIP systems, EIPS systems need not have monotone steady-state relations. In some cases, this lack of monotonicity 
results in the non-convexity of the corresponding integral function \cite{Jain2018}, and in 
other cases, the steady-state  I/O relation is far enough from monotone that an 
integral function cannot even be defined. We give examples of this phenomenon 
in the following: 

\begin{example}[EI-OP($\rho$)]\label{exam_EI-OPS}
Consider a SISO system $\dot{x} = -x+\sqrt[3]{x}+u; y = \sqrt[3]{x}$. It is shown in \cite{Jain2018} that this system is EI-OP($\rho$) for all $\rho \le -1$, and its equilibrium-independent passivity index is $\rho = -1$. Moreover, the inverse steady-state I/O relation $\mathrm{u} = k^{-1}(\mathrm{y}) = \mathrm{y}^3 - \mathrm{y}$ is not monotone. Furthermore, it has an integral function $K^\star(\mathrm y) = \frac{1}{4}\mathrm y^4 - 
\frac{1}{2}\mathrm y^2$, which is non-convex due to the negative 
quadratic term.
\end{example}

\begin{example}[EI-IP($\nu$)]\label{exam_EI-IPS}
Consider the SISO system $\dot{x} = -\sqrt[3]{x} + u; y = x -u$. One can show similarly to Example \ref{exam_EI-OPS} that
this system is EI-IP($\nu$) for all $\nu \le -1$, and $\nu=-1$ is its equilibrium-independent passivity index. Moreover, the steady-state I/O relation $\mathrm{y} = k(\mathrm{u}) = \mathrm{u}^3 - \mathrm{u}$ is not monotone. Furthermore, it has an integral function $K(\mathrm u) = \frac{1}{4}\mathrm u^4 - \frac{1}{2}\mathrm u^2$, which is again non-convex due to the negative quadratic term.
\end{example}

\begin{example}[EI-IOP($\rho,\nu$)] \label{exam.InputOutputShortage}
Consider a SISO dynamical system $\Sigma$ given by
\begin{equation}\label{eq_former_system}
\Sigma:~\dot{x} = -\sqrt[3]{x} + 0.5x + 0.5u;~~~y = 0.5x - 0.5u,
\end{equation}
with input $u$ and output $y$. For any steady-state input-output pair $(\mathrm u,\mathrm y)$ and the corresponding state at equilibrium $\mathrm x = 2\mathrm y +\mathrm u$, we can consider the storage function $S(x) = \frac{1}{6} (x-\mathrm x)^2$. A simple calculation shows that:
\begin{align*}
\dot{S} \le (u-\mathrm u)(y-\mathrm y) + \frac{1}{3}(u-\mathrm u)^2 + \frac{2}{3}(y-\mathrm y)^2,
\end{align*}
meaning that the system is EI-IOP($\rho,\nu$) for $\rho = -2/3$ and $\nu = -1/3$. One can also easily verify that given an equilibrium state $\mathrm x$, the steady-state input $\mathrm u$ is given by $\mathrm u = 2\sqrt[3]{\mathrm x} - \mathrm x$ and that the steady-state output is $\mathrm y = \mathrm x - \sqrt[3]{\mathrm x}$. Defining $\sigma = -\sqrt[3]{\mathrm x}$, we see that the steady-state relation of the system is given by the planar curve $\mathrm u = 2\sigma-\sigma^3;\ \mathrm y = \sigma^3-\sigma$, parameterized by a variable $\sigma$, as shown in Figure~\ref{NoPrimalFunction}.  It is clear from Figure~\ref{NoPrimalFunction} that both steady-state I/O relation and its inverse are non-monotone. In fact, the 
steady-state input-output relation and its inverse are so far from monotone, no integral function exists for either of them. 

However, if we define a new input $\tilde{u}$ and a new output $\tilde{y}$ by $\tilde{u} = u+y, \tilde{y} = u+2y$, the resulting loop transformation gives the following system:
\begin{equation}\label{eq_latter_system}
\tilde{\Sigma}:~ \dot{x}=-\sqrt[3]{x}+\tilde{u};~~~\tilde{y} = x,
\end{equation} 
which has the steady-state input-output relation $k(\tilde{\mathrm u}) = \mathrm u^3$, which is maximally monotone. Moreover, the system \eqref{eq_latter_system} can be verified to be MEIP with storage function $S(x) = \frac{1}{2}(x-\mathrm x)^2$.
\end{example}
\begin{figure}[!t]
\begin{center}
\subfigure[The steady-state relation of \eqref{eq_former_system}.]{\scalebox{.31}{\includegraphics{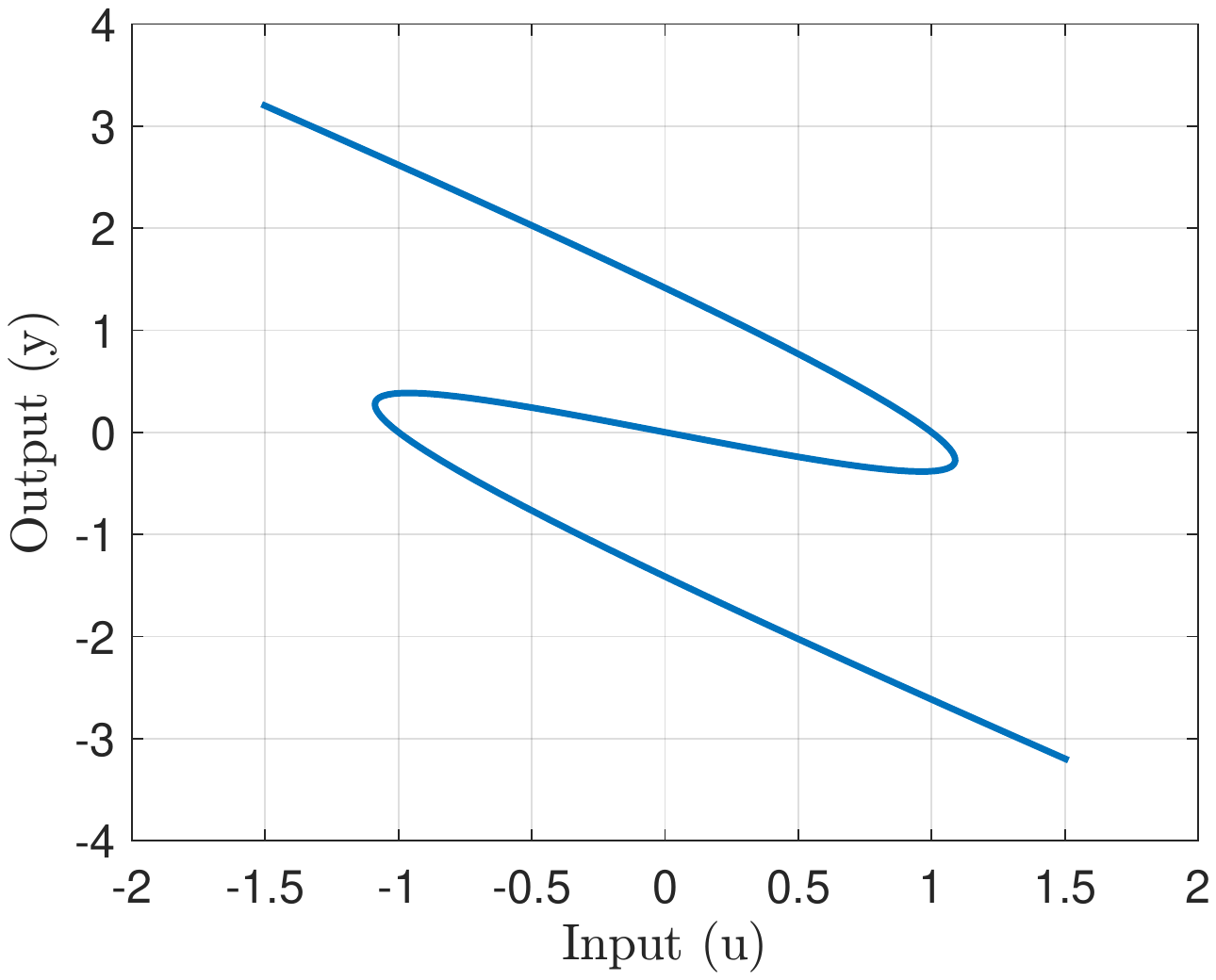}}}
\subfigure[The inverse relation of \eqref{eq_former_system}.]{\scalebox{.31}{\includegraphics{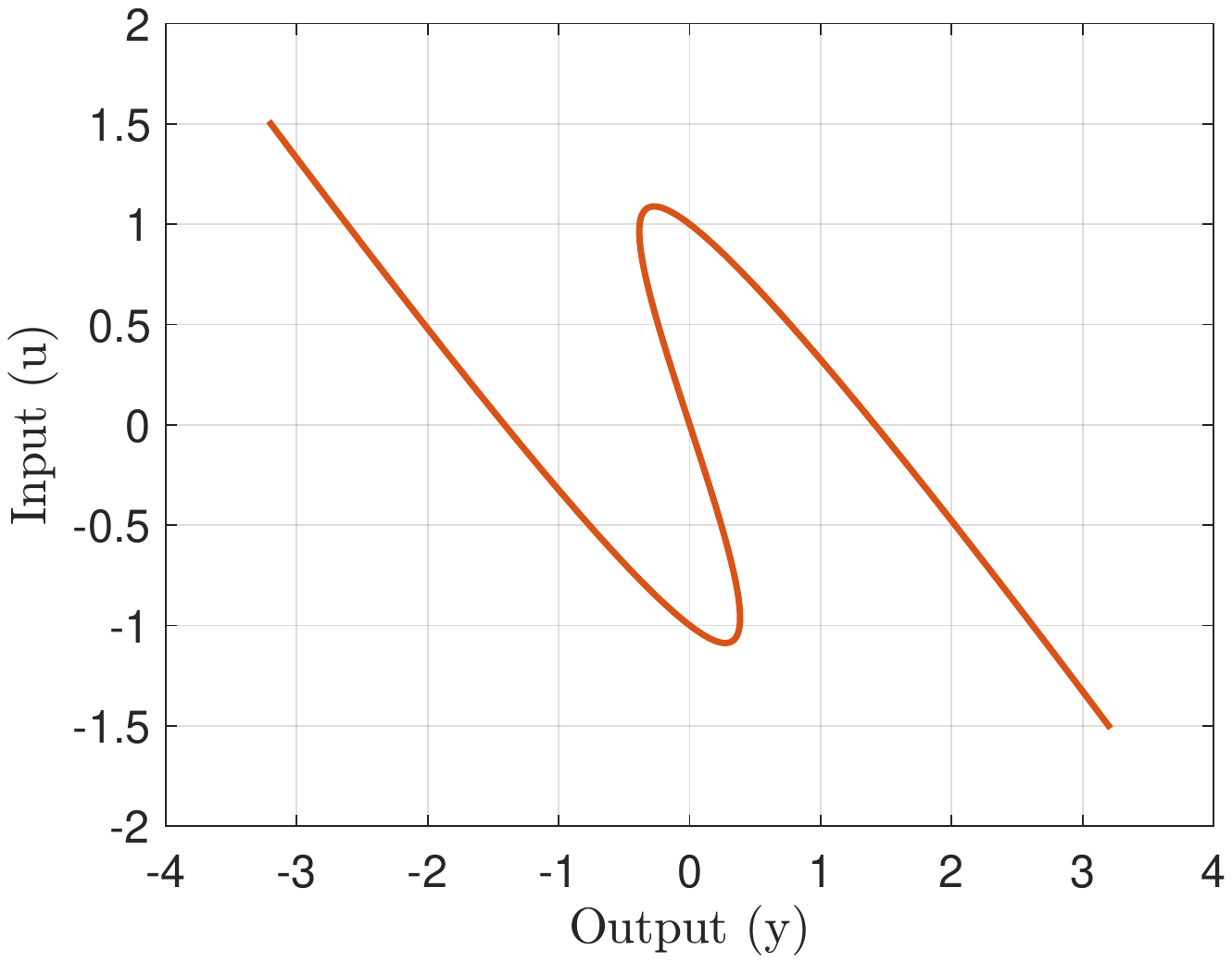}}}
\caption{Steady-state relations of the system in Example \ref{exam.InputOutputShortage}.} \label{NoPrimalFunction}
\end{center}
\vspace{-15pt}
\end{figure}

The above example shows that EIPS systems need not have integral 
functions, nor (maximally) monotone steady-state I/O relations. Thus, the network 
optimization framework of \cite{Burger2014} cannot even be defined for networks 
of EIPS agents. In \cite{Jain2018,Sharf2019c}, the network optimization 
framework failed due to the lack of convexity of the integral functions.  This 
was remedied by convexifying the resulting (non-convex) network optimization 
problems.  The interpretation (or implementation) of this convexification was a 
passivizing feedback term. We cannot follow this idea for EIPS 
systems when $\rho,\nu < 0$, as the network optimization framework is not even defined. 
Moreover, diffusely-coupled networks consisting of such systems might not be stable.
To overcome these shortcomings for EIPS systems, 
we investigate the existence of a loop transformation which results in 
monotonizing the steady-state I/O relation of the agents, as illustrated in the last part  of
Example~3. Thus, our goal in this paper is to find a monotonizing 
procedure for the steady-state I/O relation. We further show that the 
monotonizing procedure induces a passivizing plant transformation. 
For the rest of this paper, let $\Sigma$ be a EI-IOP($\rho,\nu$) system for known parameters $\rho,\nu$, and let $k$ be the corresponding steady-state relation.

\section{Monotonization of I/O Relations by Linear Transformations: A Geometric Approach} \label{Monotonization}
Our goal is to find a monotonizing transformation $T:(\mathrm u,\mathrm y)\mapsto(\tilde{\mathrm u},\tilde{\mathrm y})$ for $k$. We look for a linear transformation $T$ of the form $\left[\begin{smallmatrix} \tilde{\mathrm u} \\ \tilde{\mathrm y} \end{smallmatrix}\right] = T \left[\begin{smallmatrix} \mathrm u \\ \mathrm y \end{smallmatrix}\right]$.  Assuming the system is EI-IOP($\rho,\nu$) allows us to deduce information about the steady-state I/O relation:
\begin{proposition}\label{prop2}
Let $\Sigma$ be an EI-IOP($\rho,\nu$) system and let $k$ be its steady-state I/O relation. Then for any two points $(\mathrm u_1, \mathrm y_1),$ $(\mathrm u_2, \mathrm y_2)$ in $k$, the following inequality holds:
\begin{equation} \label{eq.SteadyStateQuadratic}
0 \le -\rho(\mathrm y_1 - \mathrm  y_2)^2 + (\mathrm u_1 -\mathrm u_2)(\mathrm y_1 - \mathrm y_2) - \nu(\mathrm u_1 - \mathrm u_2)^2.
\end{equation}
\end{proposition}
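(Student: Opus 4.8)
The plan is to apply the defining dissipation inequality \eqref{IOSMEIP} along a judiciously chosen trajectory that rests at a steady state, so that its left-hand side vanishes. The two points of $k$ are treated asymmetrically: $(\mathrm u_1,\mathrm y_1)$ will serve as the reference equilibrium $(\mathrm u,\mathrm y)$ appearing in \eqref{IOSMEIP}, while $(\mathrm u_2,\mathrm y_2)$ will be realized as an actual constant trajectory of $\Sigma$.

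Concretely, first I would invoke Definition \ref{defn_passive_short}: since $\Sigma$ is EI-IOP($\rho,\nu$), it is input-output $(\rho,\nu)$-passive with respect to the equilibrium $(\mathrm u_1,\mathrm y_1)$, so there is a storage function $S(x)$ for which \eqref{IOSMEIP} holds along every trajectory with $(\mathrm u,\mathrm y)=(\mathrm u_1,\mathrm y_1)$. Next, because $(\mathrm u_2,\mathrm y_2)\in k$, the definition of the steady-state relation provides a state $\mathrm x_2$ with $\pmb{0}_n = f(\mathrm x_2,\mathrm u_2)$ and $\mathrm y_2 = h(\mathrm x_2,\mathrm u_2)$. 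Initializing at $x(0)=\mathrm x_2$ and applying the constant input $u(t)\equiv \mathrm u_2$ then produces the constant trajectory $x(t)\equiv \mathrm x_2$, $y(t)\equiv \mathrm y_2$, along which $\dot S = 0$ since $\dot x = f(\mathrm x_2,\mathrm u_2)=\pmb{0}_n$ and $S$ depends on the state alone.

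Finally, substituting $u=\mathrm u_2$, $y=\mathrm y_2$, and $\dot S = 0$ into \eqref{IOSMEIP} yields
\begin{equation*}
0 \le -\rho(\mathrm y_2 - \mathrm y_1)^2 - \nu(\mathrm u_2 - \mathrm u_1)^2 + (\mathrm y_2 - \mathrm y_1)(\mathrm u_2 - \mathrm u_1),
\end{equation*}
which is precisely \eqref{eq.SteadyStateQuadratic}, since the squared terms and the product term are invariant under the index swap $1\leftrightarrow 2$.

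I do not expect a serious obstacle: the argument is essentially a single substitution. The only point demanding care is justifying $\dot S = 0$ along the constant trajectory, which relies on $\mathrm x_2$ being a genuine equilibrium and on $S$ being a function of the state only---both guaranteed by the hypotheses. A minor subtlety is that the storage function supplied by EI-IOP could depend on the chosen reference $(\mathrm u_1,\mathrm y_1)$, but this is harmless because only one instance of \eqref{IOSMEIP}, for one fixed reference, is needed.
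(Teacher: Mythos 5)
Your argument is correct and is essentially identical to the paper's own proof: both fix $(\mathrm u_1,\mathrm y_1)$ as the reference equilibrium in \eqref{IOSMEIP}, realize $(\mathrm u_2,\mathrm y_2)$ as a constant equilibrium trajectory through some state $\mathrm x_2$, and use $\dot S=0$ along that trajectory to obtain \eqref{eq.SteadyStateQuadratic}. The observation about symmetry under the index swap is a nice touch that the paper leaves implicit.
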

\begin{proof}

By definition of EI-IOP($\rho,\nu$), \eqref{IOSMEIP} holds for any steady-state $(\mathrm u,\mathrm y)$ and any trajectory $(u(t),x(t),y(t))$. Considering the steady-state $(\mathrm u_1,\mathrm y_1)$, we conclude that there exists a positive-definite storage function $S(x)$ such that the following inequality holds for all trajectories $(u(t),x(t),y(t))$:
\begin{align}\label{eq.Prop2Ineq}
\frac{dS}{dt} \leq -\rho (y - {\rm y_1})^2 -\nu (u - {\rm u_1})^2 + (y - {\rm y_1})(u - {\rm u_1}).
\end{align}
The steady-state input-output pair $(\mathrm u_2,\mathrm y_2)$ corresponds to some steady state $\mathrm x_2$, so that 
$(\mathrm u_2, \mathrm x_2, \mathrm y_2)$ is an (equilibrium) trajectory of the system. Plugging it into \eqref{eq.Prop2Ineq}, and noting that $\frac{d}{dt}S(\mathrm x_2) =0 $, we conclude that the inequality \eqref{eq.SteadyStateQuadratic} holds.
\end{proof}

Proposition~\ref{prop2} suggests the following definition:
\begin{definition}\label{def.PQI}
A \emph{projective quadratic inequality (PQI)} is an inequality with variables $\xi,\chi\in\R$ of the form
\begin{align} \label{eq.PQI}
0 \le a\xi^2 + b \xi\chi+c\chi^2,
\end{align}
for some numbers $a,b,c$, not all zero. The inequality is called \emph{non-trivial} if $b^2-4ac > 0$. The associated solution set of the PQI is the set of all points $(\xi,\chi)\in\R^2$ satisfying the inequality.
\end{definition}

By Definition \ref{def.PQI}, it is clear that \eqref{eq.SteadyStateQuadratic} 
is a PQI. Indeed, plugging $\xi = \mathrm u_1 - \mathrm u_2$ , $\chi = 
\mathrm y_1 - \mathrm y_2$ and choosing $a,b,c$ correctly verifies this. The 
demand $\rho\nu <\frac{1}{4}$ is equivalent to the non-triviality of the PQI. 
For example, monotonicity of the steady-state $k$ can 
be written as $0 \le (\mathrm u_1 - \mathrm u_2)(\mathrm y_1 - \mathrm y_2)$, 
which can be transformed to a PQI by choosing $a=c=0$ and $b=1$ in 
\eqref{eq.PQI}. Similarly, strict monotonicity can be modeled by taking $b=1$ 
and $a\le 0,c<0$.

As for transformations, the transformation $T = \left[\begin{smallmatrix} T_{11} & T_{12} \\ T_{21} & T_{22} \end{smallmatrix}\right]$ of the form 
$\left[\begin{smallmatrix} \tilde{\mathrm u} \\ \tilde{\mathrm y} \end{smallmatrix}\right] = T 
\left[\begin{smallmatrix} \mathrm u \\ \mathrm y \end{smallmatrix}\right]$ can be written as
$\tilde{\mathrm u} = T_{11} \mathrm u + T_{12} \mathrm y$ and 
$\tilde{\mathrm y} = T_{21} \mathrm u + T_{22} \mathrm y$. Plugging 
it inside \eqref{eq.SteadyStateQuadratic} gives another PQI. More precisely, 
if we let $F(\xi,\chi) = a\xi^2+b\xi\chi+c\chi^2$, and $T$ is a linear map, 
then $T$ maps the PQI $F(\xi,\chi) \ge 0$ to 
$F(T^{-1}(\tilde{\xi},\tilde{\chi})) \ge 0$. Our goal is to 
find a map $T$ transforming the inequality in Definition~\ref{def.PQI} to the PQI corresponding to monotonicity. Thus, 
we are compelled to consider the action of the group of linear transformations on the collection of PQIs.

Let $\mathcal{A}$ be the solution set of the original PQI. The connection between the original and transformed PQI described above shows that the solution set of the new PQI is $T(\mathcal{A})=\{T(\xi,\chi):\ (\xi,\chi)\in \mathcal{A}\}$. We can therefore study the effect of linear transformations on PQIs by studying their actions on the solution sets.  The action of the group of linear transformations on the collection of PQIs can be understood algebraically, but we use solution sets to understand it geometrically. We first give a geometric characterization of the solution sets.

\begin{note}
In this section, we abuse notation and identify the point $(\cos \theta,\sin \theta)$ on the unit circle $\mathbb{S}^1$ with the angle $\theta$ in some segment of length $2\pi$.
\end{note}

\begin{definition}
A \emph{symmetric section} $S$ on the unit circle $\mathbb{S}^1 \subseteq 
\mathbb{R}^2$ is the union of two closed disjoint sections that are 
opposite to each other, i.e., $S = B\cup (B+\pi)$ where $B$ is a closed section 
of angle $< \pi$. A \emph{symmetric double-cone} is defined as $A = \{\lambda 
s:\ \lambda >0, s\in \R\}$ for a symmetric section $S$.
\end{definition}

An example of a symmetric section and the associated symmetric double-cone can be seen in Figure \ref{fig.SymmetricSection}.

\begin{figure}[t!]
\centering
\includegraphics[width=3.2cm]{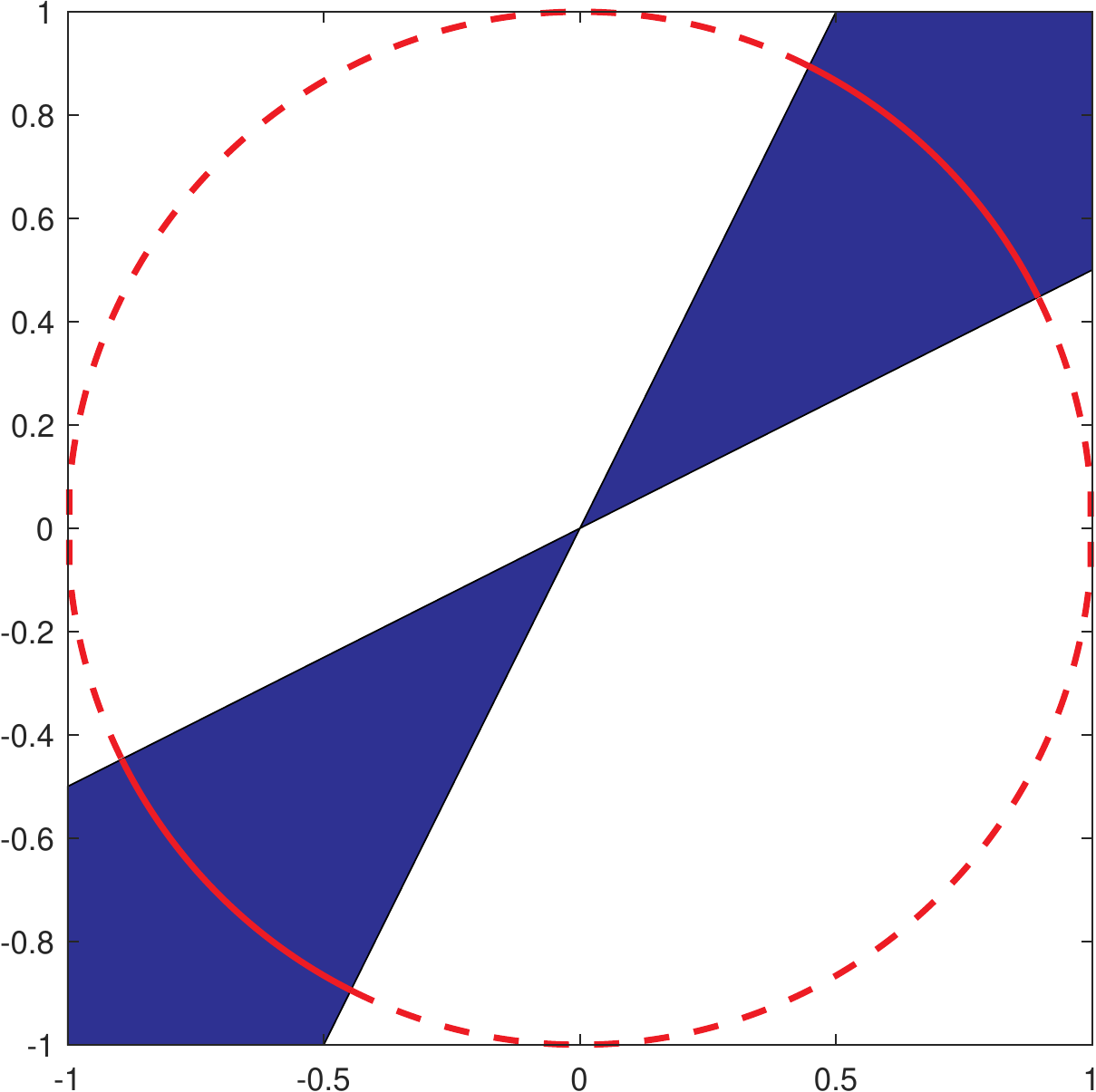}
\caption{A double cone (in blue), and the associated symmetric section $S$ (in solid red). The parts of $\mathbb{S}^1$ outside $S$ are presented by the dashed red line} 
\label{fig.SymmetricSection} \vspace{-5pt}
\end{figure} 

\begin{thm} \label{thm.GeometricApproach}
The solution set of any non-trivial PQI is a symmetric double-cone. Moreover, any symmetric double-cone is the solution set of some non-trivial PQI, which is unique up to a positive multiplicative constant.
\end{thm}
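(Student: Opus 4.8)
The plan is to exploit the homogeneity of $F(\xi,\chi)=a\xi^2+b\xi\chi+c\chi^2$. Since $F(\lambda\xi,\lambda\chi)=\lambda^2 F(\xi,\chi)$, the solution set $\{F\ge 0\}$ is invariant under every nonzero scaling, hence equals the cone generated by its intersection with the unit circle $\mathbb{S}^1$. Both directions of the theorem therefore reduce to describing the set $\{\theta:\ F(\cos\theta,\sin\theta)\ge 0\}$, and the structural fact I will use throughout is the classical dichotomy that a binary quadratic form is indefinite precisely when its discriminant $b^2-4ac$ is positive, in which case its zero set is a pair of distinct lines through the origin.

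First I would prove the forward direction by evaluating $F$ on the circle. Using the double-angle identities,
\[ F(\cos\theta,\sin\theta)=\tfrac{a+c}{2}+\tfrac{a-c}{2}\cos2\theta+\tfrac{b}{2}\sin2\theta=m+R\cos(2\theta-\phi), \]
with $m=\tfrac{a+c}{2}$, $R=\tfrac12\sqrt{(a-c)^2+b^2}$ and a phase $\phi$. The key algebraic step is the identity $(a-c)^2+b^2=(a+c)^2+(b^2-4ac)$, which shows that non-triviality $b^2-4ac>0$ is \emph{exactly} the statement $R>|m|$, i.e.\ $-1<-m/R<1$. The circle inequality then reads $\cos(2\theta-\phi)\ge -m/R$, whose solution in $\theta$ is the union of two arcs centered at $\phi/2$ and $\phi/2+\pi$, each of angular width $\arccos(-m/R)$; because $-m/R$ lies strictly inside $(-1,1)$, this width lies strictly in $(0,\pi)$. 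This is precisely a symmetric section, and the cone over it is the desired symmetric double-cone, with the origin included through the trivial inequality $F(0,0)=0$ as a harmless boundary convention.

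Next I would prove the converse constructively. Given a symmetric double-cone determined by $S=B\cup(B+\pi)$ with $B=[\theta_1,\theta_2]$ and $0<\theta_2-\theta_1<\pi$, I form the linear forms $L_i(\xi,\chi)=\xi\sin\theta_i-\chi\cos\theta_i$, whose zero lines point in the directions $\theta_1,\theta_2$, and set $F=c_0 L_1 L_2$. Expanding yields $a=c_0\sin\theta_1\sin\theta_2$, $b=-c_0\sin(\theta_1+\theta_2)$, $c=c_0\cos\theta_1\cos\theta_2$. Since $0<\theta_2-\theta_1<\pi$ the two lines are distinct, so $L_1,L_2$ are linearly independent; in the coordinates $p=L_1,\ q=L_2$ the form becomes $c_0 pq$, which is indefinite, whence $b^2-4ac>0$ and the PQI is non-trivial. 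The single free scalar $c_0$ has two admissible signs, and I would choose the one for which $\{F\ge 0\}$ is the prescribed double-cone rather than its complementary cone. Uniqueness then drops out of the same picture: for a non-trivial PQI the boundary $\partial\{F\ge 0\}$ equals $\{F=0\}$, which is exactly the pair of edge-lines of the cone and is therefore determined by the cone; a binary quadratic form with a prescribed pair of distinct zero-lines is determined up to a scalar, so any two PQIs with solution set $A$ satisfy $F'=\mu F$, and matching the admissible region rather than its complement forces $\mu>0$.

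I expect the only delicate point to be the forward direction, where a single scalar inequality $b^2-4ac>0$ must be shown to control \emph{both} that the two arcs are nonempty and that each has angle strictly less than $\pi$ — ruling out the semidefinite and definite degeneracies in which the putative ``double-cone'' would collapse to a single line, expand to a half-plane, or fill the plane. The identity $(a-c)^2+b^2=(a+c)^2+(b^2-4ac)$ is what makes the equivalence with $-1<-m/R<1$ transparent and thus dispatches this obstacle cleanly; everything else is either homogeneity or the standard fact that an indefinite binary form factors into two distinct real linear factors.
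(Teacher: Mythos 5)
Your proof is correct, and it reaches the same conclusion by a genuinely different computational route. The paper's proof also reduces to the unit circle by homogeneity, but then divides by $\cos^2\theta$ and studies the quadratic $a+b\tan\theta+c\tan^2\theta$, which forces a case analysis on whether $a=c=0$, on the sign of $c$, and (in the converse) on whether the section $B$ or its complement contains the points where $\cos\theta=0$; uniqueness is then obtained by substituting $\xi=\tau\chi$ and comparing roots, again with a case split on $a_1,a_2$ vanishing. You instead write $F(\cos\theta,\sin\theta)=m+R\cos(2\theta-\phi)$ via double-angle identities and observe that the identity $(a-c)^2+b^2=(a+c)^2+(b^2-4ac)$ makes non-triviality literally equivalent to $R>|m|$, so the solution set on the circle is read off in one stroke as two opposite closed arcs of width $\arccos(-m/R)\in(0,\pi)$ with no exceptional directions to handle; for the converse and uniqueness you factor the indefinite form as a product of two linear forms vanishing on the edge-lines, which replaces the paper's tangent-interval construction and its root-matching argument by the single standard fact that a binary quadratic with a prescribed pair of distinct zero-lines is determined up to a scalar (positive, by matching the admissible cone rather than its complement). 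What your approach buys is the elimination of every case split and a transparent link between the discriminant condition and the geometry ($R>|m|$ rules out the definite/semidefinite degeneracies in one inequality); what the paper's approach buys is that the interval $[t_-,t_+]$ of slopes appears explicitly, which it reuses elsewhere (e.g.\ in reading off boundary directions for Theorem 3). Two small points worth making explicit if you write this up: $R>0$ follows from $R>|m|$, so the division by $R$ is legitimate; and in the uniqueness step you should note that both competing PQIs are assumed non-trivial, so that each boundary is genuinely a pair of distinct lines and the factorization argument applies.
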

The proof of the theorem is available in the appendix. The theorem presents a geometric interpretation of the steady-state condition \eqref{eq.SteadyStateQuadratic}. The connection between cones and measures of passivity is best known for static systems through the notion of sector-bounded nonlinearities \cite{Khalil2002}. It was expanded to more general systems in \cite{Zames1966}, and later in \cite{McCourt2009}. We consider a different branch of this connection, focusing on the steady-state relation rather on trajectories. In turn, it allows us to have intuition when constructing monotonizing maps. In particular, we have the following result.
\begin{thm} \label{thm.MappingPQIs}
Let $(\xi_1,\chi_1)$, $(\xi_2,\chi_2)$ be two non-colinear solutions of $a_1\xi^2 + \xi\chi+c_1\chi^2=0$. Moreover, let $(\xi_3,\chi_3)$,$(\xi_4,\chi_4)$ be two non-colinear solutions of $a_2\xi^2 + \xi\chi+c_2\chi^2=0$. Define 
\begin{align}\label{eq.T1T2}
{T_1 = \left[\begin{matrix}\xi_3 & \xi_4 \\ \chi_3 & \chi_4\end{matrix}\right]\left[\begin{matrix}\xi_1 & \xi_2 \\ \chi_1 & \chi_2 \end{matrix}\right]^{-1}\hspace{-10pt},  T_2 = \left[\begin{matrix}\xi_3 & -\xi_4 \\ \chi_3 & -\chi_4\end{matrix}\right] \left[\begin{matrix}\xi_1 & \xi_2 \\ \chi_1 & \chi_2 \end{matrix}\right]^{-1}}.
\end{align}
Then one of $T_1,T_2$ transforms the PQI $a_1\xi^2 + \xi\chi+c_1\chi^2\ge0$ to the PQI $\tau a_2\xi^2 + \tau \xi\chi+\tau c_2\chi^2\ge 0$ for some $\tau>0$.
\end{thm}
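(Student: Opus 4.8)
The plan is to prove the theorem by exploiting the geometric characterization of Theorem~\ref{thm.GeometricApproach}: each non-trivial PQI $a_i\xi^2+\xi\chi+c_i\chi^2\ge 0$ has a solution set that is a symmetric double-cone, and the boundary of that double-cone consists exactly of the lines spanned by the solutions of the corresponding equality $a_i\xi^2+\xi\chi+c_i\chi^2=0$. The hypotheses furnish two non-colinear solutions $(\xi_1,\chi_1),(\xi_2,\chi_2)$ of the first equality and two non-colinear solutions $(\xi_3,\chi_3),(\xi_4,\chi_4)$ of the second. First I would observe that a quadratic form $a\xi^2+\xi\chi+c\chi^2$ with $b^2-4ac=1-4ac>0$ has exactly two distinct boundary lines (rays come in antipodal pairs), so the two given solutions of each equality must lie on the two \emph{different} boundary rays of that PQI's double-cone; consequently the matrices $\left[\begin{smallmatrix}\xi_1&\xi_2\\\chi_1&\chi_2\end{smallmatrix}\right]$ and $\left[\begin{smallmatrix}\xi_3&\xi_4\\\chi_3&\chi_4\end{smallmatrix}\right]$ are invertible (their columns are non-colinear by assumption), so both $T_1$ and $T_2$ are well-defined invertible linear maps.

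Next I would verify that each of $T_1,T_2$ sends the boundary of the first double-cone onto the boundary of the second. By construction $T_1\left[\begin{smallmatrix}\xi_1\\\chi_1\end{smallmatrix}\right]=\left[\begin{smallmatrix}\xi_3\\\chi_3\end{smallmatrix}\right]$ and $T_1\left[\begin{smallmatrix}\xi_2\\\chi_2\end{smallmatrix}\right]=\left[\begin{smallmatrix}\xi_4\\\chi_4\end{smallmatrix}\right]$, while $T_2$ differs only by sending the second column to $-\left[\begin{smallmatrix}\xi_4\\\chi_4\end{smallmatrix}\right]$, which spans the same line. Hence both $T_1$ and $T_2$ map the two spanning rays of the first PQI's boundary onto the two boundary rays of the second PQI's double-cone (as lines). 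Since a linear map carries lines to lines, and the solution set of a non-trivial PQI is the symmetric double-cone bounded by these lines, $T_i(\mathcal{A}_1)$ is a symmetric double-cone whose boundary lines coincide with those of $\mathcal{A}_2$. By the uniqueness clause of Theorem~\ref{thm.GeometricApproach}, a double-cone with those boundary lines is the solution set of $a_2\xi^2+\xi\chi+c_2\chi^2\ge0$ up to a positive scalar, which is exactly the claim $\tau a_2\xi^2+\tau\xi\chi+\tau c_2\chi^2\ge0$; so I would conclude $T_i(\mathcal{A}_1)=\mathcal{A}_2$ as sets provided $T_i$ maps the \emph{interior} (solution) region of the first double-cone to the interior of the second, rather than to its complement.

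The main obstacle, and the reason both $T_1$ and $T_2$ are offered with only \emph{one} of them guaranteed to work, is precisely this orientation issue: matching the two boundary lines does not pin down which of the two complementary double-cones the solution region maps to. A symmetric double-cone partitions $\mathbb{R}^2\setminus\{0\}$ (minus the boundary) into the cone itself and its complementary double-cone, and an invertible linear map fixing the boundary lines could swap these two regions. The key point is that flipping the sign of one column, i.e. passing from $T_1$ to $T_2$, reverses the cyclic ordering of the four boundary rays and hence interchanges the ``inside'' and ``outside'' double-cones relative to the image. I would make this precise by tracking the sign of the determinant, or equivalently by evaluating the transformed quadratic form at a single interior test point of $\mathcal{A}_1$ (for instance a solution interior to the double-cone) and checking which of $T_1,T_2$ keeps that point inside $\mathcal{A}_2$. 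Because $T_2=T_1 \cdot \mathrm{diag}(1,-1)$ after the shared right factor, exactly one of the two choices produces the correct orientation, so at least one of $T_1,T_2$ achieves the desired transformation, completing the proof.
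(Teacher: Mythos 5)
Your proposal is correct and follows essentially the same route as the paper: both show that $T_1$ and $T_2$ carry the boundary lines of the first double-cone onto those of the second, and then resolve the remaining two-fold ambiguity by checking an interior test point, observing that the images under $T_1$ and $T_2$ of such a point land in the two complementary double-cones (the paper does this concretely via the point $(\xi_1+\xi_2,\chi_1+\chi_2)$ and the midpoint argument, you via the oblique reflection $\mathrm{diag}(1,-1)$ conjugated through the domain basis). Your explicit appeal to the uniqueness clause of Theorem~\ref{thm.GeometricApproach} to pass from equality of solution sets to equality of PQIs up to a positive scalar $\tau$ is a welcome touch that the paper leaves implicit.
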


The non-colinear solutions correspond to the straight lines forming the boundary of the symmetric double-cone, thus can be found geometrically. Moreover, as will be evident from the proof, knowing which one of $T_1$ and $T_2$ works is possible by checking the PQIs on $(\xi_1+\xi_2,\chi_1+\chi_2)$ and $(\xi_3+\xi_4,\chi_3+\chi_4)$. Namely, if exactly one of them satisfies the PQIs, then $T_2$ works, and otherwise $T_1$ works. We know present the proof of the theorem. 
\begin{proof}
Let $\mathcal{A}_1$ be the solution set of the first PQI, and let $\mathcal{A}_2$ be the solution set of the second PQI.
We show that either $T_1$ or $T_2$ maps $\mathcal{A}_1$ to $\mathcal{A}_2$.  We note that $T_1(\mathcal{A}_1)$ and $T_2(\mathcal{A}_1)$ are symmetric double-cones,
whose boundary is the image of the boundary of $\mathcal{A}_1$ under $T_1$ and $T_2$ respectively, i.e., they are the image of $\mathrm{span}\{(\xi_1,\chi_1)\}\cup \mathrm{span}\{(\xi_2,\chi_2)\}$ under $T_1,T_2$. 
We note that $T_1$ maps $(\xi_1,\chi_1)$,$(\xi_2,\chi_2)$ to $(\xi_3,\chi_3)$,$(\xi_4,\chi_4)$ correspondingly, and that $T_2$ maps $(\xi_1,\chi_1)$,$(\xi_2,\chi_2)$ to $(\xi_3,\chi_3)$,$(-\xi_4,-\chi_4)$ correspondingly. Thus, $\mathrm{span}\{(\xi_1,\chi_1)\}\cup \mathrm{span}\{(\xi_2,\chi_2)\}$ is mapped by $T_1$ and $T_2$ to $\mathrm{span}\{(\xi_3,\chi_3)\}\cup \mathrm{span}\{(\xi_4,\chi_4)\}$, so that $T_1(\mathcal{A}_1),T_2(\mathcal{A}_1)$ have the same boundary as $\mathcal{A}_2$. Since $T_1,T_2$ are homeomorphisms, they map interior points to interior points. Thus, it's enough to show that some point in the interior of $\mathcal{A}_1$ is mapped to a point in $\mathcal{A}_2$ either by $T_1$ or by 
$T_2$, or equivalently, that a point in the interior of $\R^2\setminus \mathcal{A}_1$ is mapped to a point in $\R^2\setminus \mathcal{A}_2$ either by $T_1$ or by $T_2$.

Consider the point $(\xi_1+\xi_2,\chi_1+\chi_2)$. By non-colinearity, this 
point cannot be on the boundary of $\mathcal{A}_1$, equal to 
$\mathrm{span}\{(\xi_1,\chi_1)\} \cup \mathrm{span}\{(\xi_2,\chi_2)\}$. Hence, 
it's either in the interior of $\mathcal{A}_1$ or in the interior of its 
complement. We assume the prior case, as the proof for the other is similar. 
The point $(\xi_1+\xi_2,\chi_1+\chi_2)$ is mapped to $(\xi_3\pm \xi_4,\ \chi_3 
\pm \chi_4)$ by $T_1$,$T_2$. By non-colinearity, these points do not lie on the 
boundary of $\mathcal{A}_2$.  Moreover, the line passing through them is 
parallel to $\mathrm{span}\{(\xi_4,\chi_4)\}$ which is part of the boundary of 
$\mathcal{A}_2$, and their average is $(\xi_3,\chi_3)$, which is on the 
boundary. Thus, one point is in the interior of $\mathcal{A}_2$, and one is in 
the interior of its complement. This completes the proof.
\end{proof}

\begin{example} \label{exam.4}
Consider the system $\Sigma$ studied in Example \ref{exam.InputOutputShortage}, in which the steady-state I/O relation was non-monotone. There, we saw that the system is EI-IOP($\rho,\nu$) with parameters $\rho = -2/3$ and $\nu = -1/3$. The corresponding PQI is $0 \le \frac{1}{3} \xi^2 + \xi\chi + \frac{2}{3}\chi^2$. We use Theorem \ref{thm.MappingPQIs} to find a monotonizing transformation. That is, we seek a transformation mapping the given PQI to the PQI defining monotonicity, $\xi\chi \ge 0$.
We take $(\xi_3,\chi_3)=(1,0)$ and $(\xi_4,\chi_4) = (0,1)$, as these are 
non-colinear solutions to $\xi\chi = 0$. For the original PQI, $0 = \frac{1}{3} \xi^2 + \xi\chi + \frac{2}{3}\chi^2$ can be 
rewritten as $\frac{1}{3}(\xi+\chi)(\xi+2\chi) = 0$, so we take 
$(\xi_1,\chi_1) = (2,-1)$ and $(\xi_2,\chi_2) = (-1,1)$. It's easy to check 
that $(\xi_1+\chi_1,\xi_2+\chi_2)=(1,0)$  satisfies the original PQI $0 \le 
\frac{1}{3} \xi^2 + \xi\chi + \frac{2}{3}\chi^2$, and that $(\xi_3+\chi_3,\xi_4+\chi_4)$ satisfies $\xi\eta \ge 0$ so the map $T_1$ defined in the Theorem~\ref{thm.MappingPQIs}, should monotonize the steady-state relation.
Plugging in $T_1$, we get $\left[\begin{smallmatrix} \xi \\ \chi \end{smallmatrix}\right] = T_1^{-1}\left[\begin{smallmatrix} \tilde{\xi} \\ \tilde{\chi} \end{smallmatrix}\right]$ for $T_1 = \left[\begin{smallmatrix}1 & 0 \\ 0 & 1\end{smallmatrix}\right]\left[\begin{smallmatrix} 2 & -1 \\ -1 & 1 \end{smallmatrix}\right]^{-1} = \left[\begin{smallmatrix} 1 & 1 \\ 1 & 2\end{smallmatrix}\right],$ so that $T_1^{-1} = \left[\begin{smallmatrix} 2 & -1 \\ -1 & 1 \end{smallmatrix}\right]$. Then,
\begin{align*}
0\le& \frac{1}{3} \xi^2 + \xi\chi + \frac{2}{3}\chi^2 \\ =&
\frac{1}{3}(2\tilde{\xi} - \tilde{\chi})^2 + (2\tilde{\xi} - \tilde{\chi})(-\tilde{\xi}+\tilde{\chi}) + \frac{2}{3}(-\tilde{\xi}+\tilde{\chi})^2 = \frac{1}{3} \tilde{\xi}\tilde{\chi},
\end{align*}
so the transformed PQI is $0 \le \tilde{\xi}\tilde{\chi}$, corresponding to monotonicity. To get the transformed steady-state relation, we recall that the steady-state relation of $\Sigma$ is given by the planar curve $\mathrm u = 2\sigma-\sigma^3;\ \mathrm y = \sigma^3-\sigma$, parameterized by a variable $\sigma$. The transformed relation is given by:
\begin{align*}
\left[\begin{matrix} \tilde{\mathrm u} \\ \tilde{\mathrm y} \end{matrix}\right] = T _1\left[\begin{matrix} \mathrm u \\ \mathrm y \end{matrix}\right] = \left[\begin{matrix} 1 & 1 \\ 1 & 2\end{matrix}\right]\left[\begin{matrix} 2\sigma-\sigma^3 \\ \sigma^3-\sigma \end{matrix}\right] = \left[\begin{matrix} \sigma \\ \sigma^3\end{matrix}\right],
\end{align*}
and can be modeled as $\tilde{\mathrm y} = \tilde{\mathrm u}^3$, which is a 
monotone relation.
\end{example}
Theorem \ref{thm.MappingPQIs} prescribes a monotonizing transformation for the relation 
$k$. Moreover, it prescribes a transformation forcing strict monotonicity, which can be viewed as the PQI $-\nu \xi^2 +\xi\chi \ge 0$ for $\nu \ge 0$, which are not both zero.

\section{From Monotonization to \\Passivation and Implementation} \label{Passivation}
Until now, we found a map $T:\R^2\to \R^2$, monotonizing the steady-state relation $k$. We claim $T$, in fact, transforms the agent $\Sigma$ into a system which is passive with respect to  all equilibria, by defining a new input and output as $\left[\begin{smallmatrix} \tilde{u} \\ \tilde{y} \end{smallmatrix}\right] = T \left[\begin{smallmatrix} u \\ y \end{smallmatrix}\right]$. 
\begin{proposition} \label{prop.MonInducedPassivation}
Let $\Sigma$ be EI-IOP($\rho,\nu$), and let $T$ be a map transforming the PQI $-\nu \xi^2 + \xi\chi -\rho \chi^2 \ge 0$ to $-\nu^{\prime}\xi^2 + \xi\chi - \rho^{\prime}\chi^2 \ge 0$ as in Theorem \ref{thm.MappingPQIs}. Consider 
the transformed system $\tilde{\Sigma}$ with input and output $\left[\begin{smallmatrix} \tilde{u} \\ 
\tilde{y} \end{smallmatrix}\right] = T\left[\begin{smallmatrix} u \\ y \end{smallmatrix}\right]$. Then $\tilde{\Sigma}$ is EI-IOP($\rho^\prime,\nu^\prime$). In particular, if $T$ monotonizes the relation $k$, it passivizes $\Sigma$.
\end{proposition}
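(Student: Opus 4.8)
The plan is to track how the dissipation inequality \eqref{IOSMEIP} transforms under $T$, by reading its right-hand side as the quadratic form underlying the source PQI. Fix a steady-state pair $(\mathrm u,\mathrm y)$ of $\Sigma$ and let $S$ be the storage function supplied by EI-IOP($\rho,\nu$). Writing $\xi = u-\mathrm u$ and $\chi = y-\mathrm y$, the inequality \eqref{IOSMEIP} reads $\dot S \le Q(\xi,\chi)$ with $Q(\xi,\chi) = -\nu\xi^2 + \xi\chi - \rho\chi^2$, which is exactly the quadratic form whose non-negativity is the source PQI in the proposition.

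Next I would record how the error coordinates transform. Since $T$ is linear and the steady-state pair of $\tilde\Sigma$ attached to $(\mathrm u,\mathrm y)$ is $(\tilde{\mathrm u},\tilde{\mathrm y})^\top = T(\mathrm u,\mathrm y)^\top$, subtraction gives $(\tilde\xi,\tilde\chi)^\top = T(\xi,\chi)^\top$ for $\tilde\xi = \tilde u-\tilde{\mathrm u}$, $\tilde\chi = \tilde y-\tilde{\mathrm y}$, equivalently $(\xi,\chi)^\top = T^{-1}(\tilde\xi,\tilde\chi)^\top$. Substituting into $\dot S \le Q(\xi,\chi)$ yields $\dot S \le Q(T^{-1}(\tilde\xi,\tilde\chi))$, so the right-hand side is precisely the transformed quadratic form described in the discussion preceding Theorem~\ref{thm.GeometricApproach}.

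The crux is to invoke the hypothesis on $T$. By Theorem~\ref{thm.MappingPQIs}, $T$ maps the source PQI $Q\ge 0$ to the target PQI $-\nu'\xi^2 + \xi\chi - \rho'\chi^2 \ge 0$, which — tracking the scalar exactly as in that theorem — means $Q(T^{-1}(\tilde\xi,\tilde\chi)) = \tau(-\nu'\tilde\xi^2 + \tilde\xi\tilde\chi - \rho'\tilde\chi^2)$ as an identity of quadratic forms, for some $\tau>0$. That the factor is a single positive $\tau$ (rather than merely matching solution sets) is the uniqueness-up-to-positive-scalar half of Theorem~\ref{thm.GeometricApproach}, applicable because both PQIs are non-trivial. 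Hence $\dot S \le \tau(-\nu'\tilde\xi^2 + \tilde\xi\tilde\chi - \rho'\tilde\chi^2)$; dividing by $\tau>0$ and passing to the rescaled storage function $\tilde S = S/\tau$ (same state dynamics, so still admissible) gives $\dot{\tilde S} \le -\nu'\tilde\xi^2 + \tilde\xi\tilde\chi - \rho'\tilde\chi^2$, i.e. \eqref{IOSMEIP} for $\tilde\Sigma$ at $(\tilde{\mathrm u},\tilde{\mathrm y})$. As $(\mathrm u,\mathrm y)$ was arbitrary and $\tau$ depends only on $T$, not on the steady state, $\tilde\Sigma$ is EI-IOP($\rho',\nu'$). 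For the final clause, if $T$ monotonizes $k$ then the target PQI is monotonicity $\xi\chi\ge 0$, i.e. $\rho'=\nu'=0$, and EI-IOP($0,0$) is exactly passivity with respect to every equilibrium, so $T$ passivizes $\Sigma$.

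I expect the main obstacle to be the scalar-tracking step: justifying that the transformed quadratic form equals a single positive multiple $\tau$ of the target form — not just that the two inequalities have the same solution set — and then cleanly absorbing $\tau$ by rescaling $S$. A secondary point worth stating carefully is that $\tilde\Sigma$ is a genuine dynamical system with unchanged state, which is what keeps the same $\dot S$ on the left throughout and makes $S$ and $S/\tau$ legitimate storage functions.
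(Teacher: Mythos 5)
Your proposal is correct and follows essentially the same route as the paper's proof: read the right-hand side of \eqref{IOSMEIP} as the source PQI's quadratic form in the error coordinates $(\xi,\chi)$, substitute $(\xi,\chi)^\top = T^{-1}(\tilde\xi,\tilde\chi)^\top$, and invoke Theorem~\ref{thm.MappingPQIs}. The paper's version is a two-line sketch that leaves implicit the positive scalar $\tau$ and the rescaling of the storage function to $S/\tau$; you make exactly those points explicit, which is a faithful filling-in rather than a different argument.
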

\begin{proof}
The inequality \eqref{IOSMEIP} is the PQI $-\nu \xi^2 + \xi\chi - \rho \chi^2 \ge 0$, where we put $\xi = u(t) - \mathrm u$ and $\chi = y(t) - \mathrm y$ for a trajectory $(u(t),y(t))$ and a steady-state I/O pair $(\mathrm u,\mathrm y)$. The proposition follows by noting that $\left[\begin{smallmatrix} \xi \\ \chi \end{smallmatrix}\right] = T^{-1}\left[\begin{smallmatrix} \tilde{\xi} \\ \tilde{\chi} \end{smallmatrix}\right]$, satisfies the PQI $-\nu^{\prime}\xi^2 + \xi\chi - \rho^{\prime}\chi^2 \ge 0$, $\tilde{\xi} = \tilde{u}(t) - \tilde{\mathrm u}$ and $\tilde{\chi} = \tilde{y}(t) - \tilde{\mathrm y}$.
\end{proof}
Combining Theorem \ref{thm.MappingPQIs} and the discussion following it with Proposition \ref{prop.MonInducedPassivation} gives the following algorithm for passivation of EI-IOP($\rho,\nu$) systems with respect to all equilibria:

\begin{algorithm}
\caption{Passivation of an EI-IOP($\rho,\nu$) system}
{\bf Input} : A system $\Sigma$, and $\rho,\nu\in\mathbb{R}$ such that the system is EI-IOP($\rho,\nu$). Two more numbers $\rho^\prime,\nu^\prime$ such that $\rho^\prime \nu^\prime < 1/4$. \\
{\bf Output}: A transformation $T$, transforming the system $\Sigma$ to an EI-IOP($\rho^\prime,\nu^\prime$) system.
\label{alg.passivation}
\begin{algorithmic}[1] 
\State Find two pairs $(\xi_1,\chi_1), (\xi_2,\chi_2)$, which are non-colinear solutions of $-\nu \xi^2 + \xi\chi - \rho\chi^2 = 0$.
\State  Find two pairs $(\xi_3,\chi_3), (\xi_4,\chi_4)$, which are non-colinear solutions of $-\nu^\prime \xi^2 + \xi\chi - \rho^\prime\chi^2 = 0$.
\State Define $T_1, T_2$ as in \eqref{eq.T1T2}.
\State Define $\alpha_1 = -\nu (\xi_1+\xi_2)^2 + (\xi_1+\xi_2)(\chi_1+\chi_2) - \rho(\chi_1+\chi_2)^2$ and $\alpha_2 = -\nu^\prime (\xi_3+\xi_4)^2 + (\xi_3+\xi_4)(\chi_3+\chi_4) - \rho^\prime(\chi_3+\chi_4)^2$.
\If{$\alpha_1, \alpha_2$ are both non-positive or both non-negative,}
\State {\bf Return} $T_1$.
\Else
\State {\bf Return} $T_2$.
\EndIf 
\end{algorithmic}
\end{algorithm}

\begin{remark} \label{rem.ClassicalPassivity}
Proposition \ref{prop.MonInducedPassivation}, together with Section \ref{Monotonization}, prescribes a linear transformation passivizing the agent with respect to all equilibria. The same procedure can be applied to ``classical" passivity, in which one only looks at passivity with respect to a single equilibrium, as PQIs can be used to abstractify all dissipation inequalities. Our approach is entirely geometric and does not rely on algebraic manipulations.
\end{remark}

\begin{remark}
Note that if the transformation transforms $k$ to a \emph{strictly} monotone relation, the transformed system is \emph{strictly} passive.
\end{remark}

For the remainder of this section, we show that the I/O transformation can be easily 
implemented using standard control tools, namely gains, feedback and 
feed-through. We also connect the steady-state I/O relation $\lambda$ of the 
transformed system $\tilde{\Sigma}$ to $k$.
 
In this direction, take any linear map $T:\R^2\to \R^2$ of the form $T = \left[\begin{smallmatrix} a & b \\  
c & d \end{smallmatrix}\right],$ where we assume that $\det(T) \neq 0$. It defines the 
plant transformation of the form
$
\left[\begin{smallmatrix}
    \tilde{u} \\
    \tilde{y} 
\end{smallmatrix}\right]
= T
\left[\begin{smallmatrix}
    {u} \\
    {y} 
\end{smallmatrix}\right].
$
For simplicity of presentation, we assume that $a\neq 0$.\footnote{We note that 
by switching the names of $(\xi_3,\chi_3)$ and $(\xi_4,\chi_4)$ in Theorem 
\ref{thm.MappingPQIs}, we switch the two columns of $T$. Thus we can always 
assume that $a\neq 0$, as $a=b=0$ cannot hold due to the determinant 
condition.} We note $T$ can be written as a product of elementary 
matrices, and the effect of each elementary matrix on $\Sigma$ can be easily 
understood. By applying the elementary transformations sequentially, the effect 
of their product, $T$, can be realized. Table~\ref{table} summarizes the elementary transformations and their effect 
on the system $\Sigma$. Following Table~\ref{table}, $T$ is written as 
\begin{gather}\label{eq_transformation}
T = \left[\begin{matrix}
    a & b \\
    c & d 
\end{matrix}\right]
= \underbrace{\left[\begin{matrix}
    \delta_D & 0 \\
    0 & 1 
\end{matrix}\right]}_{L_D}\underbrace{\left[\begin{matrix}
   1 & 0 \\
    \delta_C & 1 
\end{matrix}\right]}_{L_C}\underbrace{\left[\begin{matrix}
    1 & 0 \\
    0 & \delta_B 
\end{matrix}\right]}_{L_B}\underbrace{\left[\begin{matrix}
    1 & \delta_A \\
    0 & 1 
\end{matrix}\right]}_{L_A},
\end{gather}
with $\delta_A = b/a, \delta_B =d-\frac{b}{a}c,  \delta_C = c$ and $\delta_D = a$. The product of these matrices can be seen as the sequential transformation from the original system ${\Sigma}$, which can be understood as a loop-transformation, illustrated in Figure \ref{Feedback Passivation}.

\begin{remark}
Writing $T=L_DL_CL_BL_A$ allows us to give a closed form description of the transformed system. Suppose the original system is given by $\dot{x} = f(x,u);\ y=h(x)$. Applying $L_A$ gives a new input $v$, and the transformed system $\dot{x} = f(x,v-\delta_A h(x)); y=h(x)$. Applying $L_B$ on this system gives $\dot{x} = f(x,v-\delta_A h(x)); y=\delta_B h(x)$. Applying $L_C$ then gives $\dot{x} = f(x,v-\delta_A h(x)); y=\delta_B h(x)+\delta_C v$, and applying $L_D$ finally gives $\dot{x} = f(x,\delta_D v-\delta_A h(x)); y=\delta_B h(x)+\delta_C \delta_D v$.
\end{remark}

\begin{figure}[t!]
\centering
\includegraphics[width=8cm]{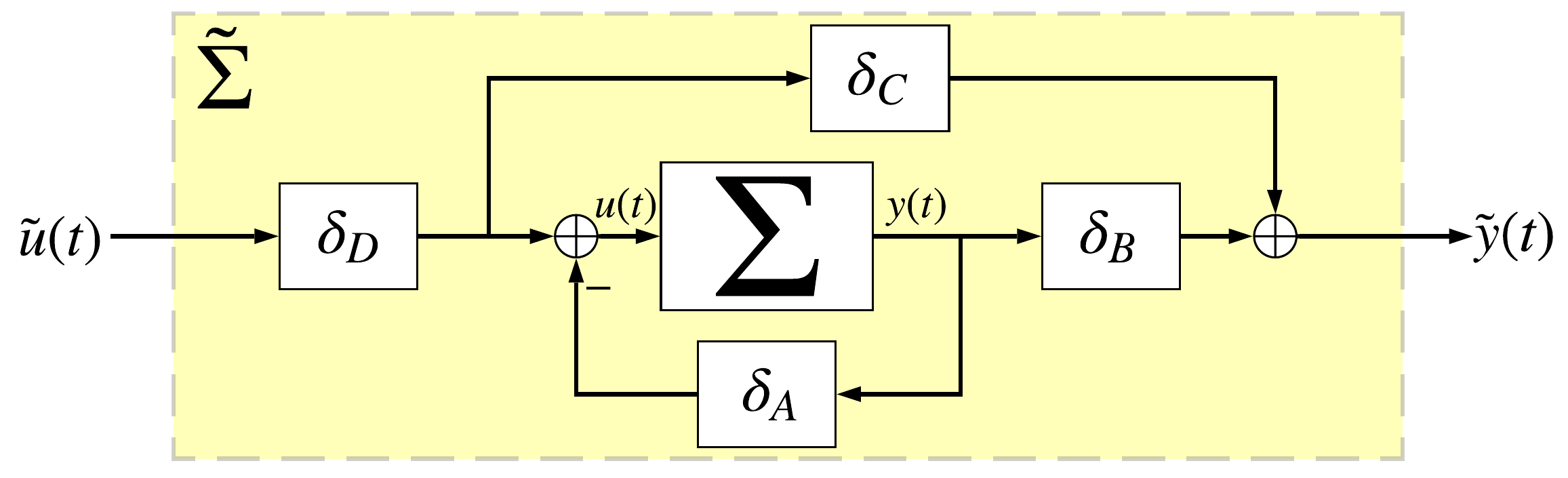}
\caption[The transformed system $\tilde{\Sigma}$ after the linear transformation $T$.]{The transformed system $\tilde{\Sigma}$ after the linear transformation $T$. If $T=\left[\protect\begin{smallmatrix} a & b \\ c & d \protect\end{smallmatrix}\right]$, then $\delta_A = b/a, \delta_B =d-\frac{b}{a}c,  \delta_C = c$ and $\delta_D = a$.} 
\label{Feedback Passivation}\vspace{-5pt}
\end{figure}

\begin{table*}[t]
\caption{Elementary matrices and their realizations}
\begin{center}
\begin{tabular}{ |>{\centering\arraybackslash}p{2.4cm}|>{\centering\arraybackslash}p{2.7cm}|>{\centering\arraybackslash}p{3.7cm}|>{\centering\arraybackslash}p{1.7cm}|>{\centering\arraybackslash}p{3.8cm}| } 
 \hline
 Elementary Transformation & Relation between I/O of $\Sigma$ and $\tilde{\Sigma}$ & Effect on Steady-State Relations& Realization & Effect on Integral Functions\\ 
 \hline
 \hline
 $L_A = \begin{bmatrix} 1 & \delta_A \\ 0 & 1 \end{bmatrix}$ & $\tilde{u} = u + \delta_Ay$ ~~~~~~$\tilde{y} = y$ & ${\lambda}_A^{-1}(\tilde{\mathrm y}) = {k}^{-1}(\tilde{\mathrm y}) + \delta_A\tilde{\mathrm y}$ & output-feedback & $\Lambda^\star(\mathrm y) = K^\star(\mathrm y) + \frac{1}{2}\delta_A \mathrm y^2$ \\ 
 \hline 
$L_B = \begin{bmatrix} 1 & 0 \\ 0 & \delta_B \end{bmatrix}$ & $\tilde{u} = u$ ~~~~~~~~~~ $\tilde{y} = \delta_B y$ & ${\lambda}_B({\mathrm u}) = \delta_B {k}({\mathrm u})$ or ${\lambda}_B^{-1}(\tilde{\mathrm y}) = {k}^{-1}(\frac{1}{\delta_B}\tilde{\mathrm y})$ & post-gain & $\Lambda^\star(\mathrm y) = \frac{1}{\delta_B} K^\star(\frac{1}{\delta_B}\mathrm y)$ or $\Lambda(\mathrm u) = {\delta_B}K(\mathrm u)$\\ 
 \hline
  $L_C = \begin{bmatrix} 1 & 0 \\ \delta_C & 1 \end{bmatrix}$ & $\tilde{u} = u$ ~~~ $\tilde{y} = y + \delta_Cu$ & ${\lambda}_C(\tilde{\mathrm u}) = {k}(\tilde{\mathrm u}) + \delta_C\tilde{\mathrm u}$ & input-feedthrough & $\Lambda(\mathrm u) = K(\mathrm u) + \frac{1}{2}\delta_C \mathrm u^2$\\ 
 \hline
 $L_D = \begin{bmatrix} \delta_D & 0 \\ 0 & 1 \end{bmatrix}$ & $\tilde{u} = \delta_Du$ ~~~~~~~~~~$\tilde{y} = y$ & ${\lambda}_D^{-1}(\mathrm y) = \delta_D{k}^{-1}(\mathrm y)$ or ${\lambda}_D(\tilde{\rm u}) = {k}(\frac{1}{\delta_D}\tilde{\rm u})$ & pre-gain & $\Lambda^\star(\mathrm y) = \delta_D K^\star(\mathrm y)$ or $\Lambda(\mathrm u) = \frac{1}{\delta_D}K(\frac{1}{\delta_D}\mathrm u)$ \\ 
 \hline
\end{tabular}
\label{table}
\end{center}
\end{table*}

\begin{proposition}\label{prop_connection_between_relations}
Let $k$ and $\lambda$ be the steady-state I/O relations of $\Sigma$ and $\tilde{\Sigma}$, respectively, where $\tilde{\Sigma}$ is the result of applying the transformation $T$ in \eqref{eq_transformation} on $\Sigma$, where $\delta_A = b/a, \delta_B =d-\frac{b}{a}c,  \delta_C = c$ and $\delta_D = a$. Assume that $\kappa_{1}$ is the steady-state I/O relation for the system $\Sigma_{1}: u_{1} \mapsto y_{1}$, obtained after the transformation $L_A = \left[\begin{smallmatrix} 1 & \delta_A \\ 0 & 1 \end{smallmatrix}\right]$ on the original system $\Sigma$. Then, the relation between $\lambda$ and $k$ is given by
\begin{equation} \label{eq.prop4_1}
\lambda(\tilde{\rm u}) = \left(d-\frac{b}{a}c\right)\kappa_{1}\left(\frac{1}{a}\tilde{\rm u}\right) + \frac{c}{a}\tilde{\rm u},
\end{equation}
where the inverse of $\kappa_{1}$ is 
\begin{equation} \label{eq.prop4_2}
(\kappa_{1})^{-1}({\rm y}_{1}) = k^{-1}({\rm y}_{1}) + \frac{b}{a}{\rm y}_{1}.
\end{equation}
\end{proposition}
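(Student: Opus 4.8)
The plan is to prove both identities by exploiting the factorization $T = L_D L_C L_B L_A$ recorded in \eqref{eq_transformation} and tracking the steady-state relation through each elementary factor, which is exactly what the third column of Table~\ref{table} summarizes. The principle that underlies every row of that table, and which I would state first, is that at an equilibrium the state is constant, so the defining condition $\pmb{0}_n = f(\mathrm{x},\mathrm{u})$ is untouched by any relabeling of the input and output. Consequently, $(\tilde{\mathrm u},\tilde{\mathrm y})$ is a steady-state pair of a transformed system precisely when $(\tilde{\mathrm u},\tilde{\mathrm y}) = L(\mathrm u,\mathrm y)$ for a steady-state pair $(\mathrm u,\mathrm y)$ of the system before the transformation, $L$ being the associated static map. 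Hence the steady-state relation of the transformed system is the image of the original one under $L$, and specializing $L$ to $L_A, L_B, L_C, L_D$ produces the four entries of Table~\ref{table}.

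For \eqref{eq.prop4_2} I would apply this to the first factor $L_A$, which is an output-feedback with gain $\delta_A = b/a$. The corresponding row of Table~\ref{table} gives $\kappa_{1}^{-1}(\mathrm{y}_1) = k^{-1}(\mathrm{y}_1) + \delta_A \mathrm{y}_1$, which is \eqref{eq.prop4_2} after substituting $\delta_A = b/a$. For \eqref{eq.prop4_1} I would then introduce the intermediate relations obtained by applying $L_B$, then $L_C$, then $L_D$ to $\kappa_{1}$, and chain their effects. Writing $\kappa_2$ for the relation after the post-gain $L_B$, its row gives $\kappa_2(\cdot) = \delta_B\kappa_{1}(\cdot)$; the input-feedthrough $L_C$ then yields $\kappa_3(\cdot) = \kappa_2(\cdot) + \delta_C(\cdot)$; and the pre-gain $L_D$ finally gives $\lambda(\tilde{\mathrm u}) = \kappa_3(\tfrac{1}{\delta_D}\tilde{\mathrm u})$. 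Composing these three and substituting $\delta_B = d - \tfrac{b}{a}c$, $\delta_C = c$, and $\delta_D = a$ reproduces \eqref{eq.prop4_1} verbatim.

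I expect the content of the argument to be routine once the principle above is in place, and that the only real care needed is bookkeeping with set-valued relations rather than any genuine difficulty. Specifically, one must keep straight which rows of Table~\ref{table} are phrased for the relation and which for its inverse (the $L_A$ and $L_D$ steps are most naturally read through inverses), interpret each scaling and translation as an operation on sets so that the identities remain valid when $k$ is not single-valued, and confirm that applying the factors in the order $L_A, L_B, L_C, L_D$ indeed reconstitutes $T = L_D L_C L_B L_A$ acting on $(\mathrm u,\mathrm y)$. No separate maximality or passivity argument is required here, since the claim concerns only the algebraic relation between the steady-state maps $k$ and $\lambda$.
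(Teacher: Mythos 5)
Your proposal is correct and follows essentially the same route as the paper's proof: decompose $T = L_D L_C L_B L_A$, track the steady-state relation through each elementary factor to get $\kappa_1, \kappa_2, \kappa_3$, and compose. The only addition is your explicit statement of the principle that relabeling inputs/outputs leaves the equilibrium condition $\pmb{0}_n = f(\mathrm{x},\mathrm{u})$ untouched, which the paper leaves implicit but which is a welcome clarification.
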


\begin{proof}
Denote the steady-state I/O relations after the first, second, and third elementary matrix transformations, sequentially in \eqref{eq_transformation}, as $\kappa_{1}, \kappa_{2}, \kappa_{3}$, corresponding to the steady-state I/O pairs $({\rm u}_{1}, {\rm y}_{1})$, $({\rm u}_{2}, {\rm y}_{2})$ and $({\rm u}_{3}, {\rm y}_{3})$. The transformation 
$$\left[\begin{matrix}
{\rm u}_{1} \\
{\rm y}_{1}
\end{matrix}\right]
= L_A\left[\begin{matrix}
{\rm u} \\
{\rm y}
\end{matrix}\right]=
\left[\begin{matrix}
    1 & b/a \\
    0 & 1 
\end{matrix}\right]
\left[\begin{matrix}
{\rm u} \\
{\rm y}
\end{matrix}\right],$$
has the steady-state inverse I/O relation $\kappa_{1}^{-1}(\mathrm{y}_{1}) = k^{-1}(\mathrm{y}_{1}) + \frac{b}{a}{\mathrm{y}_{1}}$. 
The second transformation 
$$\left[\begin{matrix}
{\rm u}_{2} \\
{\rm y}_{2}
\end{matrix}\right]
=L_B\left[\begin{matrix}
{\rm u_1} \\
{\rm y_1}
\end{matrix}\right]=
\left[\begin{matrix}
    1 & 0 \\
    0 & d-bc/a 
\end{matrix}\right]
\left[\begin{matrix}
{\rm u}_{1} \\
{\rm y}_{1}
\end{matrix}\right],$$
has the steady-state I/O relation $\kappa_{2}(\mathrm{u}_{2}) = (d - \frac{b}{a}c)\kappa_1(\mathrm{u}_{2})$. The third transformation 
$$\left[\begin{matrix}
{\rm u}_{3} \\
{\rm y}_{3}
\end{matrix}\right]
=L_C\left[\begin{matrix}
{\rm u_2} \\
{\rm y_2}
\end{matrix}\right]=
\left[\begin{matrix}
    1 & 0 \\
    c & 1 
\end{matrix}\right]
\left[\begin{matrix}
{\rm u}_{2} \\
{\rm y}_{2}
\end{matrix}\right],$$
has steady-state I/O relation $\kappa_{3}(\mathrm{u}_{3}) = \kappa_{2}(\mathrm{u}_{3}) + c \mathrm{u}_{3}$. Finally, 
$$\left[\begin{matrix}
\tilde{\rm u} \\
\tilde{\rm y}
\end{matrix}\right]
= L_D\left[\begin{matrix}
{\rm u_3} \\
{\rm y_3}
\end{matrix}\right]=
\left[\begin{matrix}
    a & 0 \\
    0 & 1 
\end{matrix}\right]
\left[\begin{matrix}
{\rm u}_{3} \\
{\rm y}_{3}
\end{matrix}\right],$$
has the steady-state I/O relation $\lambda$ of $\tilde{\Sigma}$, and $\lambda(\tilde{\mathrm u}) = \kappa_{3}(\frac{1}{a}\tilde{\mathrm u})$. Substituting back for $\kappa_{3}$ and for $\kappa_{2}$, we get the result.  
\end{proof}

\begin{example}
Consider the system in Examples~\ref{exam.InputOutputShortage} and \ref{exam.4}. The steady-state 
I/O relation $\lambda$ of $\tilde{\Sigma}$ consists of all pairs $(\tilde{\rm u}, 
\tilde{\rm u}^3)$. We use Proposition~\ref{prop_connection_between_relations} 
to verify this result. According to 
Proposition~\ref{prop_connection_between_relations}, for the given matrix 
transformation $T = \left[\begin{smallmatrix} 1 & 1 \\ 1 & 2 \end{smallmatrix}\right]$, $\lambda$ is 
given by $\lambda(\tilde{\rm u}) = \kappa_1(\tilde{\rm u}) + \tilde{\rm u}$. 
After the first transformation $L_A = \left[\begin{smallmatrix} 1 & 1 \\ 0 & 1 
\end{smallmatrix}\right]$, the steady-state I/O pairs of the system $\Sigma_1$ are ${\rm 
u}_1 = {\rm u} + {\rm y}$, and ${\rm y}_1 = {\rm y}$. Substituting $\mathrm u = 
2\sigma-\sigma^3$, and $\mathrm y=\sigma^3-\sigma$ as obtained in 
Example~\ref{exam.InputOutputShortage} yields ${\rm u}_1 = \sigma$ and hence 
$\kappa_1({\rm u}_1) = {\rm y}_1 = {\rm u}_1^3 - {\rm u}_1$. This implies that 
$\kappa_1(\tilde{\rm u}_1) = {\rm u}_1^3 - {\rm u}_1$, which on substitution 
yields $\lambda(\tilde{\rm u}) = \tilde{\rm u}^3$, as expected.  
\end{example}


As discussed above, in some cases, i.e., when $\rho,\nu \geq 0$, we know the original system possesses integral functions. We can integrate  \eqref{eq.prop4_1} and \eqref{eq.prop4_2}, obtaining a connection between the original and the transformed integral functions. For example, integrating the steady-state equation for output-feedback $\lambda^{-1}(\tilde{\mathrm y}) = k^{-1}(\tilde{\rm y}) + \delta\tilde{\rm y}$ results in $K^{\star}(\tilde{\mathrm y}) = \Lambda^{\star}(\tilde{\rm y}) + \frac{\delta}{2} \tilde{\rm y}^2$, where $K^\star,\Lambda^\star$ are the integral functions of $k^{-1},\lambda^{-1}$ respectively. Similarly, input-feedthrough corresponds to a quadratic term added to the integral function $K$ of $k$, and pre- and post-gain correspond to scaling the integral function. These connections are summarized in Table \ref{table}.

\begin{example} \label{exam.IntegralFunctionsConnection}
Consider Example \ref{exam_EI-OPS}. The steady-state input-output relation for the system is $\mathrm u = k^{-1}(\mathrm y) = \mathrm y^3 - \mathrm y$, so the corresponding integral function is $K^\star(\mathrm y) = 
\frac{1}{4} \mathrm y^4 - \frac{1}{2}\mathrm y^2$. Consider the transformation $T = \left[\begin{smallmatrix} 1 & 1 \\ 0 & 1 \end{smallmatrix}\right]$, or equivalently $\tilde{u} = u+y = u + \sqrt[3]{x}, \tilde{y} = y$, so $u 
= -\sqrt[3]{x} + \tilde{u}$. The transformed system $\tilde{\Sigma}$ has the state-space model $\dot{x} = -x+\tilde{u}, \tilde{y} = 
\sqrt[3]{x}$, which has a steady-state I/O relation of $\tilde{\mathrm 
u} = \lambda^{-1}(\tilde{\mathrm y}) = \tilde{\mathrm y}^3$, and
corresponding integral function is $\Lambda^\star(\tilde{\mathrm y}) = 
\frac{1}{4}\tilde{\mathrm y}^4$. It is evident that $\Lambda^\star(\mathrm y) = 
K^\star(\mathrm y) + \frac{1}{2}\mathrm y^2$, as forecasted by Table \ref{table}. 
\end{example}

The passivation results achieved up to now assumed that the system at hand is EIPS. In the next section, we connect this property to having a finite $\mathcal{L}_2$-gain, showing our results extend \cite{Xia2018}.

\section{Finite $\mathcal{L}_2$-gain and Input-Output Passivity} \label{Finite L2 Gain}
This section establishes a connection between the notion of input-output $(\rho,\nu)$-passivity and the finite $\mathcal{L}_2$-gain property, and compares our results with the existing literature. We further explore these connections for the special case of linear and time-invariant systems and draw some important conclusions.

\subsection{Finite $\mathcal{L}_2$-gain and Input-Output $(\rho,\nu)$-Passivity}

We begin with by recalling the definition of systems with finite $\mathcal{L}_2$-gain.
\begin{definition}
The system $\Sigma:u \mapsto y$ has finite-$\mathcal{L}_2$-gain with respect to the steady-state I/O pair $(\mathrm{u,y})$ if there exists some $\beta > 0$ and a storage function $S$ such that:
\begin{align} \label{L2Ineq}
\dot{S} \le -(y-\mathrm y)^\top(y-\mathrm y) + \beta^2(u - \mathrm u)^\top(u-\mathrm u).
\end{align}
The smallest number $\beta$ satisfying the dissipation inequality is called the $\mathcal{L}_2$-gain of the system $\Sigma$.
\end{definition}
The notion of systems with a finite $\mathcal{L}_2$-gain can also be understood using the operator norm, namely, a system $\Sigma: u \mapsto y$ has a finite $\mathcal{L}_2$-gain if and only if its induced operator norm $\sup_{u\neq 0} \frac{\|\Sigma(u)\|}{\|u\|}$ is finite. In that case, the $\mathcal{L}_2$-gain is equal to the operator norm \cite{Khalil2002}. We now show that any system with a finite $\mathcal{L}_2$-gain is actually input passive-short, and thus included in the collection of input-output $(\rho,\nu)$-passive systems.
\begin{thm}\label{thm_IOPS_Finite_Gain}
Let $\Sigma: u \mapsto y$ be any finite $\mathcal{L}_2$-gain system with respect to the steady-state input-output pair $(\mathrm{u,y})$ with gain $\beta$. Then $\Sigma$ is input $\nu$-passive with respect to $(\mathrm{u,y})$, in the sense of Definition~\ref{defn_passive_short}, where $\nu \le -\left(\beta^2 + \frac{1}{4}\right).$
\end{thm}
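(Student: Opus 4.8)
The plan is to show that the right-hand side of the finite-$\mathcal{L}_2$-gain dissipation inequality \eqref{L2Ineq} is pointwise dominated by the right-hand side of the input $\nu$-passivity inequality \eqref{ISMEIP} whenever $\nu \le -(\beta^2 + \tfrac14)$. Because the \emph{same} storage function $S$ appears in both inequalities, no new storage function need be built: it suffices to establish the purely algebraic inequality
\[
-(y-\mathrm y)^2 + \beta^2(u-\mathrm u)^2 \;\le\; -\nu(u-\mathrm u)^2 + (y-\mathrm y)(u-\mathrm u)
\]
and then chain it with \eqref{L2Ineq} to obtain $\dot S \le -\nu(u-\mathrm u)^2 + (y-\mathrm y)(u-\mathrm u)$, which is exactly \eqref{ISMEIP}.

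First I would introduce the shorthand $\tilde u = u - \mathrm u$ and $\tilde y = y - \mathrm y$ and move all terms to one side, rewriting the target inequality as the quadratic form
\[
0 \;\le\; (-\nu - \beta^2)\,\tilde u^2 + \tilde u\,\tilde y + \tilde y^2 .
\]
This is precisely a PQI in the variables $(\tilde u,\tilde y)$ in the sense of Definition~\ref{def.PQI}, with coefficients $a = -\nu-\beta^2$, $b = 1$, and $c = 1$. The key step is then to determine for which $\nu$ this form is nonnegative for every $(\tilde u,\tilde y)\in\R^2$ (which is sufficient, and is the cleanest certificate, since it covers all trajectory values). Completing the square at the boundary value $\nu = -(\beta^2 + \tfrac14)$ gives $a = \tfrac14$, so the form collapses to $\big(\tfrac12\tilde u + \tilde y\big)^2 \ge 0$ exactly; for any smaller $\nu$ the coefficient of $\tilde u^2$ only grows, so the form stays nonnegative. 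Equivalently, nonnegativity of the quadratic form is characterized by $a\ge 0$, $c\ge 0$, and $b^2 - 4ac \le 0$; the discriminant condition reads $1 + 4(\nu+\beta^2)\le 0$, i.e. $\nu \le -(\beta^2+\tfrac14)$, and this automatically forces $a = -\nu-\beta^2 \ge \tfrac14 > 0$, so all three conditions hold. This yields the claimed bound.

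I do not anticipate a serious obstacle: the heart of the argument is a routine completion of the square, equivalently a $2\times2$ positive-semidefiniteness (discriminant) check. The only point requiring a little care is the logical direction — I am bounding the right-hand side of \eqref{L2Ineq} \emph{from above}, so the implication flows the correct way and the existing storage function certifies input $\nu$-passivity. It is also worth remarking that the resulting $\nu$ is strictly negative, which is consistent with the interpretation that finite-$\mathcal{L}_2$-gain systems are input passive-\emph{short} rather than input passive, and hence lie within the class of input-output $(\rho,\nu)$-passive systems treated in this paper.
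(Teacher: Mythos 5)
Your proposal is correct and follows essentially the same route as the paper: the paper's proof also keeps the $\mathcal{L}_2$ storage function and uses the single algebraic step $\bigl(\tilde y + \tfrac12\tilde u\bigr)^2 \ge 0$, i.e. $-\tilde y^2 \le \tilde u\tilde y + \tfrac14\tilde u^2$, to chain \eqref{L2Ineq} into \eqref{ISMEIP} with $\nu = -(\beta^2+\tfrac14)$. Your discriminant/PSD framing is just a mild repackaging of that same completion of the square.
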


\begin{proof}
Let $S(x)$ be the storage function corresponding to the finite $\mathcal{L}_2$-gain system $\Sigma$. By assumption, we know that for any trajectory $(u(t),x(t),y(t))$, the following inequality holds:
\begin{align*}
\frac{dS}{dt}(x) \le -\|y(t)-\mathrm y\|^2 + \beta^2 \|u(t) - \mathrm u\|^2.
\end{align*}
We note that $\|y(t)-\mathrm y+0.5(u(t) - \mathrm u)\|^2 \ge 0$, implying that $-\|y(t)-\mathrm y\|^2 \le (u(t)-\mathrm u)^\top (y(t)-\mathrm y)+0.25\|u(t)-\mathrm u\|^2$. Thus, we conclude that
\begin{align*}
\frac{dS}{dt}(x) &\le -\|y(t)-\mathrm y\|^2 + \beta^2 \|u(t)-\mathrm u\|^2 \\
&\le (u(t)-\mathrm u)^\top (y(t)-\mathrm y) + \left(\beta^2 +\frac{1}{4}\right)\|u(t)-\mathrm u\|^2,
\end{align*}
implying that $\Sigma$ is input $\nu$-passive with respect to $(\mathrm{u,y})$. This concludes the proof of the claim.
\end{proof} 
\begin{remark}
One can easily check that the above result is not true in the opposite direction, that is, if the system $\Sigma$ is EI-IP($\nu$) for some $\nu$, it does not necessarily have a finite $\mathcal{L}_2$-gain. Thus, the consideration of EIPS system is more general when compared to finite-$\mathcal{L}_2$-gain systems as in \cite{Xia2018}. Subsection \ref{exam.LTI_RelDeg} gives an example of a system which is EIPS
but neither input passive-short, output passive-short, nor does it have a finite $\mathcal{L}_2$-gain.
\end{remark}

\begin{remark}
Systems with a finite $\mathcal{L}_2$-gain have in important use in approximation theory. In many examples, we do not have an exact model for a system $\Sigma$, but instead we are given a model for an approximate model $\Sigma_0$ and a bound on the approximation error $\Sigma-\Sigma_0$, usually in terms of its $\mathcal{L}_2$-gain. In this case, proving that $\Sigma_0$ satisfies some dissipation inequality might be easy, but trying to directly find such an inequality satisfied by $\Sigma$ can be an arduous task. However, \cite{Xia2017} describes a method to prove a dissipation inequality for $\Sigma$ using a dissipation inequality for $\Sigma_0$ and an estimate on the $\mathcal{L}_2$-gain of the approximation error $\Sigma-\Sigma_0$. The achieved dissipation inequality might be very conservative, but we can still apply Algorithm \ref{alg.passivation}, as it does not need the exact passivity indices, but only some bound on them. In particular, the presented approach works even when we are only given an approximation of the true system.
\end{remark}

\subsection{Equilibrium-Independent Passive Shortage and Linear and Time-Invariant Systems}
This subsection drives an important result for the linear and time-invariant systems (LTI) relating their transfer function and passivity indices. LTI systems are of special interest for equilibrium-independent notions of passivity, as they are equivalent to the corresponding classical notions of passivity with respect to the steady-state pair $(0,0)$. For example, the proof of Theorem \ref{eq.thm8} below shows that an LTI system is EI-IOP($\rho,\nu$) if and only if it is input-output $(\rho,\nu)$-passive with respect to the steady-state $(0,0)$, if and only if the associated transfer function is input-output $(\rho,\nu)$-passive. This theorem shows that a vast class of LTI systems are EIPS, and calculates a bound on their passivity indices.

\begin{thm} \label{eq.thm8}
Let $\Sigma$ be a linear time-invariant system, and let $G(s) = \frac{p(s)}{q(s)}$ be the corresponding transfer function, where we assume that $p(s),q(s)$ are coprime and that $\deg p\le \deg q$. Suppose that there exists some $\lambda \in \mathbb{R}$ such that $q(s) + \lambda p(s)$ is a stable polynomial, i.e., all of its roots are in the open left-half plane, with degree equal to $\deg q$.
Define 
\begin{align}\label{transformed_l2gain}
\mu = \sup_{\omega \in \mathbb{R}} \left|\frac{p(j\omega)}{q(j\omega) + \lambda p(j\omega)}\right|^2+\frac{1}{4}.
\end{align} Then $\Sigma$ is EI-IOP($\rho,\nu$), where $\rho = -\frac{\lambda(1+\lambda\mu)}{1+2\lambda\mu}$ and $\nu = - \frac{\mu}{1+2\lambda\mu}$.
\end{thm}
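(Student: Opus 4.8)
The plan is to reduce the equilibrium-independent statement to a single-equilibrium one and then route everything through the finite-$\mathcal{L}_2$-gain result, Theorem~\ref{thm_IOPS_Finite_Gain}. First I would observe that for an LTI system the deviation variables about any forced steady state $(\mathrm u,\mathrm y)$ obey exactly the same LTI dynamics as the system about $(0,0)$; hence EI-IOP($\rho,\nu$) is equivalent to input-output $(\rho,\nu)$-passivity with respect to $(0,0)$, and it suffices to exhibit a single quadratic storage function $S(x)=\tfrac12 x^\top P x$ with $P\ge 0$ verifying \eqref{IOSMEIP} at the origin. This is the linearity observation promised in the paragraph preceding the statement.

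Next I would apply the output-feedback transformation of gain $\lambda$ (the elementary move $L_A$ of Table~\ref{table}), producing the system $\tilde\Sigma$ with transfer function $H(s)=\frac{p(s)}{q(s)+\lambda p(s)}=\frac{G(s)}{1+\lambda G(s)}$. The hypothesis that $q+\lambda p$ is stable of degree $\deg q\ge \deg p$ makes $H$ stable and proper, so $\tilde\Sigma$ has finite $\mathcal{L}_2$-gain $\beta=\sup_{\omega}\bigl|H(j\omega)\bigr|$, with $\beta^2=\mu-\tfrac14$ by \eqref{transformed_l2gain}; the bounded-real lemma supplies the corresponding $P\ge 0$ and its storage function. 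Theorem~\ref{thm_IOPS_Finite_Gain} then gives that $\tilde\Sigma$ is input $\nu$-passive with $\nu\le-(\beta^2+\tfrac14)=-\mu$, i.e.\ it is EI-IOP$(0,-\mu)$, so for the deviation variables $\Delta v=v-\mathrm v$, $\Delta y=y-\mathrm y$ we have $\dot S\le \mu\,\Delta v^2+\Delta v\,\Delta y$.

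Finally I would transport this inequality back to $\Sigma$. Since the feedback gives $v=u+\lambda y$, substituting $\Delta v=\Delta u+\lambda\Delta y$ yields
\[
\dot S\le \mu\,\Delta u^2+(1+2\lambda\mu)\,\Delta u\,\Delta y+(\lambda+\lambda^2\mu)\,\Delta y^2 .
\]
The right-hand side is a PQI in $(\Delta u,\Delta y)$, but its cross-term coefficient is $1+2\lambda\mu$ rather than $1$; to match the normalization of \eqref{IOSMEIP} I would rescale the storage function, replacing $S$ by $S/(1+2\lambda\mu)$, which produces exactly $\rho=-\frac{\lambda(1+\lambda\mu)}{1+2\lambda\mu}$ and $\nu=-\frac{\mu}{1+2\lambda\mu}$. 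A direct check shows $\rho\nu=\frac{\lambda\mu(1+\lambda\mu)}{(1+2\lambda\mu)^2}<\frac14$ (it collapses to $0<1$), so the pair is admissible.

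I expect the rescaling to be the delicate point: dividing by $1+2\lambda\mu$ preserves the direction of the inequality only when $1+2\lambda\mu>0$ (automatic for $\lambda\ge0$, since $\mu\ge\tfrac14$, and requiring attention when $\lambda<0$). In the language of Section~\ref{Monotonization} this is precisely the requirement that the double-cone image of the PQI under $L_A$ renormalizes with a positive constant, matching the ``$T_1$ versus $T_2$'' dichotomy of Theorem~\ref{thm.MappingPQIs}; confirming the correct sign, and that the rescaled $P/(1+2\lambda\mu)$ stays positive semidefinite, is the main thing to nail down. The remaining items — the LTI deviation reduction and the finite-gain storage-function bookkeeping — are routine.
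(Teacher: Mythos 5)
Your proposal follows essentially the same route as the paper's proof: reduce to the equilibrium at the origin by linearity, close the loop with output feedback of gain $\lambda$ to obtain the stable, proper transfer function $p/(q+\lambda p)$, combine the finite-$\mathcal{L}_2$-gain dissipation inequality with the completion-of-squares bound $-\bar y^2\le \tilde{\bar u}\,\bar y+\tfrac14\tilde{\bar u}^2$ (which the paper derives inline rather than via Theorem~\ref{thm_IOPS_Finite_Gain}, but it is the identical computation), substitute $\tilde{\bar u}=\bar u+\lambda\bar y$, and rescale the storage function by $1/(1+2\lambda\mu)$. The caveat you flag --- that the rescaling preserves the inequality direction and the nonnegativity of the storage function only when $1+2\lambda\mu>0$, which can fail for $\lambda<0$ --- is a genuine issue, but the paper's proof performs the same division without comment, so on this point your attempt is if anything more careful than the published argument.
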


\begin{proof}
Let $(\mathrm u,\mathrm y)$ be a steady-state input-output pair of the system, so that $\mathrm y = G(0)\mathrm u$. The system $\Sigma$ is input-output $(\rho,\nu)$-passive with respect to $(\mathrm u,\mathrm y)$ if and only if the corresponding operator $\Sigma_{\rm shifted}: \bar{u} \mapsto \bar{y}$ is input-output $(\rho,\nu)$-passive, where $\bar{u} = u-\mathrm u$ and $\bar{y} = y - \mathrm{y}$. If we let $(A,B,C,D)$ be a state-space representation of $G(s)$, then the operator $\Sigma_{\rm shifted}$ has the following (shifted) state-space realization:
\begin{align*}
\dot{x} = Ax + B(u-\mathrm u) ; ~~y = Cx + D(u - \mathrm u) + \mathrm y.
\end{align*}
Recalling that $G(0) = -CA^{-1}B + D$ and $\mathrm y = G(0)\mathrm u$, we conclude $\Sigma_{\rm shifted}$ is also linear and time-invariant, and its transfer function is equal to $G(s)$.

We now let $\tilde\Sigma_{\rm shifted}$ be the interconnection of the system $\Sigma_{\rm shifted}$ with a negative output-feedback with gain equal to $\lambda$. It is straightforward to show that $\tilde \Sigma_{\rm shifted}$ is also an LTI system, and its transfer function is $\tilde{G}(s)  = \frac{p(s)}{q(s)+\lambda p(s)}$. By assumption, all poles of the denominator are in the open left-half plane, and the degree of the numerator is bounded by the degree of the denominator. Thus, $\tilde \Sigma_{\rm shifted}$ has a finite $\mathcal{L}_2$-gain with respect to the origin, equal to $\kappa = \sup_{\omega \in \mathbb{R}} |\tilde G(j\omega)|$ \cite{Khalil2002}. We denote the input of the new system by $\tilde{\bar{u}} = \bar{u} - \lambda \bar{y}$.

Let $S(x)$ be the storage function corresponding to $\tilde \Sigma_{\rm shifted}$. We take an arbitrary trajectory $(\bar{u}(t),x(t),\bar{y}(t))$ of $\Sigma$ and consider the corresponding trajectory $(\tilde{\bar{u}}(t),x(t),\bar{y}(t))$ for $\tilde{\Sigma}_{\rm shifted}$, where $\bar{u}(t) = \tilde{\bar{u}}(t) - \lambda \bar{y}(t)$. As $\tilde \Sigma_{\rm shifted}$ has a finite $\mathcal{L}_2$-gain equal to $\kappa$, the following inequality holds:
\begin{align}\label{eq.Thm8Eq1}
\dot{S}(x) \le -\bar{y}(t)^2 + \kappa^2 \tilde{\bar{u}}(t)^2.
\end{align}
We note that $(\bar{y}(t)+0.5\tilde{\bar{u}}(t))^2 \ge 0$, so $-\bar{y}(t)^2 \le \tilde{\bar{u}}(t)\bar{y}(t) + 0.25\tilde{\bar{u}}(t)^2$. By plugging it into \eqref{eq.Thm8Eq1}, and recalling that $\kappa ^2 + 0.25 = \mu$ (by \eqref{transformed_l2gain}), we conclude that:
\begin{align*}
\dot{S}(x) \le &~ \tilde{\bar{u}}\bar{y} + \mu \tilde{\bar{u}}^2 = (\bar{u} + \lambda \bar{y})\bar{y} + \mu (\bar{u} +\lambda \bar{y})^2 \\= &  ~
\bar{u}\bar{y} + \lambda \bar{y}^2 + \mu \bar{u}^2 + 2\lambda\mu \bar{u}\bar{y} + \mu\lambda^2 \bar{y} \\=& ~ (1+2\mu\lambda)\bar{u}{y} + \mu \bar{u}^2 +(\lambda + \mu\lambda^2) \bar{y}^2 \\=&~  (1+2\mu\lambda)(\bar{u}\bar{y} - \nu \bar{u}^2 - \rho \bar{y}^2).
\end{align*}
Choosing the storage function $R(x) = S(x)/(1+2\mu\lambda)$, as well as recalling that $\bar{u} = u - \mathrm u$ and $\bar{y} = y - \mathrm y$, shows that $\Sigma$ is input-output ($\rho,\nu$)-passive with respect to the input-output steady-state pair $(\mathrm u,\mathrm y)$. As the steady-state pair was arbitrary, we conclude $\Sigma$ is EI-IOP($\rho,\nu$) with the passivity indices as defined in the statement of theorem.
\end{proof}

Recall that in Section~\ref{Passivation}, we presented a method of taking an EIPS system and transforming it to another system which is passive with respect to all equilibria. In the following section, we deal with the last ingredient missing for MEIP, namely maximality of the acquired monotone relation.

\section{Maximality of Input-Output Relations and the Network Optimization Framework}\label{Cursive}
As we saw, the map $T$ monotonizes the steady-state relation $k$, i.e., the steady-state input-output relation $\lambda$ of 
the transformed agent $\tilde{\Sigma}$ is monotone. However, it does not guarantee that $\lambda$ is \emph{maximally monotone}, which is essential for applying Theorem \ref{thm_network_flow_problems}. In this section, we explore a possible way to assure that $\lambda$ is maximally monotone, under which we prove a version of Theorem \ref{thm_network_flow_problems} for EIPS systems.

\begin{definition}[Cursive Relations] \label{defn_cursive_relation}
A set $A \subset \mathbb{R}^2$ is called \emph{cursive} if there exists a curve \footnote{A curve is a continuous map from a (possibly infinite) interval in $\mathbb{R}$ to $\mathbb{R}^2$.} $\alpha:\mathbb{R}\to\mathbb{R}^2$ such that the following conditions hold:
\begin{itemize}
\item[i)] The set $A$ is the image of $\alpha$. 
\item[ii)] The map $\alpha$ is continuous.
\item[iii)] $\lim\limits_{|t|\to\infty} \|\alpha(t)\| = \infty$, where $\|\cdot\|$ is the Euclidean norm.
\item[iv)] $\{t\in\mathbb{R}:\ \exists s\neq t,\ \alpha(s)=\alpha(t)\}$ has measure zero.
\end{itemize}
A relation $\varUpsilon$ is called cursive if the set $\{(p, q) \in \mathbb{R}^2:\ q \in \varUpsilon(p)\}$ is cursive.
\end{definition}
Intuitively speaking, a relation is cursive if it can be drawn on a piece of paper without lifting the pen. The third requirement demands that the drawing will be infinite (in both time directions), and the fourth allows the pen to cross its own path, but forbids it from going over the same line twice. This intuition is the reason we call these relations cursive relations.

Under the assumption that the steady-state I/O relation $k$ of $\Sigma$ is cursive (which is usually the case for dynamical systems of the form \eqref{system_model_single}), we prove the maximality of $\lambda$:

\begin{thm} \label{thm.Maximality}
Let $k$,$\lambda$ be the steady-state I/O relations of the original system $\Sigma$ and the transformed system $\tilde{\Sigma}$ under the transformation $T$, respectively. Suppose $k$ is a cursive relation and $T$ is chosen to monotonize $k$ as in Theorem~\ref{thm.MappingPQIs}. Then,
\begin{enumerate}
\item[i)] $\lambda$ is a maximally monotone relation, and 
\item[ii)] $\tilde{\Sigma}$ is MEIP.
\end{enumerate}
Moreover, if $\lambda$ is a strictly monotone relation, then $\tilde{\Sigma}$ is input-strictly MEIP, and if $\lambda^{-1}$ is a strictly monotone relation, then $\tilde{\Sigma}$ is output-strictly MEIP.
\end{thm}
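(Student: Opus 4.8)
The plan is to reduce both claims to a single geometric fact: \emph{a monotone relation that is cursive is automatically maximally monotone}. Granting this, part~(ii) and the ``moreover'' statement are immediate. By Proposition~\ref{prop.MonInducedPassivation}, the transformation $T$ that monotonizes $k$ renders $\tilde{\Sigma}$ input-output $(0,0)$-passive with respect to every equilibrium, i.e.\ passive; together with the maximal monotonicity of $\lambda$ from part~(i), Definition~\ref{def_MEIP} gives that $\tilde{\Sigma}$ is MEIP. If instead $T$ is chosen to enforce the strict PQI $-\nu'\xi^2+\xi\chi\ge 0$ with $\nu'>0$ (resp.\ $\xi\chi-\rho'\chi^2\ge 0$ with $\rho'>0$), the same proposition yields input-strict (resp.\ output-strict) passivity, which combined with maximality upgrades the conclusion to input-strictly (resp.\ output-strictly) MEIP. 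So the whole theorem rests on part~(i).

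First I would record that $\lambda$ is both monotone and cursive. Monotonicity is exactly the statement that $T$ monotonizes $k$ (Theorem~\ref{thm.MappingPQIs} together with Proposition~\ref{prop2}). For cursiveness, let $\alpha:\mathbb{R}\to\mathbb{R}^2$ witness that $k$ is cursive and set $\beta = T\circ\alpha$. Since $T$ is an invertible linear map, $\beta$ is continuous with image $T(k)=\lambda$; moreover $\|\beta(t)\|\ge \|\alpha(t)\|/\|T^{-1}\|\to\infty$ as $|t|\to\infty$, and because $T$ is injective one has $\beta(s)=\beta(t)\iff\alpha(s)=\alpha(t)$, so $\beta$ and $\alpha$ have the same self-intersection set, of measure zero. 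Hence $\beta$ witnesses that $\lambda$ is cursive in the sense of Definition~\ref{defn_cursive_relation}.

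Now I would prove maximality by contradiction. Suppose $\lambda$ is not maximal; then there is a point $(p_0,q_0)\notin\lambda$ with $\lambda\cup\{(p_0,q_0)\}$ monotone, i.e.\ $(p-p_0)(q-q_0)\ge 0$ for every $(p,q)\in\lambda$. This places $\lambda$ inside the union of the two closed quadrants $R_+=[p_0,\infty)\times[q_0,\infty)$ and $R_-=(-\infty,p_0]\times(-\infty,q_0]$, which meet only at $(p_0,q_0)$. Deleting that corner splits $R_+\cup R_-$ into two separated connected pieces; since $\lambda=\beta(\mathbb{R})$ is connected and avoids $(p_0,q_0)$, it lies entirely in one of them, say $\lambda\subseteq R_+$ (the case $R_-$ being symmetric). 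Thus $p(t)\ge p_0$ and $q(t)\ge q_0$ for all $t$, where $\beta(t)=(p(t),q(t))$.

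The final step, and the main obstacle, is to extract a contradiction with the measure-zero self-intersection condition. The key observation is that monotonicity makes $\lambda$ a chain for the product order $(p_1,q_1)\preceq (p_2,q_2)\iff p_1\le p_2,\ q_1\le q_2$, and on such a chain the linear functional $g=p+q$ is strictly order-preserving, hence injective on $\lambda$. Consequently $\beta(s)=\beta(t)\iff g(\beta(s))=g(\beta(t))$, so $\beta$ and the scalar map $t\mapsto g(\beta(t))$ share the same self-intersection set. Because $\lambda\subseteq R_+$ the sum $g$ is bounded below, while $\|\beta(t)\|\to\infty$ forces $g(\beta(t))\to+\infty$ as $t\to\pm\infty$; thus $g\circ\beta$ is a continuous scalar function tending to $+\infty$ at both ends. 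Such a function attains a global minimum, and by the intermediate value theorem every value strictly above this minimum is attained at least twice. Hence the nonempty open set where $g\circ\beta$ exceeds its minimum lies in the self-intersection set, giving it positive measure and contradicting condition~(iv) of Definition~\ref{defn_cursive_relation}. This contradiction proves $\lambda$ is maximally monotone, completing part~(i). The crux is precisely this reduction from a planar monotone curve to a one-dimensional function that is unbounded at both ends, where cursiveness is exactly what prevents the relation from terminating before reaching a vertical or horizontal asymptote.
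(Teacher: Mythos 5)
Your proposal is correct, and its overall architecture coincides with the paper's: both reduce the theorem to the lemma that a cursive monotone relation is maximally monotone, both establish cursiveness of $\lambda$ by pushing the witnessing curve of $k$ forward through the invertible map $T$ (with the same singular-value/injectivity bookkeeping), and both then handle the MEIP and strict-MEIP claims by invoking Proposition~\ref{prop.MonInducedPassivation} with $(\rho',\nu')=(0,0)$, respectively $\nu'>0$ or $\rho'>0$. The genuine difference is in the last step of the lemma. The paper, after confining the relation to one closed quadrant $R_+$, asserts that the curve can be reparameterized so that both coordinates are non-decreasing, deduces that both coordinates admit finite limits as $t\to-\infty$, and contradicts condition~(iii) of Definition~\ref{defn_cursive_relation} (the norm blow-up). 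You instead exploit that monotonicity makes $\lambda$ a chain for the product order, so the functional $g(p,q)=p+q$ is injective on $\lambda$; since $g\circ\beta$ is bounded below on $R_+$ and tends to $+\infty$ at both ends by condition~(iii), an intermediate-value argument shows every value above its minimum is hit twice, so the self-intersection set contains a nonempty open set, contradicting condition~(iv). Your route buys something concrete: it avoids the paper's ``it is clear that we can reparameterize'' step, which is the one genuinely non-trivial assertion in the published argument (a priori the curve could wander within the chain in a non-monotone way), replacing it with the elementary observation that a strictly order-preserving scalar functional linearizes the problem. The paper's route is shorter when the reparameterization is granted. Both are valid; yours is the more self-contained of the two.
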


Before proving the theorem, we prove the following lemma.

\begin{lemma}	\label{lem_cursive_monotone_ralation}
A cursive monotone relation $\varUpsilon$ must be maximally monotone.
\end{lemma}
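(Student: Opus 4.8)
The plan is to exploit the special structure of monotone relations in the plane: such a relation is \emph{totally ordered} by the coordinatewise order $(p,q)\preceq(p',q') \iff p\le p',\,q\le q'$. Indeed, $(p_1-p_2)(q_1-q_2)\ge 0$ holds exactly when $(p_1,q_1)$ and $(p_2,q_2)$ are $\preceq$-comparable, so monotonicity of $\varUpsilon$ is equivalent to saying that $\varUpsilon$ is a chain. The first thing I would record is that along this chain the linear functional $g(p,q):=p+q$ is \emph{strictly} increasing; hence $g$ is injective on $\varUpsilon$, i.e. distinct points of $\varUpsilon$ have distinct sums. This injectivity is the mechanism that will convert coincidences of $g$-values into genuine self-intersections of the describing curve.

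Next, let $\alpha=(p(\cdot),q(\cdot))$ realize $\varUpsilon$ as in Definition~\ref{defn_cursive_relation}, and set $h:=g\circ\alpha:\R\to\R$, which is continuous by (ii). Using the chain structure together with (iii) I would show $|h(t)|\to\infty$ as $|t|\to\infty$. Fixing a reference point $r=(p_r,q_r)\in\varUpsilon$, every point of $\varUpsilon$ is $\preceq$- or $\succeq$-comparable to $r$; for a point lying above $r$ one has $p+q-(p_r+q_r)=\|(p,q)-r\|_1\ge \|(p,q)\|-\|r\|$, and symmetrically for points below $r$. Thus a large norm forces a correspondingly large $|p+q|$, so $\|\alpha(t)\|\to\infty$ gives $|h(t)|\to\infty$ at both ends of $\R$.

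The crux of the argument, and the step I expect to be the main obstacle, is to exclude that $h$ tends to the \emph{same} infinity at both ends; this is precisely where condition (iv) is essential. Suppose, say, $h(t)\to+\infty$ as $t\to\pm\infty$. Then $h$ attains a global minimum $a$ at some $t_\ast$, and by the intermediate value theorem every value $v>a$ is attained at least once on each side of $t_\ast$, hence at two distinct parameters $t_1\neq t_2$. Since $g$ is injective on $\varUpsilon$, $h(t_1)=h(t_2)$ forces $\alpha(t_1)=\alpha(t_2)$, a true self-intersection. Consequently the open, nonempty set $\{t:h(t)>a\}$ consists entirely of self-intersection parameters and has positive measure, contradicting (iv). The case $h(t)\to-\infty$ at both ends is ruled out identically. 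This is the only place the ``no-retracing'' requirement does real work, matching the intuition that the pen cannot turn back and must run off to both infinities.

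It follows that $h\to+\infty$ at one end and $-\infty$ at the other, so by continuity $g(\varUpsilon)=h(\R)=\R$; in particular $g$ is unbounded above and below on $\varUpsilon$. To finish I would argue maximality directly. If $\varUpsilon$ were not maximal, there would be $(p_0,q_0)\notin\varUpsilon$ with $\varUpsilon\cup\{(p_0,q_0)\}$ monotone, i.e. every point of $\varUpsilon$ lies in the closed union of the first and third quadrants based at $(p_0,q_0)$, which meet only at the corner $(p_0,q_0)$. Since that corner is not in $\varUpsilon$, the two closed preimages under $\alpha$ are disjoint and cover $\R$; connectedness of $\R$ then confines $\varUpsilon$ to a single quadrant, bounding $p+q$ from one side and contradicting the unboundedness of $g$ just established. (Equivalently, unboundedness of $g$ in both directions is exactly surjectivity of $\mathrm{Id}+\varUpsilon$, whence maximality is immediate from Minty's characterization.) Hence $\varUpsilon$ is maximally monotone.
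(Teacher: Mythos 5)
Your proof is correct, and it takes a genuinely different route from the paper's. The paper argues by contradiction: if a point $(p_0,q_0)$ could be appended, connectedness of the image confines $\varUpsilon$ to one of the two closed quadrants based at $(p_0,q_0)$, and then --- after asserting that the curve can be re-chosen so that both coordinates $\alpha_1,\alpha_2$ are non-decreasing --- it deduces that $\|\alpha(t)\|$ stays bounded as $t\to-\infty$, contradicting condition (iii). You instead exploit that a planar monotone relation is a chain for the coordinatewise order, on which $g(p,q)=p+q$ is strictly increasing and hence injective; your intermediate-value argument then shows that $h=g\circ\alpha$ cannot tend to the same infinity at both ends, since every value above the minimum would be attained at two distinct parameters, and injectivity of $g$ on $\varUpsilon$ converts this into a nonempty open (hence positive-measure) set of self-intersection times, violating (iv). Consequently $g$ maps $\varUpsilon$ onto all of $\R$, and maximality follows either from your quadrant-plus-connectedness argument or immediately from Minty's criterion, since surjectivity of $g$ on $\varUpsilon$ is exactly surjectivity of $\mathrm{Id}+\varUpsilon$. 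What your route buys is twofold: it makes explicit where condition (iv) actually does work --- the paper hides this inside the unproved reparametrization claim that $\alpha$ can be chosen coordinatewise monotone, which is the genuinely delicate step of its argument --- and it yields the stronger intermediate conclusion that $\mathrm{Id}+\varUpsilon$ has full range, which is of independent interest. The paper's proof is shorter, but yours is the more airtight of the two.
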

\begin{proof}
Let $A_\varUpsilon \subseteq \mathbb{R}^2$ be the set associated with $\varUpsilon$, which is cursive by assumption. Let $\alpha$ be the corresponding curve. If $\varUpsilon$ is not maximal, there is a point $(p_0, q_0) \notin A_\varUpsilon$ so that $\varUpsilon \cup \{(p_0, q_0)\}$ is a monotone relation. By monotonicity,
\begin{align*}
A_\varUpsilon \subseteq \{(p, q) \in \mathbb{R},\ (p \ge p_0\text{  and }q \ge q_0) \text{ or }\\ (p\le p_0\text{ and }q\le q_0), (p, q) \neq (p_0, q_0)\}.
\end{align*}
The set on the right hand side has two connected components, namely $\{(p,q):\ p\ge p_0, q\ge q_0, (p,q)\neq(p_0,q_0)\}$ and $\{(p,q):\ p\le p_0, q\le q_0, (p,q)\neq(p_0,q_0)\}$. Since $A_\varUpsilon$ is the image of a continuous map $\alpha$,  it is contained in one of these connected components. Suppose, without loss of generality, it is contained in  $\{(p,q):\ p\ge p_0, q\ge q_0, (p,q)\neq(p_0,q_0)\}$.
It is clear that we can choose the curve $\alpha(t)=(\alpha_1(t),\alpha_2(t))$ so that both functions $\alpha_1,\alpha_2$ are non-decreasing, as $\varUpsilon$ is monotone. Thus, we must have $\alpha_1(0) \ge \lim_{t\to -\infty} \alpha_1(t) \ge p_0 ,\ \alpha_2(0) \ge \lim_{t\to -\infty} \alpha_2(t) \ge q_0$. However, these inequalities imply that $\|\alpha(t)\|=\sqrt{\alpha_1(t)^2+\alpha_2(t)^2}$ remains bounded as $t\to -\infty$. This contradicts the assumption that $\varUpsilon$ was cursive, hence it must be maximally monotone.
\end{proof}

We are now ready to prove Theorem \ref{thm.Maximality}.
\begin{proof}
By definition of MEIP and Lemma~\ref{lem_cursive_monotone_ralation}, it is enough to show that if $k$ is cursive, then so is $\lambda$. Let $\mathcal{A}_{k}$ be the set associated with $k$, and $\mathcal{A}_{\lambda}$ be the set associated with $\lambda$. Note that $(\tilde{\mathrm u},\tilde{\mathrm y})$ is a steady-state of $\tilde{\Sigma}$ if and only if $({\mathrm u},{\mathrm y})$ is a steady-state of $\Sigma$, where the I/O pairs are related by the transformation $T$. Thus, $\mathcal{A}_{\lambda}$ is the image of $\mathcal{A}_{k}$ under the invertible linear map $T$. Since $k$ is cursive, we have an associated curve $\alpha:\mathbb{R}\to\mathbb{R}^2$ plotting $\mathcal{A}_{k}$. We define the curve $\beta(t)=T(\alpha(t))$. We claim that the curve $\beta$ proves that $\mathcal{A}_{\lambda}$, and hence $\lambda$, 
is cursive. Indeed, it is clear that $\mathcal{A}_{\lambda}$ is the image of $\beta$. Furthermore, $\beta$ is continuous as a composition of the continuous maps $T$ and $\alpha$. The third property in Definition~\ref{defn_cursive_relation} holds as 
$\lim_{|t|\to\infty}||\beta(t)|| \ge \lim_{|t|\to\infty}\underline{\sigma}(T)||\alpha(t)||=\infty$, where we note that $T$ is invertible, hence $\underline{\sigma}(T)$, the minimal singular value of $T$, is positive. Lastly, the fourth property in 
Definition~\ref{defn_cursive_relation} holds as $\beta(t)=\beta(s)$ if and only if $\alpha(t)=\alpha(s)$, as $T$ is invertible. Thus, the set $\{t:\ \exists s\neq t, \beta(t)=\beta(s)\}$ is the same as the one for $\alpha$, having measure zero. 

Lastly, we need to show that if $\lambda$ is strictly monotone, then $\tilde{\Sigma}$ is strictly MEIP. A strictly monotone relation $\lambda$ is achieved when taking $\nu^\prime > 0, \rho^\prime \ge 0$ in Proposition \ref{prop.MonInducedPassivation}, so we conclude that $\tilde{\Sigma}$ is EI-IOP($0,\nu^\prime$) for some $\nu^\prime > 0$, and thus input-strictly MEIP as its input-output relation, $\lambda$, is maximally monotone. The case in which $\lambda^{-1}$ is strictly monotone is dealt similarly.
\end{proof}

Before moving to the network optimization framework, we wonder how common are cursive relations. Obviously, all stable linear systems have cursive steady-state I/O relations, as their steady-state I/O relations form a line inside $\R^2$. As a more general example, we prove the following proposition for a class of input-affine nonlinear systems:
\begin{proposition}
Consider the system $\Upsilon$ governed by the ODE $\dot{x} = -f(x) + g(x)u,\ y=h(x)$ for some $\mathcal{C}^{1}$ smooth functions $f,g$ and a continuous function $h$ such that $g>0$. Assume that either $f/g$ or $h$ is strictly monotone ascending, and that either $\lim_{s\to\pm\infty} |h(s)| = \infty$ or $\lim_{s\to\pm\infty} |f(s)/g(s)| = \infty$. Then the system $\Upsilon$ has a cursive steady-state I/O relation.
\end{proposition}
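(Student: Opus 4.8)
The plan is to parameterize the steady-state relation by the scalar state variable and then verify the four conditions of Definition~\ref{defn_cursive_relation} one at a time. Because $g > 0$, setting $\dot{x} = 0$ in $\dot{x} = -f(x) + g(x)u$ gives $u = f(x)/g(x)$, while $y = h(x)$. I would therefore define the curve $\alpha:\mathbb{R}\to\mathbb{R}^2$ by $\alpha(s) = \left(f(s)/g(s),\, h(s)\right)$, taking $s$ to be the state. The image of $\alpha$ is exactly the set $\mathcal{A}_k$ associated with the steady-state relation $k$: for any $s$, choosing the input $f(s)/g(s)$ makes $s$ an equilibrium with output $h(s)$, and conversely every steady-state pair is of this form. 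This establishes property (i).

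For property (ii), $\alpha$ is continuous since $f,g$ are $\mathcal{C}^1$ with $g>0$ (so $f/g$ is continuous) and $h$ is continuous by hypothesis. For property (iii), I would use the bound $\|\alpha(s)\| \ge \max\{|f(s)/g(s)|,\, |h(s)|\}$ together with the standing assumption that one of $|h(s)|$ or $|f(s)/g(s)|$ tends to $\infty$ as $|s|\to\infty$; whichever coordinate diverges forces $\|\alpha(s)\|\to\infty$, which is precisely condition (iii).

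Property (iv) is where the \emph{strict monotonicity} hypothesis is used, and rather than being an obstacle it collapses the condition entirely. If $f/g$ is strictly monotone ascending it is injective, so $\alpha(s) = \alpha(t)$ forces $f(s)/g(s) = f(t)/g(t)$ and hence $s = t$; the identical conclusion holds when $h$ is instead the injective coordinate. Thus $\alpha$ is globally injective, the self-intersection set $\{s:\ \exists t\neq s,\ \alpha(t)=\alpha(s)\}$ is empty, and (iv) holds trivially with an exceptional set of measure zero. Hence $k$ is cursive. The only delicate point, which is interpretive rather than technical, is that the hypotheses implicitly require the state $x$ to be scalar for $f/g$ and the monotonicity conditions to be meaningful; once the parameterization $s = x$ is adopted, the remaining verification is entirely routine.
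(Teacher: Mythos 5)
Your proposal is correct and follows essentially the same route as the paper's own proof: both parameterize the steady-state relation by the scalar state via $\alpha(\sigma) = (f(\sigma)/g(\sigma),\, h(\sigma))$ and then check continuity, divergence of $\|\alpha\|$ from the limit hypothesis, and injectivity from the strict monotonicity of one coordinate. Your write-up is just slightly more explicit about each of the four conditions of the cursive definition.
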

\begin{proof}
In steady-state, we have $\dot{x} = 0$, thus we have $f(\mathrm x) = g(\mathrm x)\mathrm u$. Moreover, $\mathrm y = h(\mathrm x)$ in steady-state. Thus the steady-state input-output relation can be parameterized as $(f(\sigma)/g(\sigma),h(\sigma))$ for the parameter $\sigma \in \R$. Consider the curve $\alpha:\R \to \R^2$ defined by $\alpha(\sigma) = (f(\sigma)/g(\sigma),h(\sigma))$. Then the steady-state relation is the image of $\alpha$, which is continuous. The norm of $\alpha$ is equal to $\sqrt{(f(\sigma)/g(\sigma))^2+h(\sigma)^2}$, so the assumption on the limit shows that $\lim_{|t|\to\infty} ||\alpha(t)|| = \infty$. Lastly, by strict monotonicity, the curve $\alpha$ is one-to-one. Thus the steady-state input-output relation is cursive.
\end{proof}
\begin{remark}
The strict monotonicity assumption can easily be relaxed$-$it shows that the curve $\alpha(t) = (f(t)/g(t),h(t))$ is one-to-one, but in practice we may have a non-self-intersecting curve, which can behave very wildly in each coordinate. 
Moreover, non-self-intersecting is a stronger requirement then needed, we only 
need that the ``self-intersecting set" is of measure zero.
\end{remark}

As we showed that cursive relations appear for a wide class of systems, we conclude the network optimization framework for EIPS) agents by Theorem \ref{thm_network_flow_problems} and Theorem \ref{thm.MappingPQIs}.
\begin{thm}\label{thm_transformed_network_optimization_problems}
Consider the diffusively-coupled network $(\pmb{\Sigma}, \pmb{\Pi}, \mathcal{G})$, and suppose the agents ${\Sigma}_i$ are EI-IOP($\rho_i,\nu_i$) with cursive steady-state I/O relations $k_i$, and that the controllers are MEIP with integral functions $\Gamma_e$. Let $\mathcal{J}= \mathrm{diag}(T_1,T_2,\ldots,T_{|\mathcal{V}|})$ be a linear transformation, where $T_i$ is chosen as in Theorem \ref{thm.MappingPQIs} so that $k_i^{-1}$ is transformed into a strictly monotone relation by applying $T_i$. Then the transformed network $(\pmb{\tilde{\Sigma}}, \pmb{\Pi},\mathcal{G})$ converges, and the steady-state limits $(\pmb{\tilde{\rm u}}, \pmb{\tilde{\rm y}},\pmb{\upzeta},\pmb{\upmu})$ are minimizers of the following dual network optimization problems:
\begin{center}
\begin{tabular}{c||c}
{\bf TOPP} & {\bf TOFP}\\
\hline
\parbox{1cm}{\begin{subequations}
\begin{alignat}{2}
\nonumber & \!\min_{\pmb{\tilde{\rm y}}, \pmb{\upzeta}} &\qquad& \pmb{\Lambda}^\star(\pmb{\tilde{\rm y}}) + \pmb{\Gamma}(\pmb{\upzeta})\\
\nonumber &{s.t.} &      & \E^\top{\pmb{\tilde{\rm y}}} = \pmb{\upzeta}
\end{alignat}
\end{subequations}} &  \parbox{1cm}{\begin{subequations}
\begin{alignat}{2}
\nonumber &\!\min_{\pmb{\tilde{\rm u}}, \pmb{\upmu}} &\qquad& \pmb{\Lambda}(\pmb{\tilde{\rm u}}) + \pmb{\Gamma}^\star(\pmb{\upmu})\\
\nonumber &{s.t.} &      & \pmb{\tilde{\rm u}} = - \E\pmb{\upmu}
\end{alignat}
\end{subequations}}
\end{tabular}
\end{center}
where $\pmb{\Gamma}(\pmb{\upzeta}) = \sum_{e\in\EE} \Gamma_e(\upzeta_e)$, $\pmb{\Lambda}(\pmb{\mathrm u}) = \sum_{i\in\V} \Lambda_i(\mathrm u_i)$, and $\pmb{\Lambda}_i$ is the integral function associated with the maximally monotone relation ${\lambda}_i$, obtained by applying $T_i$ on $k_i$.
\end{thm}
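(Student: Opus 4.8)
The plan is to reduce the statement to Theorem~\ref{thm_network_flow_problems}: I would verify that the transformed network $(\pmb{\tilde{\Sigma}}, \pmb{\Pi}, \mathcal{G})$ meets its hypotheses, invoke that theorem to obtain both convergence and the dual optimization characterization, and then match the resulting problems to TOPP and TOFP by identifying the correct integral functions.

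First I would treat the agents one at a time. By hypothesis each $k_i$ is cursive and each $T_i$ is chosen as in Theorem~\ref{thm.MappingPQIs} so that the transformed inverse relation $\lambda_i^{-1}$ is strictly monotone. The final assertion of Theorem~\ref{thm.Maximality} then yields that each transformed agent $\tilde{\Sigma}_i$ is output-strictly MEIP; in particular $\lambda_i$ is maximally monotone, so by Theorem~\ref{thm_monotone_relation_convex_function} it admits a convex integral function $\Lambda_i$ with $\partial\Lambda_i = \lambda_i$, whose Legendre transform $\Lambda_i^\star$ satisfies $\partial\Lambda_i^\star = \lambda_i^{-1}$. This step carries essentially all of the genuine content, and it is entirely delegated to Theorem~\ref{thm.Maximality}.

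Next I would observe that $(\pmb{\tilde{\Sigma}}, \pmb{\Pi}, \mathcal{G})$ is again a diffusively-coupled network in the sense of Subsection~\ref{subsec.DiffCoup}: the $\tilde{\Sigma}_i$ are SISO systems with inputs $\tilde{u}_i$ and outputs $\tilde{y}_i$, interconnected to the controllers through the same incidence matrix via $\pmb{\upzeta} = \E^\top\pmb{\tilde{y}}$ and $\pmb{\tilde{u}} = -\E\pmb{\upmu}$. Since the transformed agents are output-strictly MEIP and the controllers are MEIP, Theorem~\ref{thm_network_flow_problems} applies and furnishes the asserted convergence of the signals to constant steady-states $(\pmb{\tilde{\rm u}}, \pmb{\tilde{\rm y}}, \pmb{\upzeta}, \pmb{\upmu})$, together with their characterization as dual minimizers of the OFP and OPP of the transformed network.

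Finally I would read off these two problems. Substituting the transformed agents' integral functions $\Lambda_i$ for the $K_i$ in Theorem~\ref{thm_network_flow_problems} turns its OFP into $\min \pmb{\Lambda}(\pmb{\tilde{\rm u}}) + \pmb{\Gamma}^\star(\pmb{\upmu})$ subject to $\pmb{\tilde{\rm u}} = -\E\pmb{\upmu}$, and its OPP into $\min \pmb{\Lambda}^\star(\pmb{\tilde{\rm y}}) + \pmb{\Gamma}(\pmb{\upzeta})$ subject to $\E^\top\pmb{\tilde{\rm y}} = \pmb{\upzeta}$, which are exactly TOFP and TOPP. The argument is thus a bookkeeping reduction; the only points genuinely requiring care are confirming that the strict-monotonicity hypothesis on $\lambda_i^{-1}$ (rather than on $\lambda_i$) selects the \emph{output}-strict branch of Theorem~\ref{thm.Maximality}, and that the loop transformations $T_i$, which may introduce direct feedthrough, leave the interconnection well-posed so that Theorem~\ref{thm_network_flow_problems} applies verbatim.
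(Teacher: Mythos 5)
Your proposal is correct and follows essentially the same route as the paper, which states the theorem as a direct consequence of Theorem~\ref{thm_network_flow_problems} combined with Theorems~\ref{thm.MappingPQIs} and~\ref{thm.Maximality}: the cursive hypothesis plus the monotonizing $T_i$ make each $\tilde{\Sigma}_i$ output-strictly MEIP, and the transformed diffusively-coupled network then satisfies the hypotheses of the original network-optimization theorem with $\Lambda_i$ in place of $K_i$. Your explicit flagging of why strict monotonicity of $\lambda_i^{-1}$ selects the output-strict branch is a useful clarification that the paper leaves implicit.
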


For the special cases in which the original EI-IOP($\rho,\nu$) agents have integral 
functions, we can use the discussion succeeding Proposition~\ref{prop_connection_between_relations}, connecting the original and the transformed integral functions, to prescribe (TOPP) and (TOFP) in terms 
of (OPP) and (OFP). It is worth noting that (TOPP) and (TOFP) can be viewed as regularized 
versions of (OPP) and (OFP), where quadratic terms are added both the the 
agents' integral functions and their duals. This is a generalization of 
\cite{Jain2018} which prescribed the quadratic correction of (OPP) when the 
agents are EI-OP$(\rho)$. The main difference in 
our approach from the one in \cite{Jain2018} is that there, the network 
optimization framework can always be defined, and convexifying it leads to the passivizing transformation. In the case presented here, the simultaneous input- 
and output-shortage of passivity can cause the network optimization framework 
to be undefined, forbidding us from trying to convexify it. Instead, we resort 
to monotonizing the steady-state relation, which in turn induces a passivizing 
transformation. This approach can be seen pictorially in Figure 
\ref{MonotonizationIdea}. In particular, we conclude by re-stating the main result of 
\cite{Jain2018} and providing a proof using the methods introduced here.

\begin{figure}[t!]
	\centering
	\includegraphics[width=6cm]{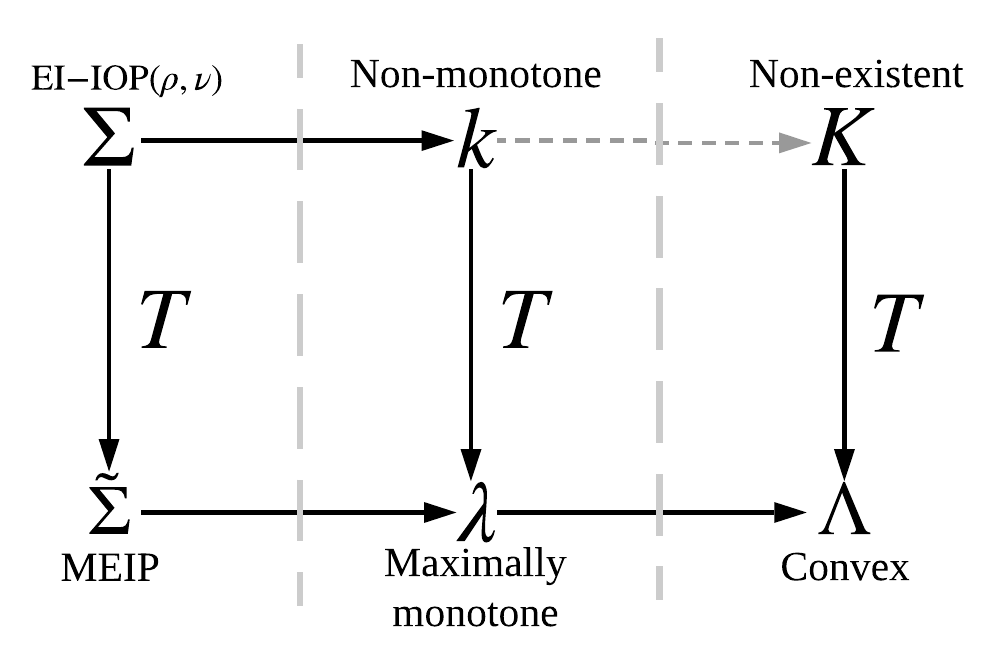}
	\vspace{-3pt}
	\caption{Monotonization, passivation and convexification by the 
	transformation 
		$T$. For general output-passive short systems, convexification is 
		equivalent to 
		passivation. For EI-IOP($\rho,\nu$) systems, integral functions do not  
		necessarily exist, so monotonization of the steady-state relation is 
		equivalent 
		to passivation.} 
	\label{MonotonizationIdea}\vspace{-15pt}
\end{figure}

\begin{corollary}
Let  $(\pmb{\Sigma}, \pmb{\Pi}, \mathcal{G})$ be a diffusively-coupled network, and suppose the agents have cursive steady-state I/O relations $k_i$, and that the controllers are MEIP with integral function $\Gamma_e$. Let $\mathcal{J}= \mathrm{diag}(T_1,T_2,\ldots,T_{|\mathcal{V}|})$ be as in Theorem \ref{thm_transformed_network_optimization_problems}.
\begin{enumerate}
\item[i)] If the agents $\Sigma_i$ are EI-OP($\rho_i$), and the relations $k_i^{-1}$ have integral functions $K_i^\star$, then we can take $T_i = \left[\begin{smallmatrix}1 & \beta_i \\ 0 & 1\end{smallmatrix}\right]$ for $\beta_i > - \rho_i$, and the cost function of (TOPP) is $\pmb{K}^\star(\pmb{\mathrm y}) + \pmb{\Gamma}(\pmb{\upzeta}) + \frac{1}{2}\pmb{\rm y}^\top\mathrm{diag}(\pmb{\beta})\pmb{\rm y}$, where $\pmb{K}^\star(\pmb{\rm y}) = \sum_{i\in\V} K_i^\star(\mathrm y_i)$.
\item[ii)] If the agents $\Sigma_i$ are EI-IP($\nu_i$), and the relations $k_i$ have integral functions $K_i$, then we can take $T_i = \left[\begin{smallmatrix}1 & 0 \\ \beta_i & 1\end{smallmatrix}\right]$ for any $\beta_i > - \nu_i$, and the cost function of (TOFP) is $\pmb{K}(\pmb{\mathrm u}) + \pmb{\Gamma}^\star(\pmb{\upmu}) + \frac{1}{2}\pmb{\rm u}^\top\mathrm{diag}(\pmb{\beta})\pmb{\rm u}$, where $\pmb{K}(\pmb{\rm y}) = \sum_{i\in\V} K_i(\mathrm u_i)$.
\end{enumerate}
\end{corollary}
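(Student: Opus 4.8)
The plan is to recognize both cases as specializations of Theorem~\ref{thm_transformed_network_optimization_problems}, in which the generic monotonizing map $T_i$ collapses to a single elementary matrix whose action on integral functions is already tabulated in Table~\ref{table}. An EI-OP($\rho_i$) agent is simply EI-IOP($\rho_i,0$), so by Proposition~\ref{prop2} its steady-state relation satisfies the PQI $-\rho_i\chi^2 + \xi\chi \ge 0$ (with $\xi = \mathrm u_1 - \mathrm u_2$, $\chi = \mathrm y_1 - \mathrm y_2$); dually, an EI-IP($\nu_i$) agent satisfies $-\nu_i\xi^2 + \xi\chi \ge 0$. For each case I would first check that the proposed $T_i$ strictly monotonizes the appropriate relation, so that the hypotheses of Theorem~\ref{thm_transformed_network_optimization_problems} are met, and then read off the transformed integral function from the corresponding row of Table~\ref{table} to identify the cost functions.

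For case i), the matrix $T_i = \left[\begin{smallmatrix}1 & \beta_i \\ 0 & 1\end{smallmatrix}\right]$ is exactly the output-feedback block $L_A$ with $\delta_A = \beta_i$, giving $\tilde{\mathrm u} = \mathrm u + \beta_i\mathrm y$ and $\tilde{\mathrm y} = \mathrm y$. Writing $\tilde\xi = \xi + \beta_i\chi$, $\tilde\chi = \chi$ and substituting $\xi = \tilde\xi - \beta_i\tilde\chi$ into the PQI yields
\begin{align*}
-\rho_i\chi^2 + \xi\chi = \tilde\xi\tilde\chi - (\rho_i + \beta_i)\tilde\chi^2 .
\end{align*}
Hence the transformed relation $\lambda_i$ satisfies the PQI of an EI-OP($\rho_i + \beta_i$) system, and the choice $\beta_i > -\rho_i$ makes $\rho_i + \beta_i > 0$, i.e. $\lambda_i^{-1}$ is \emph{strictly} monotone; together with the cursivity of $k_i$ and Theorem~\ref{thm.Maximality}, this is precisely the requirement of Theorem~\ref{thm_transformed_network_optimization_problems}. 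Invoking that theorem produces the (TOPP)/(TOFP) pair, while leaving the controller term $\pmb{\Gamma}(\pmb{\upzeta})$ untouched, and the $L_A$ row of Table~\ref{table} gives $\Lambda_i^\star(\mathrm y) = K_i^\star(\mathrm y) + \tfrac{1}{2}\beta_i\mathrm y^2$. Since the output is unchanged ($\tilde{\mathrm y} = \mathrm y$), summing over $i\in\V$ gives $\pmb{\Lambda}^\star(\pmb{\tilde{\rm y}}) = \pmb{K}^\star(\pmb{\rm y}) + \tfrac{1}{2}\pmb{\rm y}^\top\mathrm{diag}(\pmb{\beta})\pmb{\rm y}$, which is exactly the claimed (TOPP) cost.

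Case ii) is the input-side dual: $T_i = \left[\begin{smallmatrix}1 & 0 \\ \beta_i & 1\end{smallmatrix}\right]$ is the input-feedthrough block $L_C$ with $\delta_C = \beta_i$, and the analogous substitution turns the PQI $-\nu_i\xi^2 + \xi\chi \ge 0$ into $\tilde\xi\tilde\chi - (\nu_i + \beta_i)\tilde\xi^2 \ge 0$, so $\beta_i > -\nu_i$ makes $\lambda_i$ strictly monotone. Reading the $L_C$ row of Table~\ref{table} gives $\Lambda_i(\mathrm u) = K_i(\mathrm u) + \tfrac{1}{2}\beta_i\mathrm u^2$, and because $\tilde{\mathrm u} = \mathrm u$ the quadratic correction now attaches to the agents' primal integral functions, producing the stated (TOFP) cost once the dual form of Theorem~\ref{thm_transformed_network_optimization_problems} is applied.

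The main obstacle is bookkeeping rather than any deep estimate: I must verify that the \emph{single} elementary matrix I propose is a genuine monotonizing transformation in the sense of Theorem~\ref{thm.MappingPQIs} (it is, since output feedback and input feedthrough are exactly $L_A$ and $L_C$, and the PQI substitution above can be checked directly without invoking the general $T_1,T_2$ construction), and that I track the correct direction of strictness -- $\lambda_i^{-1}$ strict in case i) versus $\lambda_i$ strict in case ii) -- so that the quadratic term lands in (TOPP) for an output shortage and in (TOFP) for an input shortage. The one genuinely asymmetric point requiring care is that Theorem~\ref{thm_transformed_network_optimization_problems} is phrased for the output-strict situation, so case ii) must be justified through the dual (input-strict) version of that theorem, which interchanges the roles of (TOPP) and (TOFP).
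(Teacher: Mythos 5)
Your proposal is correct and reaches the same conclusion as the paper, but the verification of the key step runs along a slightly different, more elementary track. The paper derives $T_i=\left[\begin{smallmatrix}1 & \beta_i \\ 0 & 1\end{smallmatrix}\right]$ by explicitly executing the recipe of Theorem~\ref{thm.MappingPQIs}: it picks the boundary solutions $(\xi_1,\chi_1)=(1,0)$, $(\xi_2,\chi_2)=(\rho_i,1)$, $(\xi_3,\chi_3)=(1,0)$, $(\xi_4,\chi_4)=(\rho_i+\beta_i,1)$, checks the sign condition on $(\xi_1+\xi_2,\chi_1+\chi_2)$ and $(\xi_3+\xi_4,\chi_3+\chi_4)$ to select $T_1$, and then multiplies out the matrices. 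You instead take the elementary matrix as given and verify by direct substitution that it maps the PQI $\xi\chi-\rho_i\chi^2\ge 0$ to $\tilde\xi\tilde\chi-(\rho_i+\beta_i)\tilde\chi^2\ge 0$; this is a cleaner one-line computation and bypasses the $T_1$-versus-$T_2$ bookkeeping entirely, at the cost of not exhibiting $T_i$ as an output of Algorithm~\ref{alg.passivation}. Both proofs then read the quadratic correction $\Lambda_i^\star(\mathrm y)=K_i^\star(\mathrm y)+\tfrac{1}{2}\beta_i\mathrm y^2$ off Table~\ref{table} (equivalently, by integrating $\lambda_i^{-1}=k_i^{-1}+\beta_i\,\mathrm{id}$) and sum over agents. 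One point in your favor: the paper dismisses case ii) as ``completely analogous,'' whereas you correctly flag that Theorem~\ref{thm_transformed_network_optimization_problems} as stated asks for $k_i^{-1}$ to become strictly monotone (output-strictness), so case ii) really does require invoking the input-strict/dual variant of the convergence result, with the roles of (TOPP) and (TOFP) interchanged; making that explicit is a genuine improvement in rigor over the paper's one-line dismissal.
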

\begin{proof}
We only prove the first case, as the proof second case is completely analogous. Each agent is EI-OP($\rho_i$), so that the associated PQI is $0\le \xi\chi - \rho_i \chi^2$. We take any $\beta_i>-\rho_i$ and look for $T_i$ transforming this PQI into $0 \le \xi\chi - (\rho_i+\beta_i)\chi^2$, which implies output-strict MEIP. We build $T_i$ according to Theorem \ref{thm.MappingPQIs}, taking $(\xi_1,\chi_1) = (1,0), (\xi_2,\chi_2) = (\rho_i,1), (\xi_3,\chi_3) = (1,0)$ and $(\xi_4,\chi_4) = (\rho_i+\beta_i,1)$.  We note that $(\xi_1+\chi_1,\xi_2+\chi_2) = (1+\rho_i,1)$ satisfies 
$
\chi\xi - \rho_i\chi^2 = 1+\rho_i - \rho_i = 1 \ge 0,
$
meaning that $(\xi_1+\chi_1,\xi_2+\chi_2)$ satisfies the first PQI. Similarly, $(\xi_3+\chi_3,\xi_4+\chi_4)$ satisfies the second PQI. We thus take:
\begin{align*}
 T_i &= \left[\begin{matrix}\xi_3 & \xi_4 \\ \chi_3 & \chi_4\end{matrix}\right]\left[\begin{matrix}\xi_1 & \xi_2 \\ \chi_1 & \chi_2 \end{matrix}\right]^{-1}\\ &= \left[\begin{matrix}1 & \rho_i+\beta_i \\ 0 & 1\end{matrix}\right]\left[\begin{matrix}1 & \rho_i \\ 0 & 1 \end{matrix}\right]^{-1} = \left[\begin{matrix}1 & \beta_i \\ 0 & 1\end{matrix}\right],
\end{align*}
which proves the first part. As for the second part, Table \ref{table} implies 
that the steady-state relation $\lambda_i$ of the transformed system is given by 
$\lambda_i^{-1}(\mathrm y_i) = k_i^{-1}(\mathrm y_i) + \beta_i \mathrm y_i$. 
Integrating this equation with respect to $\mathrm y_i$ gives that 
$\Lambda_i^\star (\mathrm y_i) = K_i^\star(\mathrm y_i) + \frac{1}{2}\beta_i 
\mathrm y_i^2$. Using $\pmb{K}^\star(\pmb{\rm y}) = \sum_{i\in\V} 
K_i^\star(\mathrm y_i)$ and $\pmb{\Lambda}^\star(\pmb{\rm y}) = \sum_{i\in\V} 
\Lambda_i^\star(\mathrm y_i)$ gives that $\pmb{\Lambda}^\star(\pmb{\rm y}) = 
\pmb{K}^\star(\pmb{\rm y}) + \frac{1}{2}\pmb{\rm 
y}^\top\mathrm{diag}(\pmb{\beta})\pmb{\rm y}$, completing the proof.
\end{proof}

\section{Case Studies}\label{An Example}
This section presents two examples illustrating the theoretical results proposed in this paper. The first example deals with a collection of EIPS linear and time-invariant systems, and exemplifies the application of Algorithm \ref{alg.passivation} on a specific system. The second example describes a network of gradient systems with non-convex potential functions, exemplifying the results of Section \ref{Cursive}.

\subsection{Linear and Time Invariant Systems} \label{exam.LTI_RelDeg}
Consider a linear time-invariant system $\Sigma$ with a transfer function of the form $G(s) = \frac{\varsigma}{s^2+as+b}$, where $a,b,\varsigma\in \mathbb{R}$ and $\varsigma \neq 0$. We consider the case in which $a>0$, where $a$ is equal to minus the sum of the poles of the system. This case occurs when both poles are stable, or only one pole is stable. Examples of such systems include the oscillations of a ship at sea \cite{Fortuna1996}, robot elbow actuators \cite[p. 487]{Dorf2011}, and suspended mobile remote cameras, as used in sports events \cite[p. 881]{Dorf2011}. The prior of the three has two stable poles, where the latter two only have one stable pole. If both poles are stable, then the system has a finite $\mathcal{L}_2$-gain and can be stabilized using the small-gain theorem \cite{Khalil2002}. Otherwise, the system does not have a finite $\mathcal{L}_2$-gain.

According to Theorem \ref{eq.thm8}, in this case, $p(s) = \varsigma$ and $q(s) = s^2 + as + b$, so that $\deg p = 0 < \deg q = 2$, and the degree of $q(s)+\lambda p(s)$ is two. If we choose $\lambda = \frac{0.25a^2-b}{\varsigma}$, then $q(s)+\lambda p(s) = s^2 + as + 0.25a^2 = (s+0.5a)^2$, which has a double stable pole at $s=-0.5a$. Moreover, computing $\mu = \sup_{\omega\in\mathbb{R}} \left|\frac{\varsigma}{(j\omega+0.5a)^2}\right|^2 + \frac{1}{4}$ gives $\mu = \frac{4\varsigma}{a^2} + \frac{1}{4}$. Thus, the system $\Sigma$ is EI-IOP($\rho,\nu$) for $\rho =- \frac{\lambda(1+\lambda\mu)}{1+2\lambda\mu}$ and $\nu = -\frac{\mu}{1+2\lambda\mu}$.

As a specific example, consider the linear and time-invariant system $\Sigma$ with the transfer function $G(s) = \frac{0.75}{s^2+2s-2}$, which has a stable pole at $s=-1-\sqrt{3} \approx -2.73$ and an unstable pole at $s=\sqrt{3} - 1 \approx 0.73$. We note this system is not finite $\mathcal{L}_2$-gain, nor input-passive short, as it has an unstable pole, nor output-passive short, as it has a relative degree of $2$ \cite{Khalil2002}. For this system, we have $\lambda = 4$ and $\mu = 1$, which in turn give $\rho = -\frac{20}{9}$ and $\nu = -\frac{1}{9}$.

We now passivize $\Sigma$ by applying Algorithm \ref{alg.passivation}. We first note that $(\xi_1,\chi_1) = (5,-1)$ and $(\xi_2,\chi_2) = (-4,1)$ are two non-colinear solutions of $-\nu\xi^2+\xi\chi-\rho\chi^2 = \frac{1}{9}(4\chi+\xi)(5\chi+\xi) = 0$. Choosing $\rho^\prime = \nu^\prime = 0$, and the corresponding non-colinear solutions $(\xi_3,\chi_3) = (1,0)$ and $(\xi_4,\chi_4)=(0,1)$ to the equation $-\rho^\prime \xi^2 + \xi\chi -\nu^\prime \chi^2 = 0$, we compute:

\begin{align*}
\alpha_1 &= -\rho(\xi_1+\xi_2)^2+(\xi_1+\xi_2)(\chi_1+\chi_2)-\nu(\chi_1+\chi_2)^2\\& = \frac{1}{9} > 0 \\
\alpha_2 &= -\rho^\prime(\xi_3+\xi_4)^2+(\xi_3+\xi_4)(\chi_3+\chi_4)-\nu^\prime(\chi_3+\chi_4)^2 \\&= 1 > 0.
\end{align*}
Thus, the transformation $T_1$, as defined in \eqref{eq.T1T2}, passivizes the system $\Sigma$. A simple computation shows that $T_1 = \left[\begin{smallmatrix} 1 & 4 \\ 1 & 5\end{smallmatrix}\right]$, implying that the transformed input and output are given by $\tilde u = u+4y, \tilde y = u+5y$. If we let $U(s),Y(s),\tilde U(s),\tilde Y(s)$ be the Laplace transforms of $u,y,\tilde u,\tilde y$ respectively, then the connections $\tilde U(s) = U(s)+4Y(s) = (1+4G(s))U(s)$ and $\tilde Y(s) = U(s) + 5Y(s) = (1+5G(s))U(s)$ show that the transfer function of the transformed system $\tilde{\Sigma}$ is equal to
\begin{align*}
\tilde G(s) = \frac{\tilde Y(s)}{\tilde U(s)} = \frac{s^2+2s+3}{s^2+2s+2}.
\end{align*}
This transfer function, and therefore $\tilde{\Sigma}$, is passive, and is in fact input-strictly passive with index $0.9$ and output-strictly passive with parameter $\frac{2}{3}$, as can be verified by the MATLAB command ``{\rm getPassiveIndex}." The fact that $\tilde{\Sigma}$ is \emph{strictly} passive follows from our choice of $\lambda$, which requires all zeros of a certain polynomial to be in the open left-half plane, not allowing any to be on the imaginary axis.

\subsection{A Network of Gradient Systems with Non-Convex Potentials}
\begin{figure*}[t!]
	\centering
	\subfigure[$k_i$]{\includegraphics[width=3.6cm]{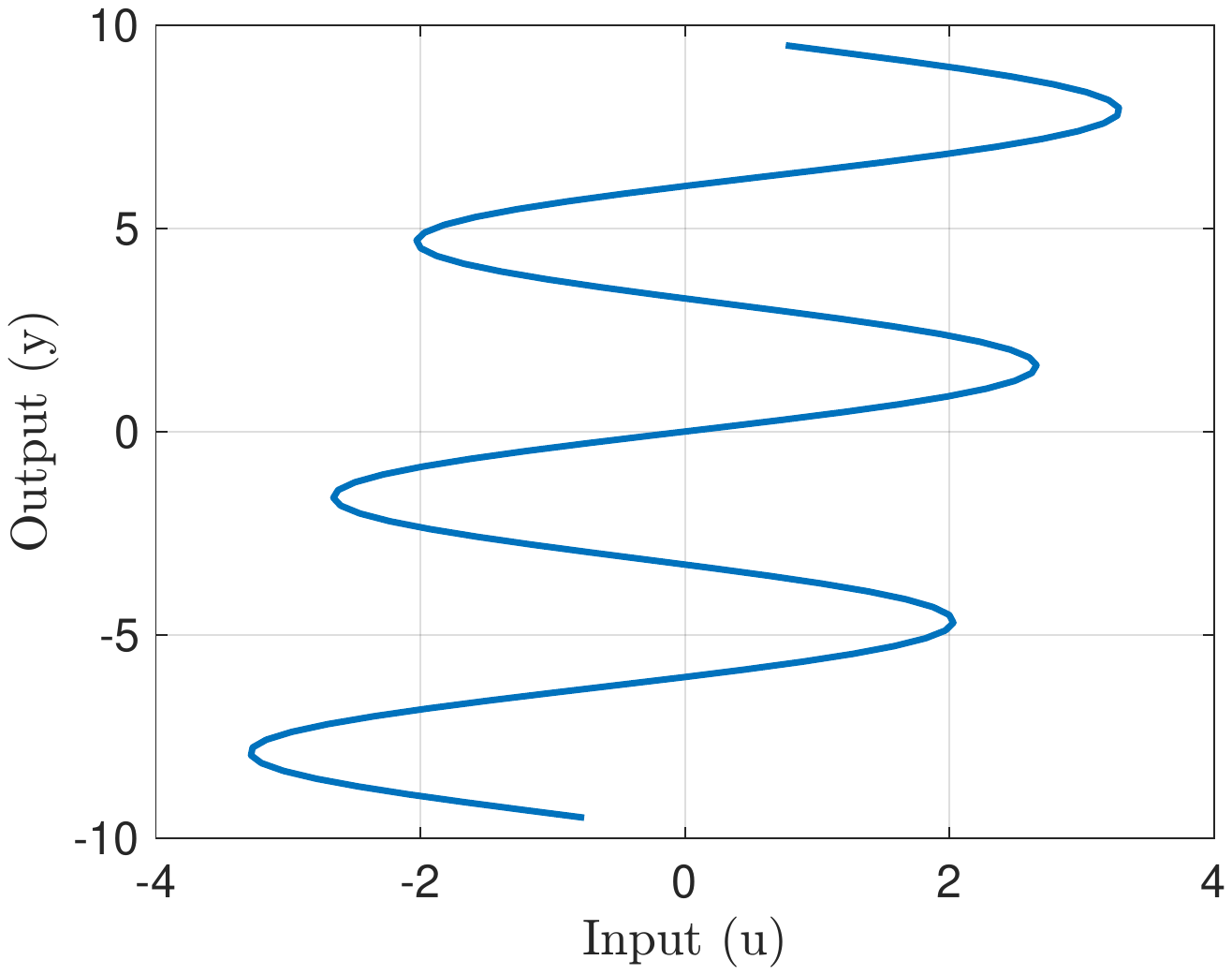}}\hspace{0.5cm}
	\subfigure[$k_i^{-1}$]{\includegraphics[width=3.6cm]{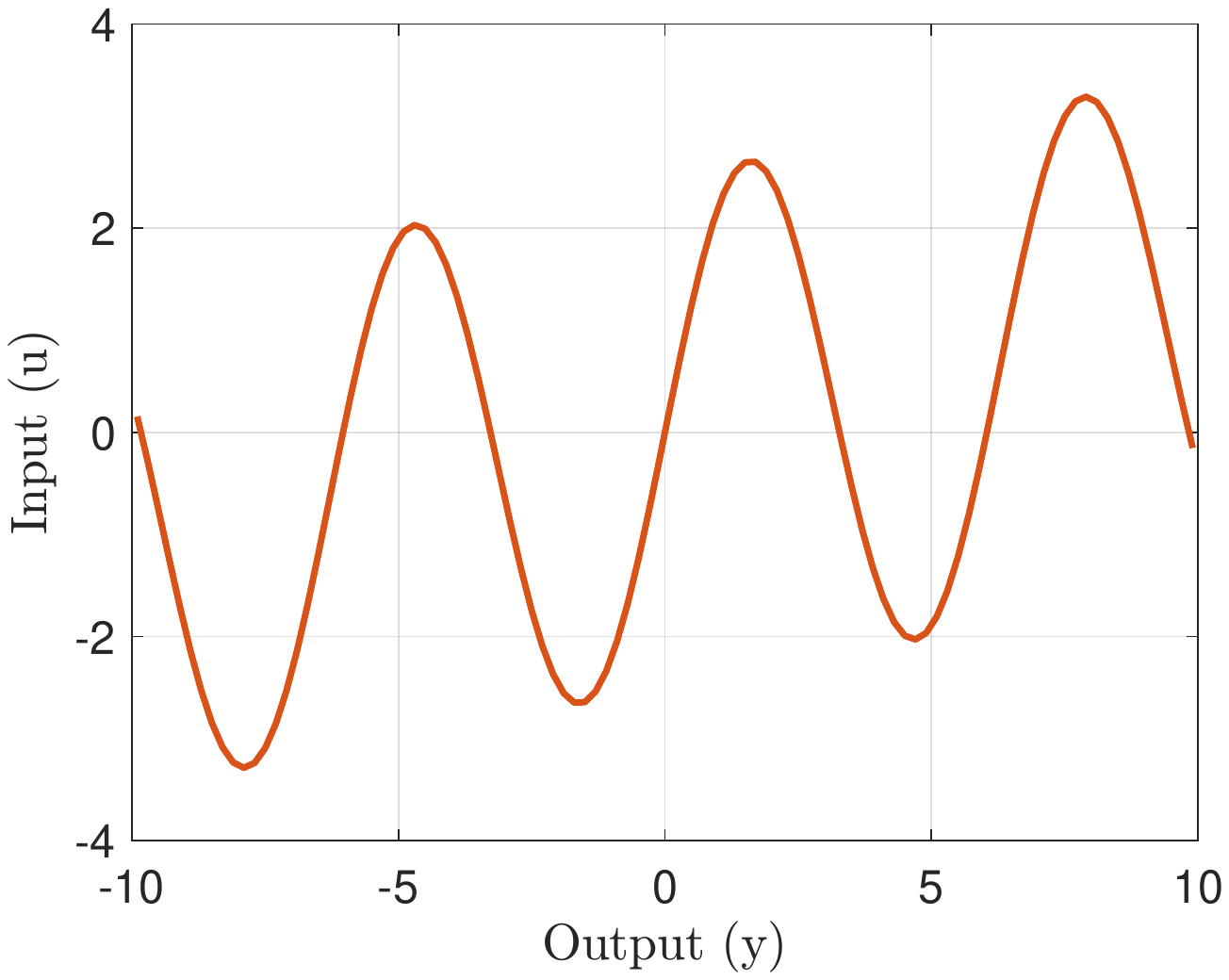}}\hspace{0.5cm}
	\subfigure[$K^\star_i$]{\includegraphics[width=3.6cm]{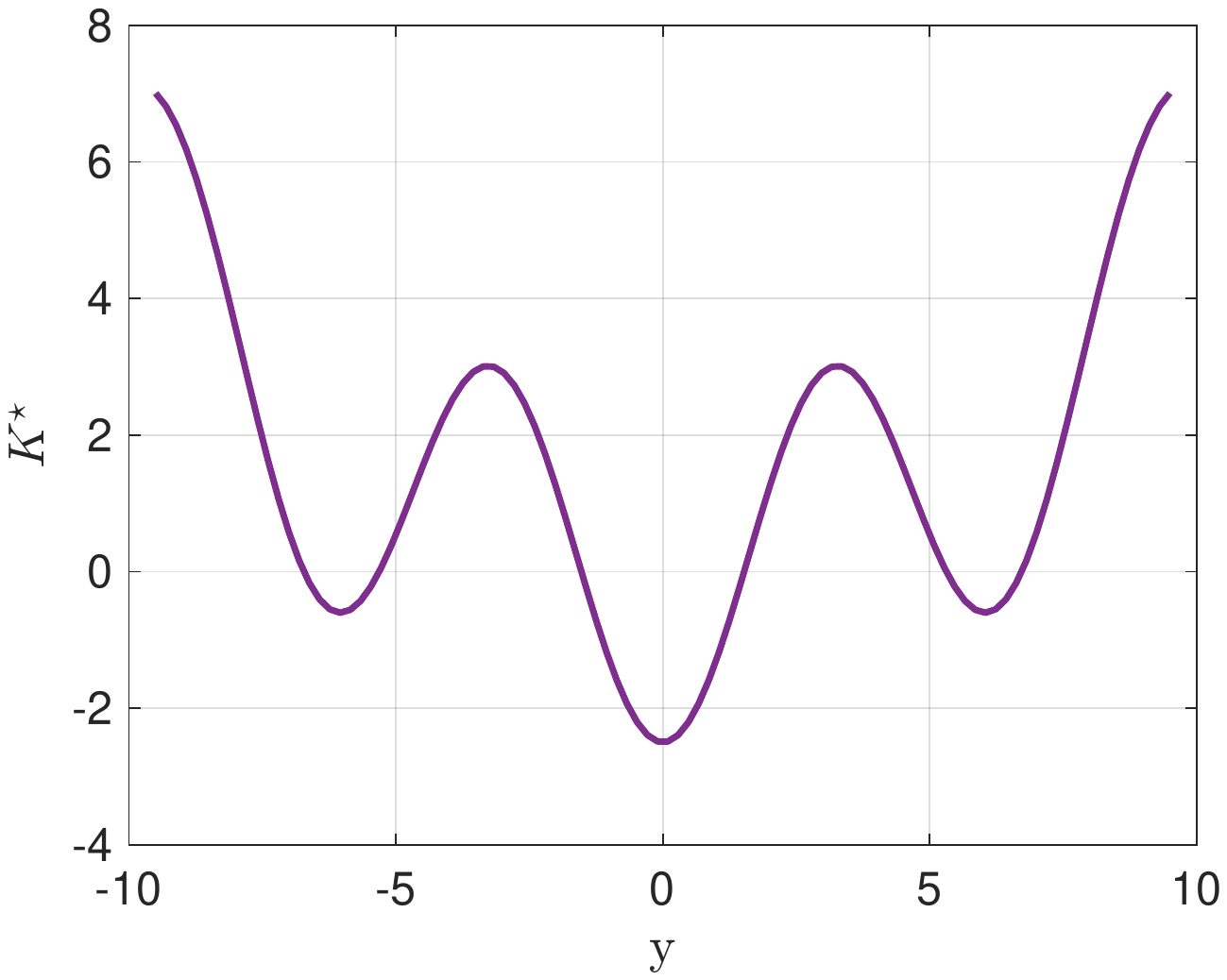}}
	\caption{Steady-state relations and the associated integral function of the EIPS system $\Sigma_i$. Both $k_i$ and $k_i^{-1}$ are cursive but non-monotone and the dual integral function $K^\star_i$ is non-convex.} 
	\label{fig_original_systems} \vspace{-5pt}
\end{figure*}
We consider a class of networked nonlinear gradient systems, described by 
\begin{equation}\label{eq_gradient_system}
\Sigma_i:~\dot{x}_i = -\frac{\partial U(x_i)}{\partial x_i} + u_i;~~y_i = x_i,~~~ i = 1, \ldots, |\mathcal{V}|,
\end{equation} 
where the inputs $u_i$ are given by
\begin{equation}\label{eq_gradient_system_control}
u_i = G\sum_{j \in \mathcal{N}_i} (x_j - x_i),~~~i = 1, \ldots, |\mathcal{V}|,
\end{equation}
where $G > 0$ is the controller gain, $\mathcal{N}_i$ denotes the neighbors of agent $i$, and $U$ is a scalar potential function with $U(\sigma) > 0, \sigma \neq 0, U(0) = 0$. Such classes of systems are important because of their applications in both biological and multi-agent systems, and are inspired from \cite{Scardovi2009}. As discussed in \cite{Scardovi2009}, \eqref{eq_gradient_system} loosely describes the dynamics of a group of bacteria performing chemotaxis (where $x_i$ is the position of the bacteria) in response to chemical stimulus, such as the concentration of chemicals in their environment, to find food (for example, glucose) by swimming towards the highest concentration of food molecules. Other possible applications include vehicle networks that must efficiently climb gradients to search for a source by measuring its signal strength in a spatially distributed environment. Note that this is a diffusively-coupled systems, with agents $\Sigma_i$ and static gains $G$ as edge controllers. It's easy to verify that the static controllers $\Pi_e$ are MEIP and that their I/O relation $\gamma_e$ is a straight line passing through origin in the $({\upzeta}_e, {\upmu}_e)$ plane. 

Let the potential $U$ be given by $U(x_i) = r_1(1 - \cos{x_i}) + \frac{1}{2}r_2x^2_i, r_1 > 0, r_2 > 0$. Thus $\frac{\partial U}{\partial x_i} = r_1\sin{x_i} + r_2x_i$ and the Hessian is $\frac{\partial^2 U}{\partial x^2_i} = r_1\cos{x_i} + r_2 \geq (r_2-r_1)$. Note that the steady-state I/O relation $k_i$ of $\Sigma_i$ is given by the planar curve ${\rm u}_i = r_1\sin\sigma + r_2\sigma$; ${\rm y}_i = \sigma$, parameterized by the variable $\sigma$.  

\begin{figure}[t!]
	\centering
	\subfigure[$\lambda_i$]{\includegraphics[width=2.8cm]{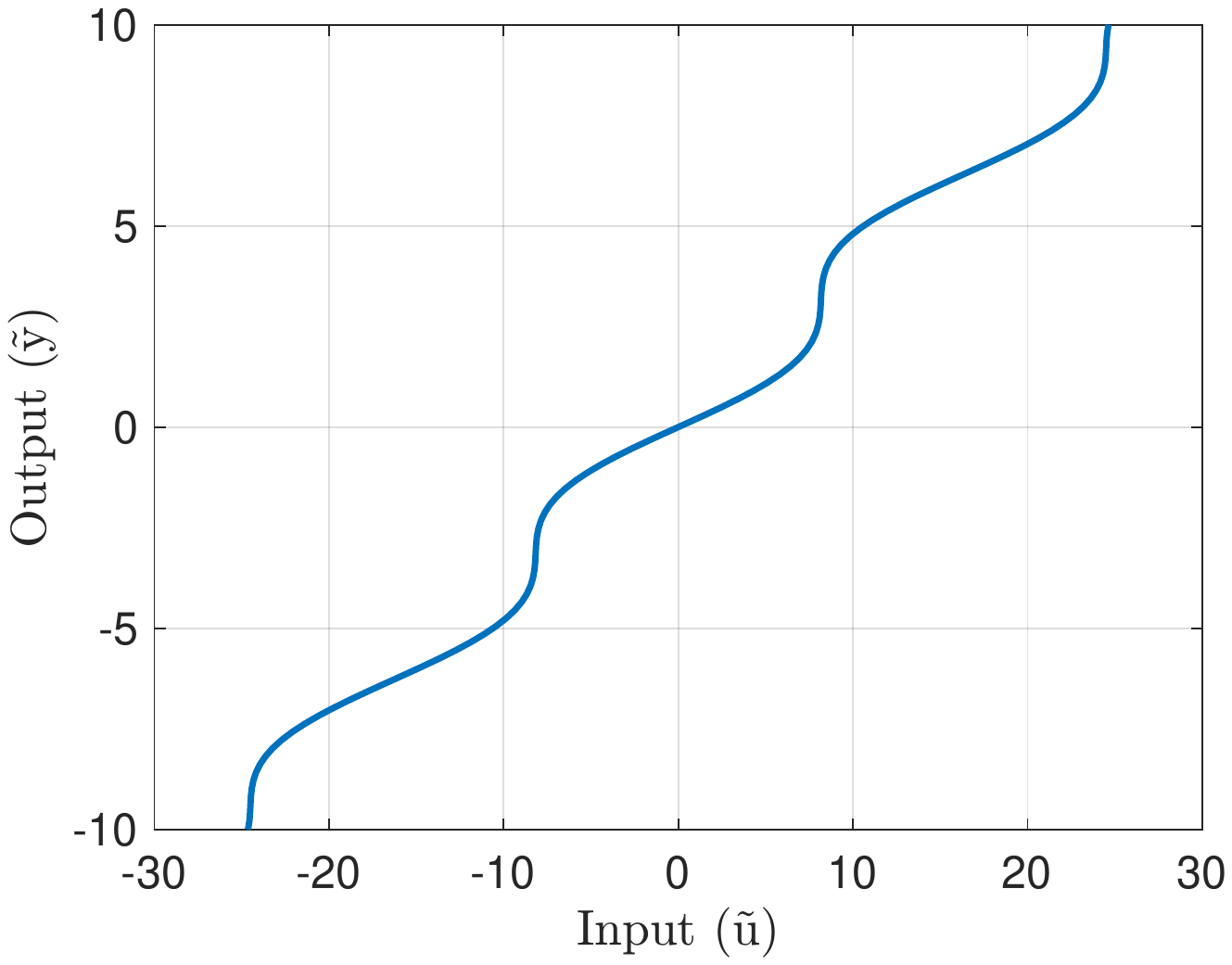}}\hspace{0.2cm}
	\subfigure[$\lambda_i^{-1}$]{\includegraphics[width=2.8cm]{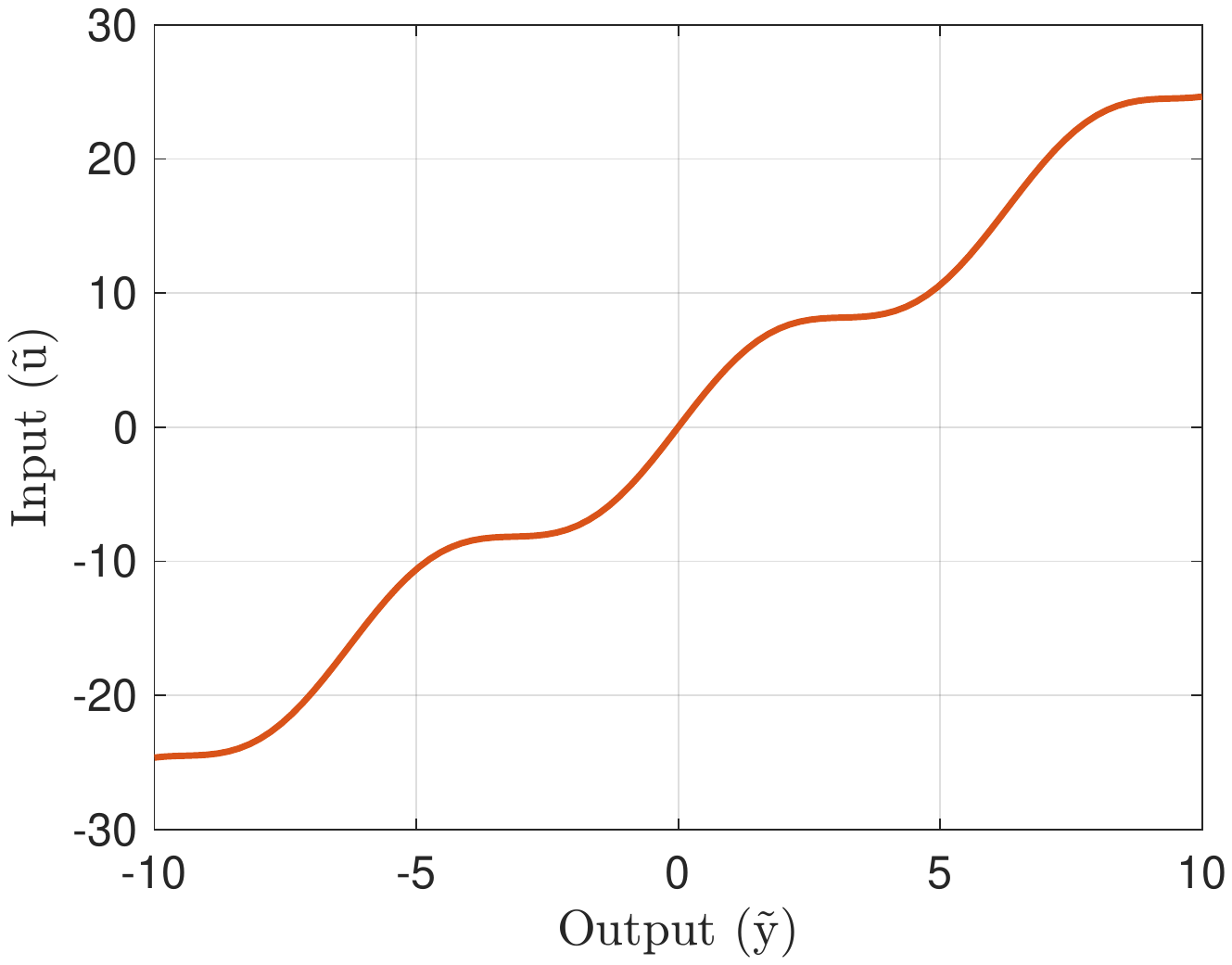}}	
	\caption{Steady-state I/O  relations of the transformed system $\tilde{\Sigma}_i$. Both the relations are maximally monotone.} 
	\label{fig_transformed_system_maps} \vspace{-5pt}
\end{figure}

\begin{figure}[t!]
	\centering
	\subfigure[$\Lambda_i$]{\includegraphics[width=2.8cm]{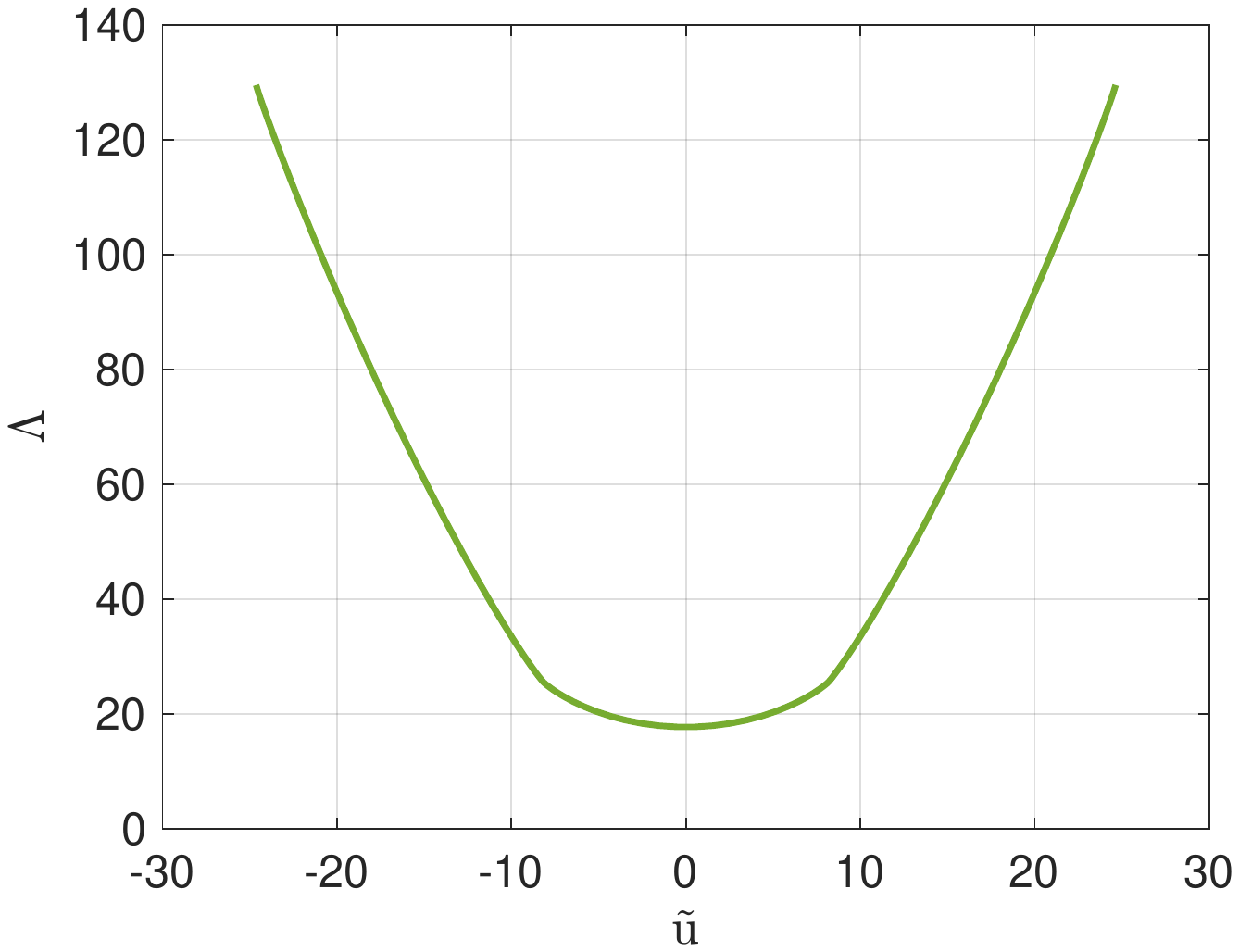}}\hspace{0.2cm}
	\subfigure[$\Lambda_i^\star$]{\includegraphics[width=2.8cm]{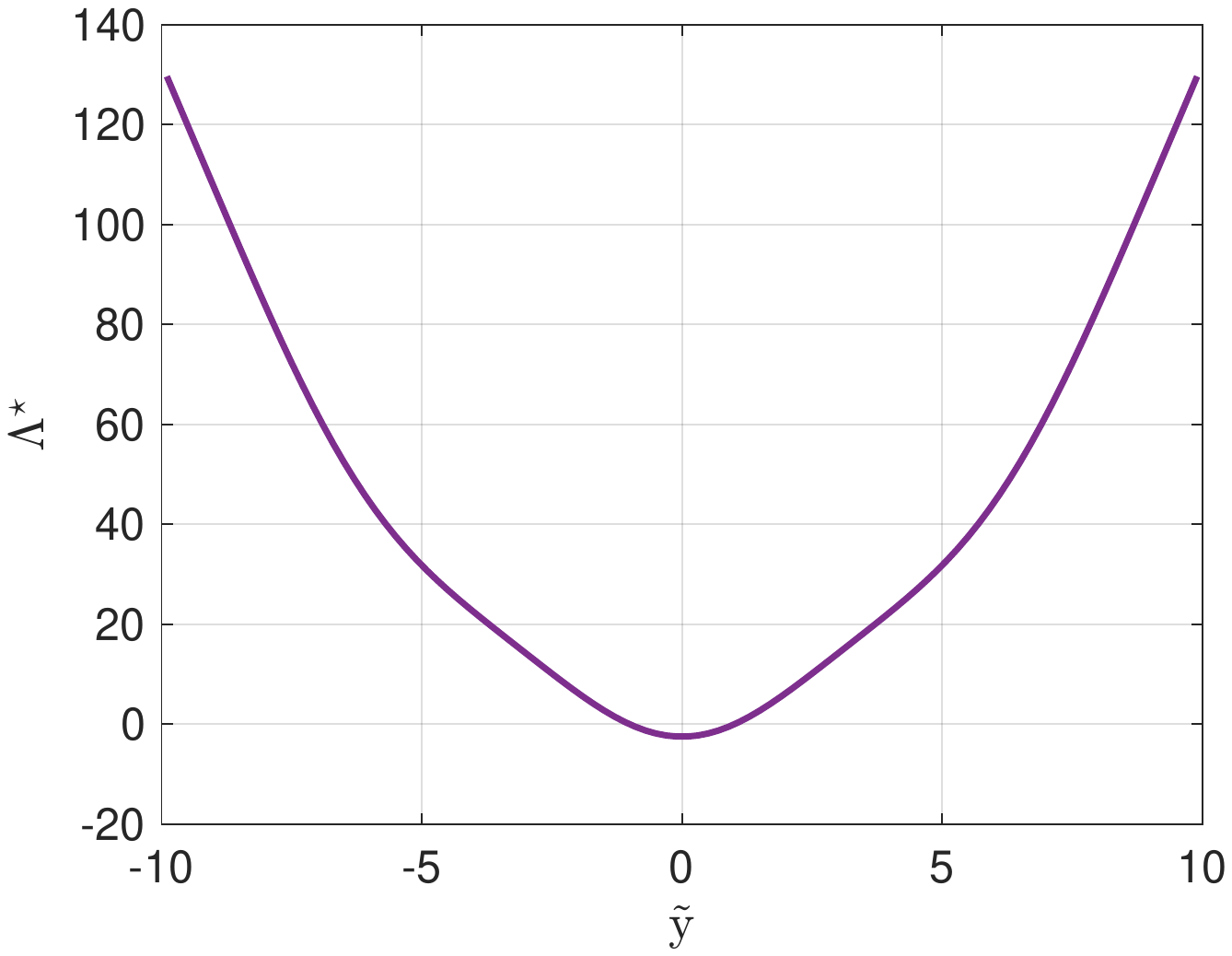}}
	\caption{Integral functions associated to steady-state I/O relations of the transformed system $\tilde{\Sigma}_i$. Both $\Lambda_i$ and $\Lambda_i^\star$ are strictly convex and attains their minimum at the steady-states of the network.} 
	\label{fig_transformed_system_integrals} \vspace{-8pt}
\end{figure}

\begin{figure}[t!]
	\centering
	\subfigure[$\Sigma$]{\includegraphics[width=2.8cm]{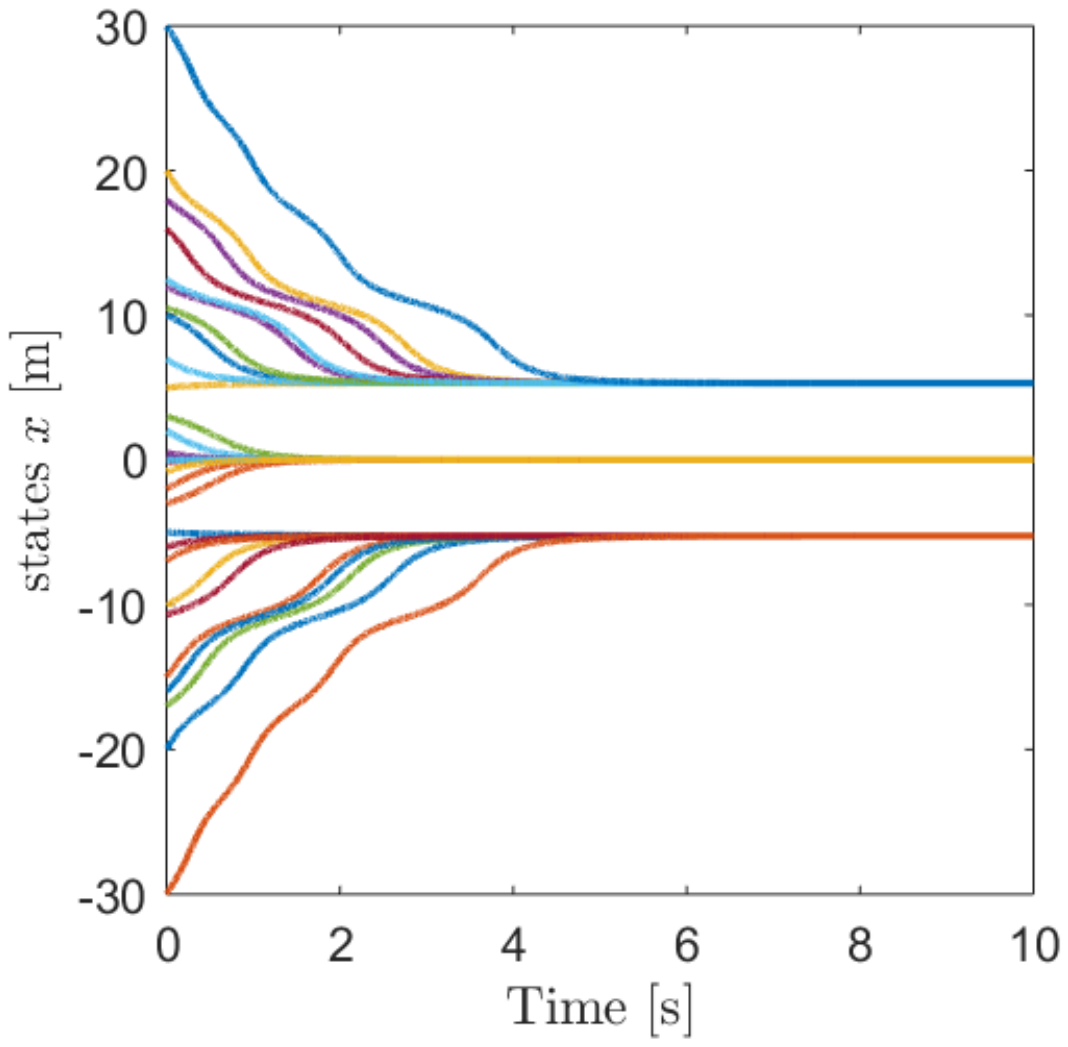}}\hspace{0.2cm}
	\subfigure[$\tilde{\Sigma}$]{\includegraphics[width=2.8cm]{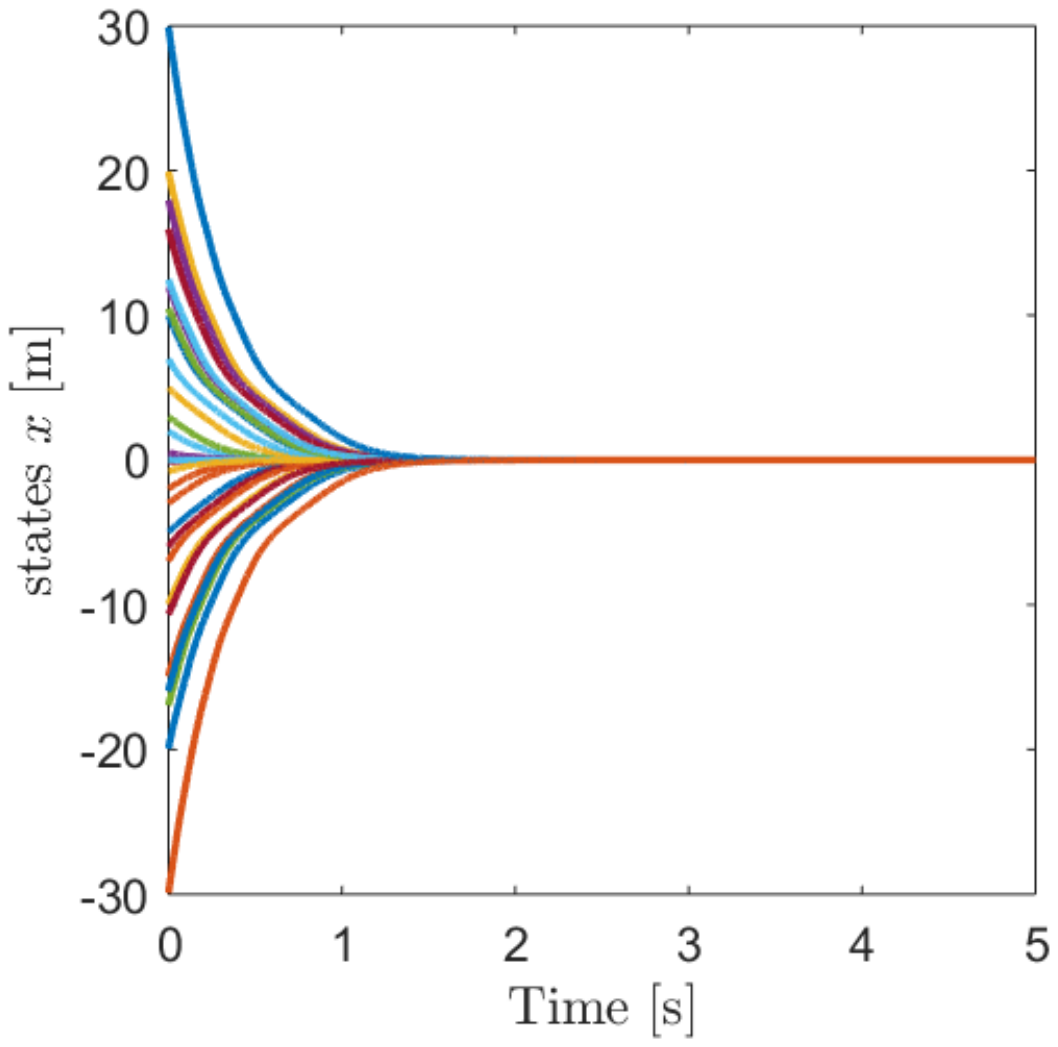}} 
	\caption{States of the systems $\Sigma$ and $\tilde{\Sigma}$ in the diffusively-coupled network interconnection in Figure~\ref{final_network_model}.} \vspace{-5pt}
	\label{fig_States of the systems}
\end{figure}

We choose $r_1 = 2.5, r_2 = 0.1$ and note that $\frac{\partial^2 U}{\partial x^2} \geq \rho {\rm Id}$, with $\rho = (r_2 - r_1) = -2.4$. Thus, the systems $\Sigma_i$ are EI-OP($\rho$) for $\rho = -2.4$, as mentioned in Proposition~\ref{prop_gradient_system}. The steady-state I/O relation $k_i$ is cursive but non-monotone as shown in Figure~\ref{fig_original_systems}(a) and the associated integral function $K_i$ does not exist. The inverse relation $k_i^{-1}$ is also non-monotone as shown in Figure~\ref{fig_original_systems}(b), and the associated integral function $K_i^\star({\rm y}_i) = \frac{1}{2}r_2{\rm y}^2_i - r_1\cos{\rm y}_i$ is non-convex as shown in Figure~\ref{fig_original_systems}(c).  

By exploiting above methodology, we passivize network by choosing an I/O transformation $\mathcal{J}$, such that the conditions in Theorem~\ref{thm_transformed_network_optimization_problems} are satisfied. One of such transformations is given by $\mathcal{J} = T \otimes I_{|\V|}$ with $T = 
\left[\begin{smallmatrix}
    1 & 2.5 \\
    0 & 1 
\end{smallmatrix}\right]
$, which can be found using Theorem \ref{thm.MappingPQIs} ($\otimes$ represents the Kronecker product). The transformed network $(\mathcal{G},\pmb{\tilde{\Sigma}}, \pmb{\Pi})$, having input $\pmb{\tilde{u}} = \pmb{u} + 2.5\pmb{y}$ and output $\pmb{\tilde{y}} = \pmb{y}$, has agents that are equilibrium-independent output-strictly passive with passivity index $\tilde{\rho} = 0.1 > 0$ (Theorem~\ref{thm.MappingPQIs}). The steady-state I/O relation $\lambda_i$ of each transformed agent $\tilde{\Sigma}_i$ is given by a planar curve $\tilde{\rm u}_i = r_1\sin\sigma + (r_1 + r_2)\sigma$; $\tilde{\rm y}_i = \sigma$, parameterized by the variable $\sigma$, which is maximally monotone as shown in Figure~\ref{fig_transformed_system_maps}(a), and the associated integral function $\Lambda_i$ is strictly convex as in Figure~\ref{fig_transformed_system_integrals}(a), which we plotted using MATLAB function ``cumtrapz". The inverse relation $\lambda^{-1}_i$ is also maximally monotone as shown in Figure~\ref{fig_transformed_system_maps}(b), and the associated integral function $\Lambda_i^\star = \frac{1}{2}(r_1 + r_2)\tilde{\rm y}^2_i - r_1\cos\tilde{\rm y}_i$  is strictly convex as shown in Figure~\ref{fig_transformed_system_integrals}(b).  

The outputs $\pmb{y}$ of the systems are plotted in Figure~\ref{fig_States of the systems} for the above both cases. For the original systems $\pmb{\Sigma}$, there exists a clustering phenomenon as shown in Figure~\ref{fig_States of the systems}(a), which does not corresponds to the minima of the integral function $K_i^\star$ in Figure~\ref{fig_original_systems}(c). However, for the transformed systems $\pmb{\tilde{\Sigma}}$, one can observe from Figure~\ref{fig_transformed_system_integrals} that the minimum of integral functions $\Lambda_i$ and $\Lambda_i^\star$ occurs at the steady-state of the transformed system $\pmb{\tilde{\Sigma}}$, that is, $\pmb{\tilde{\rm u}} = 0$, $\pmb{\tilde{\rm y}} = 0$, as expected.

\section{Conclusions}\label{conclusions}
In this paper, we considered networks of equilibrium-independent $(\rho,\nu)$-passive systems, and constructed a network optimization framework for their analysis. The first step was considering their steady-state I/O relations, which are not necessarily monotone, and monotonizing them using a linear transformation. This was done by a geometric understanding of the quadratic inequalities satisfied by said steady-state I/O relations. We later showed that this transformation actually passivizes the 
agents with respect to any equilibrium, culminating in Algorithm \ref{alg.passivation} for passivation of equilibrium-independent $(\rho,\nu)$-passive systems. We also studied the implementation of these transformations, connecting the original steady-state I/O relation to the transformed one. The last barrier from proving that the transformed agents are MEIP was maximality of the monotonized steady-state relation, which was tackled using the notion of cursive relations. We compared the suggested methods to similar works, and presented case studies demonstrating the constructed framework. Future research might extend this framework to MIMO agents, and will need to extend the 
geometric understanding of the quadratic inequalities, as well as the notion of 
cursive relations, to systems of higher dimensions.

\section*{Acknowledgments}
The authors would like to gratefully acknowledge Prof. Panos Antsaklis for his helpful discussions, comments, and suggestions on this work. 
\bibliographystyle{IEEEtran} 
\bibliography{References_New}

\appendix
\section{Proof of Theorem \ref{thm.GeometricApproach}}
The proof of Theorem~\ref{thm.GeometricApproach} is given below: 
\begin{proof}
Consider a PQI $a\xi^2 + b \xi\chi + c\chi^2 \ge 0$. If $a=c=0$ and $b\neq 0$, 
the solution set is either the union of the first and third quadrants, or the 
union of the second and fourth quadrants (depending whether $b>0$ or $b<0$). 
In particular, it is a symmetric double-cone in both these cases. Thus, we can 
assume that either $a\neq 0$ or $c \neq 0$. By switching the roles of $\xi$ and 
$\chi$, we may assume, without loss of generality, that $c\neq 0$. Note that if 
$(\xi,\chi)$ is a solution of the PQI, and $\lambda\in\R$, then 
$(\lambda\xi,\lambda\chi)$ is also a solution of the PQI. Thus, it's enough to 
show that the intersection of the solution set with the unit circle is a 
symmetric section. Writing a general point in $\mathbb{S}^1$ as $(\cos 
\theta,\sin \theta)$, the inequality becomes: 
\begin{align} \label{eq.IneqTrigCos}
a\cos^2\theta + b \cos\theta\sin\theta + c\sin^2\theta \ge 0.
\end{align}
We assume, for a moment, that $\cos \theta \neq 0$, and divide by $\cos^2\theta$, so that the inequality becomes:
\begin{align}\label{eq.IneqTrig}
a + b \tan\theta + c\tan^2 \theta \ge 0.
\end{align}
We denote $t_{\pm} = \frac{-b\pm\sqrt{b^2-4ac}}{2c}$ and consider two possible scenarios:
\begin{itemize}
\item $c<0$: In that case, \eqref{eq.IneqTrig} holds only when $\tan \theta$ is between $t_+$ and $t_-$. As $\tan$ is a monotone ascending function in $(-\pi/2,\pi/2)$ and $(\pi/2,1.5\pi)$, and tends to infinite values at the limits of said intervals, we conclude that \eqref{eq.IneqTrig} holds only when $\theta$ is inside $I_1\cup I_2$, where $I_1,I_2$ are the closed intervals which are the image of $[t_{-},t_{+}]$ under $\arctan(x)$ and $\arctan(x)+\pi$, so that $I_2 = I_1 + \pi$. Note that as $c<0$, any point at which $\cos \theta = 0$ does not satisfy \eqref{eq.IneqTrigCos}. Thus the intersection of the solution set of the PQI $a\xi^2 + b \xi\chi + c\chi^2 \ge 0$ with $\mathbb{S}^1$ is a symmetric section.
\item $c>0$: In that case, \eqref{eq.IneqTrig} holds only when $\tan \theta$ is outside the interval $(t_-,t_+)$. Similarly to the previous case, $\tan \theta \in (t_-,t_+)$ can be written as $B\cup (B+\pi)$ where $B$ is an \emph{open} section of angle $<\pi$. Thus its complement, which is the intersection of the solution set of the PQI $a\xi^2 + b \xi\chi + c\chi^2 \ge 0$ with $\mathbb{S}^1$, is a symmetric section.
\end{itemize}

Conversely, consider a symmetric double-cone $A$, and let $S=B\cup (B+\pi)$ be 
the associated symmetric section. Let $C\cup (C+\pi)$ be the complement of $S$ 
inside $\mathbb{S}^1$, where $C$ is an open section. We first claim that $\cos 
\theta \neq 0$ either on $B$ or on $C$. Indeed, $B\cup C$ is a half-open 
half-circle, and the only points at which $\cos \theta = 0$ are $\theta = \pm 
\pi/2$. Thus, $B \cup C$ can only contain one of them. Moreover, $B$ and $C$ 
are disjoint, so at least one does not include points at which $\cos \theta 
\neq 0$. Now, we consider two possible cases.

\begin{itemize}
\item $B$ (hence $S$) contains no points at which $\cos \theta = 0$. Then $\tan$ maps $B$ continuously into some interval $I=[t_-,t_+]$. Thus $\theta \in S$ if and only if $-(\tan \theta - t_-)(\tan \theta - t_+) \ge 0$. Inverting the process from the first part of the proof, the last inequality (which defines $S$) can be written as the intersection of the solution set of some PQI with $\mathbb{S}^1$. Thus $A$ is the solution set of the said PQI. Non triviality follows from the fact that $t_\pm$ are two distinct solutions to the associated equation.
\item $C$ contains no points at which $\cos \theta = 0$. Then $\tan$ maps $C$ continuously into some interval $I=(t_-,t_+)$. Thus, $\theta \in C\cup (C+\pi)$ if and only if $(\tan \theta - t_-)(\tan \theta - t_+) < 0$. Equivalently, $\theta \in S$ if and only if $(\tan \theta - t_-)(\tan \theta -t_+) \ge 0$. We can now repeat the argument for the first case to conclude that $A$ is the solution set of a non-trivial PQI.
\end{itemize}

As for uniqueness, suppose the non-trivial PQIs $a_1 \xi^2 + b_1 \xi \chi + c_1 \chi^2 \ge 0$ and $a_2 \xi^2 + b_2 \xi \chi + c_2 \chi^2 \ge 0$ define the same solution set. Then the equations $a_1 \xi^2 + b_1 \xi \chi + c_1 \chi^2 = 0$ and $a_2 \xi^2 + b_2 \xi \chi + c_2 \chi^2 = 0$ have the same solutions (as the boundaries of the solution sets). Assume first that either $a_1\neq 0$ or that $a_2\neq 0$.  In particular, for $\xi = \tau\chi$, both equations $\chi^2(a_1\tau^2 + b_1 \tau + c_1 ) = 0$ and $\chi^2(a_2\tau^2 +b_2\tau + c_2)=0$ have the same solutions. Dividing by $\chi^2$ implies both equations have two solutions, $t_-\neq t_+$, as $b_1^2-4a_1c_1 >0$ and $b_2^2-4a_2c_2 > 0$. Thus, we can write:
\begin{align*}
a_1\tau^2+b_1\tau+c_1 = a_1(\tau - t_-)(\tau - t_+),\\\ a_2\tau^2 +b_2\tau + c_2 = a_2(\tau - t_-)(\tau - t_+).
\end{align*}
implying the original PQIs are the same up to scalar, which must be positive 
due to the direction of the inequalities. 

Otherwise, $a_1 = a_2 = 0$, so we must have $b_1,b_2\neq 0$, as otherwise $b_1^2-4a_1c_1 = 0$ or $b_2^2 - 4a_2c_2 = 0$. Plugging $\chi = 1$, we get that the equations $b_1\xi + c_1 = 0$ and $b_2\xi + c_2 = 0$ have the same solutions, implying that $(b_1,c_1)$ and $(b_2,c_2)$ are equal up to a multiplicative scalar. As $a_1=a_2=0$, we conclude the same about 
the original PQIs. Moreover, the scalar has to be positive due to the direction of the original PQIs. This completes the proof. 
\end{proof}

\end{document}